\def\BibTeX{{\rm B\kern-.05em{\sc i\kern-.025em b}\kern-.08em
    T\kern-.1667em\lower.7ex\hbox{E}\kern-.125emX}}
\newtheoremstyle{lemmastyle}
{\topsep} 
{\topsep} 
{} 
{} 
{\bfseries} 
{.} 
{0.5em} 
{} 
\theoremstyle{plain}
\newtheorem{definitions}{Definition}
\newtheorem{lemma}{Lemma}
\newtheorem{assumption}{Assumption}
\newtheorem{theorem}{Theorem}
\newif{\ifanonymous}
\title{Relational Analysis of Sensor Attacks on Cyber-Physical Systems}
\author{
  \IEEEauthorblockN{Jian Xiang\IEEEauthorrefmark{1}, Nathan Fulton\IEEEauthorrefmark{2}, Stephen Chong\IEEEauthorrefmark{1}}
  \IEEEauthorblockA{\IEEEauthorrefmark{1}SEAS, Harvard University \emph{\{jxiang, chong\}@seas.harvard.edu}}
\IEEEauthorblockA{\IEEEauthorrefmark{2}MIT-IBM Watson AI Lab. \emph{nathan@ibm.com}}
}
\begin{document}
\bstctlcite{IEEEexample:BSTcontrol}

\maketitle

\begin{abstract}

Cyber-physical systems, such as self-driving cars or autonomous aircraft, must defend against attacks that target sensor hardware.
Analyzing system design can help engineers understand how a compromised sensor could impact the system's behavior;
however, designing security analyses for cyber-physical systems is difficult due to their combination of discrete dynamics, continuous dynamics, and nondeterminism.

This paper contributes a framework for modeling and analyzing sensor attacks on cyber-physical systems, using the formalism of hybrid programs.
We formalize and analyze two relational properties of a system's robustness. These relational properties respectively express (1) whether a system's safety property can be influenced by sensor attacks, and (2) whether a system's high-integrity state can be affected by sensor attacks. We characterize these relational properties by defining an equivalence relation between a system under attack and the original unattacked system. That is, the system satisfies the robustness properties if executions of the attacked system are appropriately related to executions of the unattacked system.

We present two techniques for reasoning about the equivalence relation and thus proving the relational properties for a system.
One proof technique decomposes large proof obligations to smaller proof obligations. The other proof technique adapts the \emph{self-composition technique} from the literature on secure information-flow, allowing us to reduce reasoning about the equivalence of two systems to reasoning about properties of a single system. This technique allows us to reuse existing tools for reasoning about properties of hybrid programs, but is challenging due to the combination of discrete dynamics, continuous dynamics, and nondeterminism.


To validate the usefulness of our relational properties and proof techniques, we present three case studies motivated by real design flaws in existing cyber-physical systems. 


\end{abstract}


{\let\thefootnote\relax\footnote{{This is an extended version of the paper with the same title that appeared in the 2021 Computer Security Foundations Symposium. This version includes a proof of Theorem~\ref{theorem:soundness-composition} in Appendix~\ref{appendix:sound-proof}.}}

\setlength{\abovedisplayskip}{3pt}
\setlength{\belowdisplayskip}{2pt}
\setlength{\belowcaptionskip}{-10pt}

\section{Introduction}

Cyber-physical systems, which consist of both physical and cyber components, are often safety and security critical\cite{alur2011formal,bresolin2015formal,jeannin2015formally, mitsch2017formal}.
Designing secure cyber-physical systems is difficult because adversaries benefit from a broad attack surface that includes both software controllers and physical components.
Sensor attacks often allow an adversary to directly control the system under attack. 
For example, 
Cao et al. demonstrate how to manipulate an autonomous vehicle's distance measurements by shining a laser into its Light Detection and Ranging (LiDAR) sensors \cite{cao2019adversarial},
Humphreys et al. demonstrate how spoofing Global Positioning System (GPS) signals may allow an attacker to force a yacht autopilot to deviate from a designated course \cite{Yachtattack},
and Davidson et al. demonstrate a GPS-based hijacking attack on unmanned aircraft \cite{davidson2016controlling}.
The breadth of the cyber-physical attack surface affords adversaries a range of attack modalities even when hijacking control is not possible.
For example, Son et al. demonstrate how to crash a quadcopter using a magnetic attack on a quadcopter's gyroscopic sensors \cite{son2015rocking}.

Testing-based approaches are insufficient to guarantee the safety of a cyber-physical system, even when the system is not under attack. In a 2016 study on autonomous vehicles, Kalra et al. conclude that a self-driving fleet would need to drive hundreds of millions or sometimes hundreds of billions of miles to provide a purely testing-based reliability case \cite{kalra2016driving}. Driving these miles in a representative set of road conditions would take tens or hundreds of years depending on the size of the test fleet.
The intractability of testing-based approaches is also confirmed by incompleteness results \cite{platzer2012complete}.
Establishing security 
is even more difficult than establishing safety.

The importance and difficulty of ensuring the safety of cyber-physical systems motivate a growing body of work on formal verification for embedded and hybrid systems \cite{alur2015principles, larsen2009verification, lee2016introduction, tabuada2009verification, tiwari2011logic}.
However, relatively little work considers formal verification of such systems in the presence of \attackcategory. Some recent work emphasize timing aspects of sensor-related attacks \cite{lanotte2017formal, lanotte2020formal}; however, the work model the system's dynamics as a deterministic discrete time dynamical system, whereas most cyber-physical systems are best modeled with a nondeterministic combination of discrete and continuous dynamics.

It is important for cyber-physical system designers to understand
whether a compromised sensor can result in undesired behavior, such as
violating a safety property or corrupting a critical state.
For example, the designer of an adaptive cruise control system might want to verify that the car's minimum following distance is not affected by a compromised GPS sensor.

Understanding the impact of compromised sensors requires us to
reason about \emph{relational properties} \cite{clarkson2010hyperproperties}, that is, the
relationship between executions of the original uncompromised system
and executions of the system where some of the sensors have been
compromised. Relational properties are often harder to reason about than functional properties, as they require reasoning simultaneously about multiple executions. And there is less tool support for formal verification of relational properties, compared to functional properties. 


\smallskip

In this work, we define and explore two relational properties that
characterize the robustness of cyber-physical systems under sensor
attacks. Our threat model assumes a powerful attacker that may compromise a subset of sensors and arbitrarily manipulate those sensors' values. We do not model or discover the mechanisms by which an attacker manipulates sensor values; we simply assume they are able to do so.

Our first relational property is \emph{robustness of safety}, which
intuitively holds when the compromised sensors are unable to affect
whether a given safety property holds in the attacked system. Note
that this is not the same as requiring that the attacked system
satisfies the safety property. Indeed it may be beyond current
verification techniques to determine whether the safety property holds
in the uncompromised system, let alone the compromised
system. Nonetheless, even in such cases it can be possible to verify
that compromised sensors do not affect the safety property. 
Robustness of safety implies that if the uncompromised system
satisfies the safety property then the compromised system will
too. Reasoning about robustness of safety separates reasoning about
the implications of sensor attacks from reasoning directly about
functional properties.

Our second relational property is \emph{\sndrobustprop}, which
requires that high-integrity parts of a system cannot be influenced by
the attacker. For example, returning to our autonomous vehicle
example, parts of the system pertaining to steering and braking should
be regarded as high integrity and independent from low-integrity
sensors such as the interior thermometer. Robustness of high-integrity
state is similar to noninterference \cite{Sabelfeld2003,
  goguen1982security}, which requires that low-integrity inputs can
not influence 
high-integrity outputs.


We work within the formalism of \emph{hybrid programs}
\cite{platzer2008differential,Platzer18book,platzer2012complete} and
their implementation in the theorem prover KeYmaera~X
\cite{fulton2015keymaera}.  Hybrid programs model cyber-physical
systems as hybrid-time dynamical systems, with the discrete time
component of the system modeling software components and the
continuous time component of the system modeling physical phenomenon.

%

%
To define our two relational properties, we introduce the \emph{\Hequivalence} relation over hybrid programs, where $\Hsymbol$ is a set of variables. Intuitively, two hybrid programs are \Hequivalent if they agree on the values of all variables in $\Hsymbol$ at appropriate times. In particular, we define our two relational properties as \Hequivalence between the original system and the compromised system (for suitable sets of variables $\Hsymbol$).

We introduce two sound and tractable techniques to reason about \Hequivalence (and thus to prove that robustness of safety and robustness of high-integrity state hold). 
The first technique decomposes reasoning about $\Hequivalence$ of two large programs to reasoning about $\Hequivalence$ of their subprograms.
The second technique reduces reasoning about $\Hequivalence$ of two programs $A$ and $B$ to reasoning about safety  properties of a single program that represents both $A$ and $B$. This reduction allows us to prove relational properties using KeYmaera~X, an existing theorem prover for hybrid programs that does not directly support relational reasoning.
This technique is inspired by 
the self-composition technique \cite{barthe2004secure} used to prove noninterference in imperative and deterministic programs. A key challenge we faced in adapting the self-composition technique for hybrid programs is 
reasoning about nondeterminism and physical dynamics, and in particular, ensuring that certain nondeterministic choices are resolved the same in both executions.

The main contributions of this paper are the following:
\begin{enumerate}[label={\arabic*.}, noitemsep]
\item We introduce a threat model of \attackcategory in the context of hybrid programs that model cyber-physical systems. We show that these sensor attacks can be formalized in terms of syntactic manipulations of hybrid programs.
  We introduce robustness of safety and \sndrobustprop, two relational properties that express security guarantees in the presence of sensor attacks. (Section~\ref{sec:modeling-attacks})

\item We introduce \Hequivalence, an equivalence relation over hybrid programs, and express our relational properties in terms of \Hequivalence. (Section~\ref{sec:equivalence.relation})
  
\item We present two techniques for reasoning about \Hequivalence and prove their soundness. (Section~\ref{sec:prove-eq})
  
\item We validate the approach developed throughout the paper through three case studies of non-trivial cyber-physical systems: an anti-lock braking system, the Maneuvering Characteristics Augmentation System (MCAS) of the Boeing 737-MAX, and an autonomous vehicle with a shared communication bus. (Section~\ref{sec:casestudies})
\end{enumerate}

We introduce some background about hybrid programs in Section~\ref{sec:background}.  Section~\ref{sec:relatedwork} discusses related work. 

This is an extended version of the paper with the same title that appeared in the 2021 Computer Security Foundations Symposium. The main addition of this paper is the proof of Theorem~\ref{theorem:soundness-composition} in Appendix~\ref{appendix:sound-proof}.



\section{Background}  \label{sec:background}

\renewcommand{\arraystretch}{1.1} 

\begin{figure}[htb]
  \small\centering
  \begin{tabular}{p{0.15\linewidth}p{0.7\linewidth}}
\multicolumn{2}{l}{\textbf{Real-valued terms} $\theta$}  \\
    $ x $ & Real-valued program variable  \\
    $ c $ & Constant  \\
    $ \theta_1 \oplus \theta_2 $ & Computation on terms $\oplus \in \{ +, \times \}$  \\
\multicolumn{2}{l}{\textbf{Hybrid Program} $\alpha$, $\beta$, $\programwoattack$}  \\
    $ x:=\theta $ &  Deterministic assignment of real arithmetic term $\theta$ to variable $x$ \\
    $ x:=* $     &  Nondeterministic assignment to variable $x$ \\
    $ x'=\theta\& \evolconstraint $ &  Continuous evolution along the differential equation system $x'=\theta$ for an arbitrary real duration within the region described by formula $\evolconstraint$ \\
    $ ?\phi $ &  Test if formula $\phi$ is true at the current state  \\
    $\alpha ; \beta$ & Sequential composition of $\alpha$ and           $\beta$                                    \\
    $ \alpha \cup \beta $ &  Nondeterministic choice between $\alpha$ and 
                        $\beta$ \\
    $ \alpha^* $ & Nondeterministic repetition, repeating $\alpha$
                   zero or more times \\
    
    \multicolumn{2}{l}{\textbf{Differential Dynamic Logic} $\phi, \psi$} \\
      $ \theta_1 \sim \theta_2 $ & Comparison between real arithmetic terms ($\sim \in \{ <, \leq, =, >, \geq \}$)  \\
$\neg \phi$  & Negation  \\ 
$ \phi \land \psi $   &  Conjunction\\
$ \phi \lor \psi $  &   Disjunction\\
$ \phi \rightarrow \psi $  & Implication\\
$ \forall x.~ \phi $  & Universal quantification\\
$ \exists x.~ \phi $  & Existential quantification\\
    $ \HPbox{\alpha}\phi $  &   Program necessity (true if $\phi$ is true after each possible execution of hybrid program $\alpha$) \\    
    \end{tabular}
    \caption{Syntax of hybrid programs and $\differentiallogic$}
    \label{fig:syntax}
\end{figure}

\emph{Hybrid programs} \cite{Platzer18book} are a formalism for
modeling \emph{cyber-physical systems}, i.e., systems that have both
continuous and discrete dynamic behaviors. Hybrid programs can express
continuous evolution (as differential equations) as well as discrete
transitions.



Figure~\ref{fig:syntax} gives the syntax for hybrid programs. Variables are real-valued and can be deterministically assigned ($x := \theta$, where $\theta$ is a real-valued arithmetic term) or nondeterministically assigned ($x:=*$).
Hybrid program $x'=\theta\& \evolconstraint$ expresses the continuous evolution of variables: given the current value of variable $x$, the system follows the differential equation $x' = \theta$ for some (nondeterministically chosen) amount of time so long as the formula $\evolconstraint$, the \emph{evolution domain constraint}, holds for all of that time. Note that $x$ can be a vector of variables and then $\theta$ is a vector of terms of the same dimension.  

Hybrid programs also include the operations of Kleene algebra with tests \cite{kozen1997kleene}: sequential composition, nondeterministic choice, nondeterministic repetition, and testing whether a formula holds. 
Hybrid programs are models of systems and typically over-approximate the possible behaviors of a system.

\emph{Differential dynamic logic} ($\differentiallogic$) \cite{platzer2008differential, Platzer18book, 
platzer2017complete} is the dynamic logic~\cite{dynamiclogic} of hybrid programs. 
Figure~\ref{fig:syntax} also gives the syntax for 
$\differentiallogic$ formulas. 
In addition to the standard logical connectives of first-order logic, $\differentiallogic$ includes primitive propositions that allow comparisons of real-valued terms (which may include derivatives) and \emph{program necessity} $[\alpha]\phi$, which holds in a state if and only if after any possible execution of hybrid program $\alpha$, formula $\phi$ holds. 

The semantics of $\differentiallogic$ \cite{platzer2008differential, platzer2017complete} is a Kripke semantics in which the Kripke model's worlds
are the states of the system. Let $\realSet$ denote the set of
real numbers and $\allvariableSet$ denote the set of variables. A state is a
map $\traceState$ : $\allvariableSet$ $\mapsto$ $\realSet$ assigning a real value $\traceState(x)$ to each variable $x \in \allvariableSet$. The set of all states is denoted by $\stateSet$. The semantics of hybrid programs and $\differentiallogic$ are shown in Figure~\ref{fig:hpdlsem}. We write $\HPdlsem{\traceState}{\phi}$ if formula $\phi$ is true at state $\traceState$. The real value of term $\theta$ at state $\traceState$ is denoted $\HPtermsem{\traceState}{\theta}$.
The semantics of a hybrid program $\program$ is expressed as a transition relation
$\HPtransition{\program}$ between states.
If $\HPtransitionPair{\traceState}{\traceStateprime}$ $\in$ $\HPtransition{\program}$ then there is an execution of  $\program$ that starts in state $\traceState$ and ends in state $\traceStateprime$.



\begin{figure}[t]
  \small\centering
  \begin{tabular}{r@{\;\;}c@{\;\;}l}
\multicolumn{3}{l}{\textbf{Term semantics}} \\
    $ \traceState\HPtransition{x} $ & =& $\traceState(x)$  \\
    $ \traceState\HPtransition{c} $ & = &c  \\
    $ \traceState\HPtransition{\theta_1 \oplus \theta_2} $ & =& $\traceState\HPtransition{\theta_1} \oplus \traceState\HPtransition{\theta_2}$ for  $\oplus \in \{ +, \times \}$  \\
\multicolumn{3}{l}{\textbf{Program semantics}} \\
    $\HPtransition{\HPassignS{x}{\theta}}$ & = &$\{\HPtransitionPair{\traceState}{\traceStateprime} ~|~ \traceStateprime(x) = \traceState\HPtransition{\theta}$ and for all other\\&&\qquad variables $z\not=x$, $\traceStateprime(z) = \traceState(z)\}$  \\
    
    $\HPtransition{\HPassignS{x}{*}}$ & = &$\{\HPtransitionPair{\traceState}{\traceStateprime} ~|~ \traceStateprime(z) = \traceState(z)$ for all variables $z\not=x \}$  \\
  
    $\HPtransition{?\phi}$ & =& $\{\HPtransitionPair{\traceState}{\traceState} ~|~ \traceState \models \phi \}$ \\
    
    $\HPtransition{x'=\theta$ $\& \evolconstraint}$ & = &$\{\HPtransitionPair{\traceState}{\traceStateprime} ~|$ iff exists solution $\varphi:[0,r] \mapsto \stateSet$ of \\&&\qquad $x'=\theta$ with $\varphi(0)=\traceState$ and \\&&\qquad$\varphi(r)=\traceStateprime$, and $\HPdlsem{\varphi(t)}{\evolconstraint}$ for all $t \in [0,r] \}$  \\

    $\HPtransition{\alpha \cup \beta}$ & = &$\HPtransition{\alpha} \cup \HPtransition{\beta}$ \\

   $\HPtransition{\alpha ; \beta}$ & = &$\{\HPtransitionPair{\traceState}{\traceStateprime} ~|~ \exists \mu, \HPtransitionPair{\traceState}{\mu} \in \HPtransition{\alpha} \text{~and~}
     \HPtransitionPair{\mu}{\traceStateprime} \in \HPtransition{\beta} \}$  \\
                                     
    $\HPtransition{\alpha^*}$ & =& $\HPtransition{\alpha}^*$ the transitive, reflexive closure of $\HPtransition{\alpha}$ \\
    
\multicolumn{3}{l}{\textbf{Formula semantics}}  \\
    
    $\HPdlsem{\traceState}{\theta_1 \sim \theta_2}$ & iff &$\HPtermsem{\traceState}{\theta_1} \sim \HPtermsem{\traceState}{\theta_2}$ for  $\sim \in \{ =, \leq, <, \geq, > \}$ \\
    
    $\HPdlsem{\traceState}{\phi \land \psi}$ & iff &$\HPdlsem{\traceState}{\phi}$ $\land$ $\HPdlsem{\traceState}{\psi}$, similar for $\{ \neg, \lor, \rightarrow, \leftrightarrow \}$ \\

    $\HPdlsem{\traceState}{\forall x.\phi}$ & iff& $\HPdlsem{\traceStateprime}{\phi}$ for all states $\traceStateprime$ that agree with $\traceState$ \\&&\qquad except for the value of $x$ \\

    $\HPdlsem{\traceState}{\exists x.\phi}$ & iff& $\HPdlsem{\traceStateprime}{\phi}$ for some state $\traceStateprime$ that agrees with $\traceState$ \\&&\qquad except for the value of $x$ \\

    $\HPdlsem{\traceState}{ \HPbox{\alpha} \phi}$ & iff& $\HPdlsem{\traceStateprime}{\phi}$ for all state $\traceStateprime$ with $\HPtransitionPair{\traceState}{\traceStateprime}$ $\in$ $\HPtransition{\alpha}$
  \end{tabular}
    \caption{Semantics of hybrid programs and $\differentiallogic$}
    \label{fig:hpdlsem}
\end{figure}

\label{hybridprogramexample}
We are often interested in partial correctness formulas of the form
$\phi \rightarrow [\alpha]\psi$: if $\phi$ is true then $\psi$ holds
after any possible execution of $\alpha$.  The hybrid program $\alpha$
often has the form \HPgeneralform, where
\inlineCode{ctrl} models atomic actions of the control system and does not contain continous parts (i.e., differential equations); and
\inlineCode{plant} models evolution of the physical environment and has the form of $x'=\theta$ $\& \evolconstraint$. 
That is, the system is modeled as
unbounded repetitions of a controller action
followed by an update to the physical environment.

Consider, as an example, an autonomous vehicle that needs to stop
before hitting an obstacle.\footnote{Platzer introduces this autonomous vehicle example \cite{Platzer18book}.}
For simplicity, we model the vehicle in just one dimension. 
Figure~\ref{fig:eg-vehicle} shows a \emph{\systemmodel} (hybrid program model) of such an autonomous vehicle.
\footnote{Syntax of hybrid programs used in this paper is similar to the syntax used in KeYmaera X, but revised for better presentation.} 
Let \inlineCode{$d$} be the vehicle's distance from the obstacle. 
The \emph{safety condition} that we would like to enforce
(\inlineCode{$\phi_\mathit{post}$}) is that \inlineCode{$d$} is positive.
Let \inlineCode{$v$} be the vehicle's velocity towards the obstacle in meters per second (m/s) and let \inlineCode{$a$} be the vehicle's acceleration (m/s${}^2$). 
Let \inlineCode{$t$} be the time elapsed since the controller was last invoked. 
The hybrid program \inlineCode{plant} describes how
the physical environment evolves over time interval \inlineCode{$\epsilon$}: 
distance changes according to \inlineCode{$-v$} (i.e., \inlineCode{$d^\prime = -v$}), 
velocity changes according to the acceleration (i.e., \inlineCode{$v^\prime = a$}), 
and time passes at a constant rate (i.e., \inlineCode{$t^\prime=1$}). 
The differential equations evolve only within the time interval \inlineCode{$t \le \epsilon$} and if \inlineCode{$v$} is non-negative 
(i.e., \inlineCode{$v \geq 0$}).

The hybrid program \inlineCode{ctrl} models the vehicle's controller.  The
vehicle can either accelerate at \inlineCode{$A$} m/$s^2$ or brake at \inlineCode{$-B$} m/$s^2$. For the purposes of the model, the
controller chooses nondeterministically between these options. Hybrid
programs \inlineCode{accel} and \inlineCode{brake} express
the controller accelerating or braking (i.e., setting \inlineCode{$a$} to \inlineCode{$A$} or \inlineCode{$-B$} respectively). The controller can accelerate only
if condition \inlineCode{$\psi$} is true, which captures that the
vehicle can accelerate for the next \inlineCode{$\epsilon$} seconds only if doing
so would still allow it to brake in time to avoid 
the obstacle.


\newcommand\cmts{5cm}

\newcommand\codepos{6.8cm}
\newcommand\cmtpos{6.8cm}
\newcommand\linespacing{1.1}

\begin{figure}
\begin{lstlisting}[basicstyle={\linespread{\linespacing}\ttfamily\scriptsize}]    
Definitions. `\javacommentalign{\cmts}{cannot change over time}`
 R $\epsilon$. `\javacommentalign{\cmtpos}{time limit for control}`
 R $A$. `\javacommentalign{\cmtpos}{acceleration rate}`
 R $B$. `\javacommentalign{\cmtpos}{braking rate}`
 B $\phi_\mathit{pre}$ `\codealign{\codepos}{\HPdef $A \geq 0 \land B \geq 0 \land 2Bd > v^2$}`
 B $\phi_\mathit{post}$ `\codealign{\codepos}{\HPdef $d > 0$}`
 B $\psi$ `\codealign{\codepos}{\HPdef $2Bd > v^2 + (A+B)(A\epsilon^2 + 2v\epsilon)$}`
 HP accel `\codealign{\codepos}{\HPdef $?\psi; a:= A$}`
 HP brake `\codealign{\codepos}{\HPdef $a := -B$}`
 HP ctrl `\codealign{\codepos}{\HPdef ((accel $\cup$ brake); $t:=0$)}`
 HP plant `\codealign{\codepos}{\HPdef $d^\prime = -v, v^\prime = a, t^\prime = 1 \; \& \; (v \geq 0 \land t \leq \epsilon)$}`
ProgramVariables. `\javacommentalign{\cmts}{may change over time}`
 R $t$. `\javacommentalign{\cmtpos}{clock variable}`
 R $d$. `\javacommentalign{\cmtpos}{distance to obstacle}`
 R $v$. `\javacommentalign{\cmtpos}{vehicle velocity}`
 R $a$. `\javacommentalign{\cmtpos}{acceleration of the vehicle}`
Problem. `\javacommentalign{\cmts}{dL formula to be proven}`
 $\phi_\mathit{pre}$ $\rightarrow$ [(ctrl; plant)$^*$]$\phi_\mathit{post}$
\end{lstlisting}
\caption{\systemmodel of an autonomous vehicle}\label{fig:eg-vehicle}
\end{figure}


The formula to be verified is presented on the last line of the \systemmodel.
Given an appropriate precondition \inlineCode{$\phi_\mathit{pre}$}, the axioms and proof rules $\differentiallogic$ can be used to prove that the safety condition \inlineCode{$\phi_\mathit{post}$} holds.
The tactic-based theorem prover KeYmaera X \cite{fulton2015keymaera} provides tool support for automating the construction of these proofs.


To present some of our definitions, we need to refer to the variables that occur in a hybrid program\cite{Platzer18book,platzer2017complete}.
The \emph{free variables} of hybrid program $\programwoattack$, denoted $\FV{\programwoattack}$, is the variables that may potentially be read by $\programwoattack$. Values of $\FV{P}$ won't be modified during executions of program $\programwoattack$. The \emph{bound variables} of program $\programwoattack$, denoted $\BV{\programwoattack}$, is the set of variables that may potentially be written to by $\programwoattack$.\footnote{We follow the naming convention of related work on hybrid programs by using the names of free variables and bound variables\cite{platzer2017complete}.}
We write $\variableSet{\programwoattack}$ for the set of all variables of  $\programwoattack$, and have $\variableSet{\programwoattack} = \BV{\programwoattack} \cup \FV{\programwoattack}$.
For example, let $\programwoattack$ be the hybrid program modeling an autonomous vehicle with sensors shown in Figure~\ref{fig:eg-vehicle}, then $\FV{\programwoattack}$ = $\{ A, B, \epsilon, v, d \}$, $\BV{\programwoattack}$ = $\{ t, v, d, a, t^\prime, v^\prime, d^\prime\}$, and $\variableSet{\programwoattack}$ = $\{A, B, \epsilon, t, v, d, a, t^\prime, v^\prime, d^\prime \}$.
Formal definitions of $\BV{\programwoattack}$, $\FV{\programwoattack}$, and $\variableSet{\programwoattack}$ are included in Appendix~\ref{appendix:definitions}.   



\section{Modeling Sensor Attacks} \label{sec:modeling-attacks}

In this section, we explain how we model the \attackcategory in hybrid programs. In particular, we introduce how sensor readings are modeled and describe our threat model.   

\subsection{Modeling Sensor Readings} \label{sec:sensor-modeling}

Hybrid programs typically conflate the values of variables in the physical model and the values ultimately perceived by the sensor. 
For example, in Figure~\ref{fig:eg-vehicle}, the hybrid program contains a single continuous variable \inlineCode{$v$} that represents the value measured by a sensor; the model does not separate the model's representation of the value of \inlineCode{$v$} in the physical model from the software component's representation of \inlineCode{$v$}.
Therefore, our analysis begins with a hybrid program $\hybridprogram_\mathit{orig}$ in which sensor reads are not explicitly modeled.
We construct a program $\hybridprogram$ that is equivalent to $\hybridprogram_\mathit{orig}$ but separately represents sensor reads and requires that variables holding sensor reads are equal to the underlying sensor's value. 
For example, \inlineCode{$v_p$} may represent the actual physical velocity of a vehicle and it changes according to laws of physics, and \inlineCode{$v_s$} may represent the variable in the controller into which the sensor's value is read.
In model $\hybridprogram$ we have the constraint \inlineCode{$v_s=v_p$}.
From $\hybridprogram$ we can derive additional models that allow
sensed values to differ from actual physical values. For example,
a model that represents the compromise of the velocity sensor would be
identical to $\hybridprogram$ except that the constraint \inlineCode{$v_s=v_p$} is removed, allowing \inlineCode{$v_s$} to take arbitrary
values. 
Similar modifications to $\hybridprogram$ can represent the compromise of
other sensors, or of multiple sensors at the same time.

As an example, Figure~\ref{fig:eg-vehicle-sensing} shows a \systemmodel of an autonomous vehicle introduced in Figure~\ref{fig:eg-vehicle} whose hybrid program separates physical and sensed values: \inlineCode{$v_p$} and \inlineCode{$d_p$} are physical values of velocity and distance, while \inlineCode{$v_s$} and \inlineCode{$d_s$} are the corresponding sensed values. Note that the \inlineCode{ctrl} program sets the sensed values equal to the physical values (line~\ref{line:eg-sensing-contraint}).

\begin{figure}
\begin{lstlisting}[basicstyle={\linespread{\linespacing}\ttfamily\scriptsize}]    
Definitions. 
 R $\epsilon$. `\javacommentalign{\cmtpos}{time limit for control}`
 R $A$. `\javacommentalign{\cmtpos}{acceleration rate}`
 R $B$. `\javacommentalign{\cmtpos}{braking rate}`
 B $\phi_\mathit{pre}$ `\codealign{\codepos}{\HPdef $A \geq 0 \land B \geq 0 \land 2Bd_p > v_p^2$}`
 B $\phi_\mathit{post}$ `\codealign{\codepos}{\HPdef $d_p > 0$}`
 B $\psi$ `\codealign{\codepos}{\HPdef $2Bd_s > v_s^2 + (A+B)(A\epsilon^2 + 2v_s\epsilon)$}`
 HP accel `\codealign{\codepos}{\HPdef $?\psi; a:= A$}`
 HP brake `\codealign{\codepos}{\HPdef $a := -B$}`
 HP ctrl `\codealign{\codepos}{\HPdef $\HPassignment{v_s}{v_p}$ $\HPassignment{d_s}{d_p}$ (accel $\cup$ brake); $t:=0$ \label{line:eg-sensing-contraint}}` 
 HP plant `\codealign{\codepos}{\HPdef $d_p^\prime = -v_p, v_p^\prime = a, t^\prime = 1 \; \& \; (v_p \geq 0 \land t \leq \epsilon)$}`
ProgramVariables.
 R $t$. `\javacommentalign{\cmtpos}{clock variable}`
 R $d_p$. `\javacommentalign{\cmtpos}{distance to obstacle (physical)}`
 R $v_p$. `\javacommentalign{\cmtpos}{vehicle velocity (physical)}`
 R $d_s$. `\javacommentalign{\cmtpos}{distance to obstacle (sensed)}`
 R $v_s$. `\javacommentalign{\cmtpos}{vehicle velocity (sensed)}`
 R $a$. `\javacommentalign{\cmtpos}{acceleration of the vehicle}`
Problem.
 $\phi_\mathit{pre}$ $\rightarrow$ [(ctrl; plant)$^*$]$\phi_\mathit{post}$
\end{lstlisting}
\caption{\systemmodel of an autonomous vehicle with sensors}\label{fig:eg-vehicle-sensing}
\end{figure}

\subsection{Threat Model} \label{sec:safety-thread-model}

We allow attackers to arbitrarily change sensed values. We are not concerned with the physical mechanisms by which an attacker compromises a sensor. Instead, we model sensor attacks as assignments to variables that represent sensed values. Let $\programwoattack$ be a hybrid program, $S_A \subseteq \BV{\programwoattack}$ be a set of distinguished variables corresponding to sensors that may be vulnerable to attacks, the sensor attack on $\programwoattack$ is defined as follows:

\begin{definitions}[\attackname]
For a hybrid program $\programwoattack$ of the form \HPgeneralform and a set of variables $S_A$ $\subseteq$ $\BV{\programwoattack}$, the \emph{\attackname} on program $\programwoattack$, denoted $\attacked{\programwoattack}{S_A}$, is the program obtained from $\programwoattack$ by replacing all assignments to variable $v \in S_A$ with assignment $v := *$. 
\end{definitions}

For example, let  $\programwoattack$ be the hybrid program \inlineCode{(ctrl;plant)$^*$}  modeling an autonomous vehicle with separate physical and sensed values shown in Figure~\ref{fig:eg-vehicle-sensing}. If the velocity sensor \inlineCode{$v_s$} is under attack, program $\attacked{\programwoattack}{\{v_s\}}$ would be \inlineCode{(ctrl$^{\prime}$;plant)$^*$} where \inlineCode{ctrl$^\prime$} is the following:
\[
\highlight{\HPassignment{v_s}{*}} \; \HPassignment{d_s}{d_p} \; (\texttt{accel}  \; \cup \; \texttt{brake}); t:=0.
\]
Note that with such a threat model, only the \inlineCode{ctrl} part of a program \HPgeneralform is modified by an attack, i.e., \attacked{\inlineCode{(ctrl; plant)}}{$S_A$} = (\attacked{\inlineCode{ctrl}}{$S_A$}); \inlineCode{plant}. Intuitively, it means a sensor attack does not \emph{directly} affect the physical dynamics with which the system interacts.

\subsection{Robustness to Sensor Attacks} \label{sec::properties}

%
We explore the impact of an \attackname by studying two relational properties that characterize the robustness of the system to the attack: (1) whether a \attackname affects the safety of the system and (2) whether a \attackname affects the system's high-integrity state.

\medskip

\paragraph{Robust Safety} Safety is critical in many cyber-physical systems, e.g., a vehicle should not collide with obstacles and pedestrians.
We first present the definitions of safety and our relational property robust safety, and then show an example.

\begin{definitions}[Safety] \label{def:safety} 
A hybrid program $\hybridprogram$ of the form \HPgeneralform is \emph{safe for $\phi_{post}$ assuming $\phi_{pre}$}, denoted $\safety{\hybridprogram}{\phi_{pre}}{\phi_{post}}$,
if the formula $\HPproperty{\phi_{pre}}{[\hybridprogram]}{\phi_{post}}$ holds.
\end{definitions}

This definition says $\hybridprogram$ is safe if for any execution of $\hybridprogram$ whose starting state satisfies $\phi_{pre}$, its ending state satisfies safety condition $\phi_{post}$. 

A system is \emph{robustly safe} if compromise of sensors  $S_A$ does not affect whether the system is safe. Note that robust safety does not require that the attacked system is safe; instead it requires that \emph{if} the original system is safe, then the attacked system is also safe. The distinction is important: it allows us to separate the task of reasoning about safety from the task of reasoning about sensor attacks. Indeed, as we will see in a case study in Section~\ref{sec:casestudies}, it is possible to prove robust safety even when it is beyond current techniques to prove safety.

\begin{definitions}[Robust safety] \label{def:robustness-safety}
  For a hybrid program $\programwoattack$ of the form \HPgeneralform and a set of variables $S_A$ $\subseteq$ $\BV{\programwoattack}$, $\programwoattack$ is \emph{robustly safe for $\phi_{post}$ assuming $\phi_{pre}$ under the \attackname}, denoted $\robustsafety{\programwoattack}{\phi_{pre}}{\phi_{post}}{S_A}$, if
  $\safety{\programwoattack}{\phi_{pre}}{\phi_{post}}$ implies $\safety{\attacked{\programwoattack}{S_A}}{\phi_{pre}}{\phi_{post}}$.
\end{definitions}


\begin{figure}
\begin{lstlisting}[basicstyle={\linespread{\linespacing}\ttfamily\scriptsize}]    
 ...
 HP voting `\codealign{\codepos}{\HPdef \HPassignment{$v_{s_1}$}{$v_p$}  \HPassignment{$v_{s_2}$}{$v_p$} \HPassignment{$v_{s_3}$}{$v_p$}\label{line:vehicle-robustness-safety-3sensors}}`
 `\codealign{\codepos}{~~~(\phantom{$\cup$~}(\HPguard{$v_{s_1}$ = $v_{s_2}$} \HPassignS{$v_s$}{$v_{s_1}$})} \label{line:vehicle-robustness-safety-voting1}`
 `\codealign{\codepos}{~~~~$\cup$~(\HPguard{$v_{s_1}$ = $v_{s_3}$} \HPassignS{$v_s$}{$v_{s_1}$})} \label{line:vehicle-robustness-safety-voting2}`
 `\codealign{\codepos}{~~~~$\cup$~(\HPguard{$v_{s_2}$ = $v_{s_3}$} \HPassignS{$v_s$}{$v_{s_2}$})~)} \label{line:vehicle-robustness-safety-voting3}`
 HP ctrl `\codealign{\codepos}{\HPdef voting; $\HPassignment{d_s}{d_p}$ (accel $\cup$ brake); $t:=0$}`
 ...
\end{lstlisting}
\caption{\systemmodel of an autonomous vehicle with sensor voting} \label{fig:eg-vehicle-robustness-safety}
\end{figure}


For example, let $\programwoattack$ be the hybrid program modeling an autonomous vehicle with sensors shown in Figure~\ref{fig:eg-vehicle-sensing}. $\programwoattack$ is safe for $\phi_\mathit{post}$ assuming $\phi_\mathit{pre}$ (i.e., $\safety{\programwoattack}{\phi_{pre}}{\phi_{post}}$). However, $\programwoattack$ is not robustly safe for $\phi_\mathit{post}$ assuming $\phi_\mathit{pre}$ under \attackname where $S_A$ is $\{ v_{s}\}$, since $\safety{\attacked{\programwoattack}{S_A}}{\phi_{pre}}{\phi_{post}}$ doesn't hold.


The system can be modified so that it does satisfy robust safety. For example,
we can modify the system to use three velocity sensors (perhaps measuring velocity by different mechanisms) and use a voting scheme to determine the current velocity.
Figure~\ref{fig:eg-vehicle-robustness-safety} shows a model of such a modified system.
The physical velocity $v_p$ is sensed by three sensors (line~\ref{line:vehicle-robustness-safety-3sensors}), and voting performed to determine the final reading $v_s$ (lines~\ref{line:vehicle-robustness-safety-voting1}--\ref{line:vehicle-robustness-safety-voting3}). The contents elided in Figure~\ref{fig:eg-vehicle-robustness-safety} are the same as Figure~\ref{fig:eg-vehicle-sensing}. 




Let $\programwoattack$ be the hybrid program modeling an autonomous vehicle with duplicated sensors shown in Figure~\ref{fig:eg-vehicle-robustness-safety}. For any set $S_A \in \{ \{v_{s_1}\}, \{v_{s_2}\}, \{v_{s_3}\} \}$,
program  $\programwoattack$ is robustly safe under $S_A$-sensor attack, i.e., $P$ is robustly safe if at most one of the velocity sensors is compromised.
Intuitively, this is because  $v_s = v_p$ holds after running program \inlineCode{voting}, even if up to one of the velocity sensors is compromised.
A systematic approach for proving robustness safety is presented in Section~\ref{sec:prove-eq}.

\medskip

\paragraph{Robustness of High-Integrity State}


Sensors that may be compromised are low integrity: the sensed values might be under the control of the attacker. By contrast, parts of the system state might be deemed to be high integrity: their values are critical to the correct and secure operation of the system. Low-integrity sensor readings should not be able to affect a system's high-integrity state.  
For example, an attacker with access to a car's interior temperature sensor should not be able to affect the control of the car's velocity.

We can state this requirement as a relational property: we say the high-integrity state is robust if, for any execution of the system with its low-integrity sensors compromised, there is an execution of the non-compromised system that can achieve the same values on all high-integrity variables. We delay formal definition of  robustness of high-integrity state to Section~\ref{sec:equivalence.relation}.

\begin{figure}
\begin{lstlisting}[basicstyle={\linespread{\linespacing}\ttfamily\scriptsize}]    
 ...
 R $T$. `\javacommentalign{\cmtpos}{target temperature}`
 HP ctrl$_{t}$ `\codealign{\codepos}{\HPdef \HPassignment{$temp_s$}{$temp_p$}}`
 `\codealign{\codepos}{~~~(\phantom{$\cup$~}(\HPguard{$temp_s>T$} \HPassignS{$thermo$}{-1}) \label{line:eg-vehicle-temp-contrl1}}`
 `\codealign{\codepos}{~~~~$\cup$~(\HPguard{$temp_s<T$} \HPassignS{$thermo$}{1})}`
 `\codealign{\codepos}{~~~~$\cup$~($?temp_s=T$)~) \label{line:eg-vehicle-temp-contrl2}} `
 HP ctrl `\codealign{\codepos}{\HPdef ctrl$_t$;}`
 `\codealign{\codepos}{~~~~\HPassignment{$v_s$}{$v_p$} \HPassignment{$d_s$}{$d_p$} (accel $\cup$ brake); $t:=0$}`
 HP plant `\codealign{\codepos}{\HPdef $d_p^\prime = -v_p, v_p^\prime = a, \; temp_{p}^{\prime} = thermo; \;  t^\prime = 1 \;$}`
 `\codealign{\codepos}{~~~~$\& \; (v_p \geq 0 \land t \leq \epsilon)$}`
ProgramVariables.
 R $temp_s$. `\javacommentalign{\cmtpos}{interior temperature (sensed)}`
 R $temp_p$. `\javacommentalign{\cmtpos}{interior temperature (physical)}`
 R $thermo$. `\javacommentalign{\cmtpos}{thermostat command}`
 ...
\end{lstlisting}
\caption{\systemmodel of an autonomous vehicle with interior temperature control} \label{fig:eg-vehicle-robustness-guard}
\end{figure}

Let's consider an example. Figure~\ref{fig:eg-vehicle-robustness-guard} presents a \systemmodel of an autonomous vehicle with sensors shown in Figure~\ref{fig:eg-vehicle-sensing} but added with interior temperature control (elided contents in Figure~\ref{fig:eg-vehicle-robustness-guard} are the same as Figure~\ref{fig:eg-vehicle-sensing}). 
The vehicle has sensor readings of interior temperature (\inlineCode{$temp_s$}). The physical temperature (\inlineCode{$temp_p$}) changes according to \inlineCode{$thermo$} that is set by \inlineCode{ctrl$_{t}$} after comparing \inlineCode{$temp_s$} with target temperature \inlineCode{$T$} (lines~\ref{line:eg-vehicle-temp-contrl1}--\ref{line:eg-vehicle-temp-contrl2}).
In this example, the temperature sensor is low-integrity and may be compromised.

A system designer may want to understand if such an attack can interfere with the vehicle's high-integrity state such as its velocity. Let $\programwoattack$ be the model of an autonomous vehicle with interior temperature control shown in Figure~\ref{fig:eg-vehicle-robustness-guard}.
Intuitively, its velocity (i.e., variable  $v_p$) is robust with respect to sensor \inlineCode{$temp_s$}: for any execution of \attacked{$\programwoattack$}{\{\inlineCode{$temp_s$}\}}, we have an execution of $\programwoattack$ that can produce the same values of $v_p$ at every control iteration. The system does satisfy robustness of high-integrity state, and we will prove it in Section~\ref{sec:prove-eq}.




\section{\Hequivalence} \label{sec:equivalence.relation}

This section introduces $\Hequivalence$, a notion of equivalence that allows us to reason about our relational properties.
%

\subsection{Equivalence of Hybrid Programs}

Intuitively, $\Hequivalence$ of two programs means that for every execution of one program, there exists an execution of the other program such that the two executions agree on set $\Hsymbol$ initially and at the end of every control loop iteration, where $\Hsymbol$ is a set of high-integrity variables.

The formal definition of  $\Hequivalence$ of programs builds on $\Hequivalence$ of program states. 


\begin{definitions}[$\Hequivalence$ of program states]  \label{def:equivlance-state} For states $\traceState_1, \traceState_2 \in \stateSet$ and a set of variables $\Hsymbol$, 
states $\traceState_1$ and $\traceState_2$ are $\Hequivalent$, denoted $\worldloweq{\traceState_1}{\traceState_2}{\Hsymbol}$,  if they agree on valuations of all variables in the set $\Hsymbol$; i.e., $ \forall x \in \Hsymbol, \traceState_1(x) = \traceState_2(x) $.
\end{definitions}
  


\begin{definitions}[$\Hequivalence$ of programs]
  \label{def:equivlance}For hybrid programs $\hybridprogram_1$ $=$ $\alpha^*$, $\hybridprogram_2$ $=$ $\beta^*$, and a set of variables $\Hsymbol$,
  $\hybridprogram_1$ and $\hybridprogram_2$ are \emph{\Hequivalent}, denoted $\eqprogram{\hybridprogram_1}{\hybridprogram_2}{\Hsymbol}$, if they satisfy the following:
  \begin{align*}
  \forall & n : \mathbb{N} \\
          & \forall \traceState_0, \traceState_1 \ldots \traceState_n : \stateSet \suchthat \forall i \in 0 ... (n-1), \\
          & ~\HPtransitionPair{\traceState_i}{\traceState_{i+1}} \in \HPtransition{\alpha}~(\text{respectively~} \HPtransition{\beta})\\
          & ~~\exists \traceStateprime_0, \traceStateprime_1 ... \traceStateprime_n : \stateSet \suchthat \forall j \in 0 ... (n-1), \\
          & ~~~ \HPtransitionPair{\traceStateprime_j}{\traceStateprime_{j+1}} \in \HPtransition{\beta}~(\text{respectively~} \HPtransition{\alpha}) \\
          & ~~~~ \text{and~} \forall k \in 0 ... n,~ \worldloweq{\traceState_k}{\traceStateprime_k}{\Hsymbol}
  \end{align*} 
\end{definitions}

In the definition, the number $n$ corresponds to an arbitrary number of loop iterations, and the last line indicates that the two executions agree on $\Hsymbol$ at the beginning and end of every loop iteration. The definition is symmetric. 


This definition can be readily adjusted for loop-free programs. 
\begin{definitions}[$\Hequivalence$ of loop-free programs]
  \label{def:equivlance-loop-free}For two loop-free hybrid programs $\alpha$ and $\beta$, and a set of variables $\Hsymbol$,
$\alpha$ and $\beta$ are \emph{\Hequivalent}, denoted $\eqprogram{\alpha}{\beta}{\Hsymbol}$, if they satisfy the following:
\begin{align*}
  \forall
  & \traceState_0, \traceState_1 : \stateSet \suchthat
  \HPtransitionPair{\traceState_0}{\traceState_1} \in \HPtransition{\alpha} ~(\text{respectively~}\HPtransition{\beta})  \\
  & \exists  \traceStateprime_0, \traceStateprime_1 : \stateSet \suchthat \\
  & ~\HPtransitionPair{\traceStateprime_0}{\traceStateprime_{1}} \in \HPtransition{\beta}  ~(\text{respectively~}\HPtransition{\alpha}) \land \worldloweq{\traceState_0}{\traceStateprime_0}{\Hsymbol} \land
    \worldloweq{\traceState_1}{\traceStateprime_1}{\Hsymbol} 
\end{align*}
\end{definitions}


Note that Definition~\ref{def:equivlance} is defined in lock-step, i.e., both loops iterate
exactly the same number of times \cite{pick2018exploiting}.
As pointed out by previous work \cite{shemer2019property}, a lock-step approach is sometimes not flexible enough to express and verify some properties, e.g., properties that may hold for two programs that execute for different numbers of iterations.
However, such a lock-step definition is reasonable in our setting.
According to the threat model, we are comparing a system with compromised sensors and a system with uncompromised sensors and so the attack should not affect the rate of a system's control (i.e., how frequently the system's control loop executes). 
Thus, the robustness of a system is correctly encoded by a lock-step definition, in which states of a system with and without compromised sensors are consistent after every loop iteration.
An additional benefit of this definition is that it is more tractable for verification, which we will explore in Section~\ref{sec:prove-eq}.

\subsection{Reasoning about Robustness using $\Hequivalence$}


The $\Hequivalence$ relation can be used to reason about our two relational properties.

\medskip


\paragraph{Reasoning about Robustness of Safety} 
Robustness of safety can be established by proving $\Hequivalence$ with the help of the following theorem, which states that if program $P$ is \Hequivalent to $\attacked{\programwoattack}{S_A}$ where $\Hsymbol$ is the free variables of formulas $\phi_{pre}$ and $\phi_{post}$, then $P$ is robustly safe for $\phi_{post}$ assuming $\phi_{pre}$ under the \attackname.
 
\begin{theorem}[$\Hequivalent$ programs are robustly safe] \label{lemma:robust-safety} For a hybrid program $\programwoattack$ of the form \HPgeneralform, a set of variables $S_A \subseteq \BV{\programwoattack}$, and formulas $\phi_{pre}$ and $\phi_{post}$, if $\eqprogram{\programwoattack}{\attacked{\programwoattack}{S_A}}{\FV{\phi_{pre} \land \phi_{post}}}$, then 
  \[  \robustsafety{\programwoattack}{\phi_{pre}}{\phi_{post}}{S_A} \]
\end{theorem}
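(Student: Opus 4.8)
The plan is to unfold the definitions of robust safety and safety and then use the $\Hsymbol$-equivalence hypothesis to transport an arbitrary execution of the attacked system back to an execution of the original system, where the safety assumption can be applied. By Definition~\ref{def:robustness-safety} it suffices to assume $\safety{\programwoattack}{\phi_{pre}}{\phi_{post}}$, i.e.\ that $\phi_{pre} \rightarrow [\programwoattack]\phi_{post}$ is valid, and derive $\safety{\attacked{\programwoattack}{S_A}}{\phi_{pre}}{\phi_{post}}$, i.e.\ the validity of $\phi_{pre} \rightarrow [\attacked{\programwoattack}{S_A}]\phi_{post}$. Unfolding the formula semantics, I fix an arbitrary state $\traceState$ with $\traceState \models \phi_{pre}$ and an arbitrary $\traceStateprime$ with $\HPtransitionPair{\traceState}{\traceStateprime} \in \HPtransition{\attacked{\programwoattack}{S_A}}$, and the remaining goal is to show $\traceStateprime \models \phi_{post}$.

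Writing $\programwoattack = \alpha^*$ and $\attacked{\programwoattack}{S_A} = \beta^*$ for the two loop bodies $\alpha$ and $\beta$, the transition $\HPtransitionPair{\traceState}{\traceStateprime}$ lives in $\HPtransition{\beta}^*$, so by the semantics of $*$ as reflexive-transitive closure it decomposes into a finite chain $\traceState = \mu_0, \mu_1, \ldots, \mu_n = \traceStateprime$ with $\HPtransitionPair{\mu_i}{\mu_{i+1}} \in \HPtransition{\beta}$ for each $i$. Applying Definition~\ref{def:equivlance} in the direction from the attacked body $\beta$ to the original body $\alpha$ (legitimate since that definition is symmetric), I obtain states $\nu_0, \ldots, \nu_n$ with $\HPtransitionPair{\nu_j}{\nu_{j+1}} \in \HPtransition{\alpha}$ for each $j$ and $\worldloweq{\mu_k}{\nu_k}{\Hsymbol}$ for every $k$. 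Chaining the $\alpha$-transitions yields $\HPtransitionPair{\nu_0}{\nu_n} \in \HPtransition{\alpha}^* = \HPtransition{\programwoattack}$ (the degenerate case $n = 0$, where the attacked execution takes zero iterations, is handled identically with $\nu_0 = \traceState$).

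It then remains to transfer the pre- and post-conditions across the $\Hsymbol$-equivalence of the endpoints, for which I invoke the standard coincidence property of $\differentiallogic$ formulas: two states that agree on the free variables of a formula agree on its truth value. Since $\Hsymbol = \FV{\phi_{pre} \land \phi_{post}} = \FV{\phi_{pre}} \cup \FV{\phi_{post}}$, the agreement $\worldloweq{\mu_0}{\nu_0}{\Hsymbol}$ together with $\mu_0 = \traceState \models \phi_{pre}$ gives $\nu_0 \models \phi_{pre}$. The safety assumption applied to $\nu_0 \models \phi_{pre}$ and $\HPtransitionPair{\nu_0}{\nu_n} \in \HPtransition{\programwoattack}$ gives $\nu_n \models \phi_{post}$, and finally $\worldloweq{\mu_n}{\nu_n}{\Hsymbol}$ together with coincidence gives $\mu_n = \traceStateprime \models \phi_{post}$, completing the argument.

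The proof is mostly bookkeeping over the definitions; the one genuine obstacle is justifying the coincidence step. It is not enough that $\Hsymbol$ merely contains the free variables: I must appeal to the fact that $\FV{\phi_{pre}}$ and $\FV{\phi_{post}}$ are exactly the variables whose valuation determines the truth of $\phi_{pre}$ and $\phi_{post}$, which is the content of the free-variable coincidence lemma for $\differentiallogic$ recalled via the definition of $\FV{\cdot}$ (Appendix). A secondary subtlety is invoking the symmetry of Definition~\ref{def:equivlance} in the correct direction, namely starting from an execution of $\attacked{\programwoattack}{S_A}$ and producing a matching execution of $\programwoattack$, so that the safety hypothesis about the unattacked $\programwoattack$ is the one that actually gets applied.
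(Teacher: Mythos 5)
Your proof is correct and takes essentially the same approach as the paper's: both transport an arbitrary execution of $\attacked{\programwoattack}{S_A}$ to a matching execution of $\programwoattack$ via the \Hequivalence hypothesis, transfer $\phi_{pre}$ and $\phi_{post}$ across the endpoint agreements using the free-variable coincidence lemma for $\differentiallogic$ (the paper's appeal to Lemma~3 of \cite{platzer2017complete}), and then apply the safety assumption to the matched unattacked execution. The only difference is presentational: you make explicit the decomposition of the $*$-semantics into a finite chain of loop-body transitions (including the $n=0$ case), which the paper leaves implicit in its talk of executions agreeing at every control iteration.
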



A proof is in Appendix~\ref{appendix:proofs}. Intuitively, the theorem holds because if there were an execution of attacked program such that $\phi_{pre}$ held at the beginning but $\phi_{post}$ did not hold at the end of a loop, then there must be an execution of $\programwoattack$ where the same is true, contradicting the assumption that $\programwoattack$ is safe. 

Note that the converse of Theorem~\ref{lemma:robust-safety} does not hold, i.e., if $\robustsafety{\programwoattack}{\phi_{pre}}{\phi_{post}}{S_A}$, it is not always true that $\eqprogram{\programwoattack}{\attacked{\programwoattack}{S_A}}{\FV{\phi_{pre} \land \phi_{post}}}$.
For example, let $\programwoattack$ be the program $(\HPassignment{b}{1}~\HPassignS{a}{b})^*$, formula $\phi_{pre}$ be $a > 0$, $\phi_{post}$ be $b > 0$, and $S_A$ be $\{ a \}$. Then $\robustsafety{\programwoattack}{\phi_{pre}}{\phi_{post}}{S_A}$ holds, but $\eqprogram{\programwoattack}{\attacked{\programwoattack}{S_A}}{\{ a, b \}}$ does not hold since some executions of $\attacked{\programwoattack}{S_A}$ (i.e., $(\HPassignment{b}{1}~\HPassignS{a}{*})^*$) do not have a matching execution of $\programwoattack$.

Theorem~\ref{lemma:robust-safety} reduces proving robustness of safety to proving $\Hequivalence$, which can be achieved by the techniques introduced in Section~\ref{sec:prove-eq}.


\medskip

\paragraph{Reasoning about Robustness of High-Integrity State}
$\Hequivalence$ directly expresses robustness of high-integrity state by letting $\Hsymbol$ be the set of high-integrity variables. 
Therefore, proving robustness of high-integrity state is the \emph{same} as proving $\Hequivalence$ of the high-integrity state. The following definition 
makes this clear.


\begin{definitions}[Robustness of high-integrity state]
  For program $\programwoattack$ of the form \HPgeneralform and a set of variables $S_A$ $\subseteq$ $\BV{\programwoattack}$,  and set of variables $\Hsymbol$,
  $\programwoattack$ satisfies \emph{robustness of high-integrity state $\Hsymbol$ under the \attackname}
  if $\eqprogram{\programwoattack}{\attacked{\programwoattack}{S_A}}{\Hsymbol}$.
\end{definitions}







\section{Proving $\Hequivalence$} \label{sec:prove-eq}


We present two sound techniques for reasoning about $\Hequivalence$.

\subsection{Decomposition Approach} \label{sec:prove-eq-decomposition}

Our first approach proves $\Hequivalence$ of programs by \emph{decomposing} the proof obligation into simpler obligations for components of the programs. This relies on various compositional properties of $\Hequivalence$, stated here and proven in Appendix~\ref{appendix:proofs}.

\begin{theorem} \label{theorem:loop-free-properties}
For all loop-free hybrid programs $A$, $B$, $C$, $D$ and sets $\Hsymbol$ and $\Hsymbol'$ of variables, the following properties hold:
\begin{enumerate}[label={\arabic*.}, ref=\arabic*, itemsep=-1ex] 
  \item \label{theorem:eq-self} $\worldloweq{A}{A}{\Hsymbol}$;  
  \item \label{theorem:eq-subset} If $\Hsymbol \subseteq \Hsymbol'$ and  $\worldloweq{A}{B}{\Hsymbol'}$, then $\worldloweq{A}{B}{\Hsymbol}$; 
\item \label{theorem:eq-unmodified} If $\worldloweq{A}{B}{\Hsymbol}$ and $(\variableSet{A} \cup \variableSet{B}) ~\cap~ \Hsymbol' = \emptyset$, then $\worldloweq{A}{B}{\Hsymbol \cup \Hsymbol'}$;
\item \label{theorem:eq-decomposition} If $\FV{C} \cup \FV{D} \subseteq {\Hsymbol}$, $\worldloweq{A}{B}{\Hsymbol}$, and $\worldloweq{C}{D}{\Hsymbol}$, then $\worldloweq{(A ; C)}{(B ; D)}{\Hsymbol}$;
\item \label{theorem:eq-loop} If $\FV{A} \cup \FV{B} \subseteq {\Hsymbol}$ and $\worldloweq{A}{B}{\Hsymbol}$, then $\worldloweq{A^*}{B^*}{\Hsymbol}$.
\end{enumerate}
\end{theorem}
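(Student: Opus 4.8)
The plan is to prove the five items in order, since each is either an immediate unfolding of Definition~\ref{def:equivlance-loop-free} or rests on the same two standard facts about hybrid programs: the \emph{bound-effect} property (an execution of $\alpha$ leaves every variable outside $\BV{\alpha}$ unchanged) and the \emph{coincidence} property (if two states agree on a set $V \supseteq \FV{\alpha}$ and $\HPtransitionPair{\traceState}{\traceStateprime} \in \HPtransition{\alpha}$, then there is $\HPtransitionPair{\mu}{\mu'} \in \HPtransition{\alpha}$ from the second state with $\traceStateprime$ and $\mu'$ agreeing on $V$, the matching execution being free to replicate the nondeterministic choices of the first), both of which hold for $\differentiallogic$ \cite{platzer2017complete}. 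Items~\ref{theorem:eq-self} and~\ref{theorem:eq-subset} need only the definition. For~\ref{theorem:eq-self} I take the witnessing execution of the right-hand program to be the given execution itself, so both state-equivalence obligations $\worldloweq{\traceState_k}{\traceStateprime_k}{\Hsymbol}$ hold reflexively. For~\ref{theorem:eq-subset} I observe that $\worldloweq{\traceState}{\traceStateprime}{\Hsymbol'}$ implies $\worldloweq{\traceState}{\traceStateprime}{\Hsymbol}$ whenever $\Hsymbol \subseteq \Hsymbol'$ (Definition~\ref{def:equivlance-state}), so the very witness establishing equivalence on $\Hsymbol'$ also works for $\Hsymbol$.

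For item~\ref{theorem:eq-unmodified}, given an execution $\HPtransitionPair{\traceState_0}{\traceState_1} \in \HPtransition{A}$ I first obtain the witness $\HPtransitionPair{\traceStateprime_0}{\traceStateprime_1} \in \HPtransition{B}$ supplied by $\worldloweq{A}{B}{\Hsymbol}$, and then overwrite its initial state on each $x \in \Hsymbol'$ with $\traceState_0(x)$. Because $\Hsymbol' \cap \variableSet{B} = \emptyset$, this overwrite touches neither $\FV{B}$ nor $\BV{B}$, so coincidence yields a genuine execution of $B$ from the modified state, and bound-effect carries the $\Hsymbol'$-values unchanged to its final state. On the $A$-side the $\Hsymbol'$-values are also unchanged, since $\Hsymbol' \cap \variableSet{A} = \emptyset$ forces $\traceState_1 = \traceState_0$ on $\Hsymbol'$; hence the two executions now agree on $\Hsymbol'$ in addition to $\Hsymbol$, giving $\worldloweq{A}{B}{\Hsymbol \cup \Hsymbol'}$ after a short case split.

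Items~\ref{theorem:eq-decomposition} and~\ref{theorem:eq-loop} carry the main work and share a single ``re-rooting'' idea. For sequential composition I split an execution of $A;C$ as $\HPtransitionPair{\traceState_0}{\traceState_1} \in \HPtransition{A}$ and $\HPtransitionPair{\traceState_1}{\traceState_2} \in \HPtransition{C}$, apply the two hypotheses to get matching executions of $B$ and $D$, and then face the central difficulty: the $D$-witness starts at some $\mu_1$ agreeing with $\traceState_1$ only on $\Hsymbol$, while the $B$-witness ends at $\traceStateprime_1$, which likewise agrees with $\traceState_1$ only on $\Hsymbol$, so the two need not coincide and cannot simply be concatenated. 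The hypothesis $\FV{D} \subseteq \Hsymbol$ rescues this: since $\mu_1$ and $\traceStateprime_1$ agree on $\Hsymbol \supseteq \FV{D}$, coincidence produces an execution of $D$ re-rooted at $\traceStateprime_1$ whose final state still agrees on $\Hsymbol$ with the original $D$-witness, hence with $\traceState_2$; concatenating with $\HPtransitionPair{\traceStateprime_0}{\traceStateprime_1}$ yields the required execution of $B;D$. The loop case lifts this to $\worldloweq{A^*}{B^*}{\Hsymbol}$ in the sense of Definition~\ref{def:equivlance} by induction on the iteration count $n$: at each step I use $\worldloweq{A}{B}{\Hsymbol}$ to obtain a one-step witness and then, exactly as above but now using $\FV{B} \subseteq \Hsymbol$, re-root it at the previously constructed state $\traceStateprime_k$, so the witnessing states chain into one run of $B^*$ agreeing with the $A^*$-run on $\Hsymbol$ at every iteration boundary. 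Both directions follow by the symmetry of the definition, which is precisely why both $\FV{C},\FV{D}$ (respectively $\FV{A},\FV{B}$) appear in the hypotheses.

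I expect the main obstacle to be exactly this re-rooting step shared by items~\ref{theorem:eq-decomposition} and~\ref{theorem:eq-loop}. Because the equivalence relation is existential and permits witnessing states to drift off $\Hsymbol$, concatenation is legitimate only when the free variables of the downstream program are confined to $\Hsymbol$, and even then it requires the coincidence property to rebuild a valid execution---replicating nondeterministic choices---from a state that agrees merely on $\Hsymbol$. Keeping the bookkeeping straight between ``agrees on $\Hsymbol$'' and ``agrees on $\variableSet{\cdot}$'' across these stitched executions, and verifying the induction invariant in the loop case, is the part that demands the most care.
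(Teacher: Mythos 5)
Your proposal is correct and follows essentially the same route as the paper: the paper likewise reduces items~\ref{theorem:eq-decomposition} and~\ref{theorem:eq-loop} to the coincidence property of hybrid programs (its Lemma~\ref{theorem:eq-state-eq}, citing Lemma~4 of \cite{platzer2017complete}), performing exactly your ``re-rooting'' step to reattach the $D$-witness (respectively the per-iteration $\beta$-witness) at the final state of the $B$-witness, with the loop case handled by induction on the iteration count. Your arguments for items~\ref{theorem:eq-self}--\ref{theorem:eq-unmodified} are simply a more explicit unfolding (via bound-effect and coincidence) of what the paper dispatches as ``by the definition of $\Hequivalence$,'' and your inlined chaining in item~\ref{theorem:eq-loop} makes the agreement at every iteration boundary, required by Definition~\ref{def:equivlance}, more visible than the paper's appeal to Property~\ref{theorem:eq-decomposition}.
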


Sequential composition (Property~\ref{theorem:eq-decomposition}) is particularly useful. The condition $\FV{C} \cup \FV{D} \subseteq \Hsymbol$ ensures that $\Hsymbol$ includes all variables that might affect the evaluation of programs $C$ and $D$.  We use this property when considering \Hequivalence of $\inlineCode{ctrl; plant}$ and $\attacked{\inlineCode{ctrl; plant}}{S_A}=\attacked{\inlineCode{ctrl}}{S_A}\inlineCode{; plant}$. In particular, if $\Hsymbol$ includes the actuators by which the controller interacts with the physical environment, then $\worldloweq{\inlineCode{ctrl}}{\attacked{\inlineCode{ctrl}}{S_A}}{\Hsymbol}$ ensures that the physical dynamics (i.e., program \inlineCode{plant}) can evolve identically in both the attacked and unattacked systems. 



\newcommand{\programA}{\inlineCode{voting}}
\newcommand{\programB}{\attacked{\programA}{\{v_{s_1}\}}}
\newcommand{\programC}{\alpha}

Consider the previously presented model of an autonomous vehicle with three velocity sensors shown in Figure~\ref{fig:eg-vehicle-robustness-safety},
and let $\programwoattack$ be its hybrid program \inlineCode{(ctrl; plant)$^*$}  
and $\programC$ be program $\programwoattack$ with $\programA$ excluded, i.e., $\programwoattack=(\programA ; \programC)^*$ and $\attacked{\programwoattack}{\{v_{s_1}\}} = (\programB; \programC)^*$. Here, 
$\FV{\programA}$ = $\FV{\programB}$ = $\{ v_p \}$, $\FV{\programC}$ = $\{ v_s, v_p, d_p, A, B, \epsilon \}$, and $\FV{\programA; \programC} = \FV{\programB; \programC}$ = $\{ v_p, d_p,  A, B, \epsilon \}$. 

By definition of $\eqprogram{}{}{\Hsymbol}$, we know
$\eqprogram{\programA}{\programB}{\{v_s, v_p \}}$. By Property~\ref{theorem:eq-unmodified}, we get
\[ \eqprogram{\programA}{\programB}{\FV{\programC}}
\]
Then by Property~\ref{theorem:eq-subset}, 
\[
  \eqprogram{(\programA; \programC)}{(\programB; \programC)}{\FV{\programA; \programC}}
\]
Since $\eqprogram{\programC}{\programC}{\FV{\programA; \programC}}$ (Property~\ref{theorem:eq-self}),  by Property~\ref{theorem:eq-decomposition} we know, 
\[
  \eqprogram{\programA; \programC}{\programB; \programC}{\FV{\programA; \programC}}
\]
By Property~\ref{theorem:eq-loop}, we get
\[
  \eqprogram{(\programA; \programC)^*}{(\programB; \programC)^*}{\FV{\programA; \programC}}
\]
The free variables of $\phi_{pre} \land \phi_{post}$ (shown in Figure~\ref{fig:eg-vehicle-sensing}) are $\{ v_p, d_p, A, B, \epsilon \}$, the same as $\FV{\programA; \programC}$. Thus, by Theorem~\ref{lemma:robust-safety}, we have $\robustsafety{\programwoattack}{\phi_{pre}}{\phi_{post}}{\{v_{s_1}\}}$.

\subsection{Self-Composition Approach} \label{sec:self-comp-prove-ni}

The second approach toward proving $\Hequivalence$ is inspired by self-composition \cite{barthe2004secure,terauchi2005secure}, a proof technique often used for proving noninterference \cite{Sabelfeld2003, goguen1982security}. Noninterference is a well-known strong information security property that, intuitively, guarantees that confidential inputs do not influence observable outputs, or dually guarantees that low-integrity inputs of a system do not affect high-integrity outputs.
%
%
%
Noninterference is a relational property: it compares two executions of a program with different low-integrity inputs. 


To develop an intuition for how the self-composition technique is used to prove noninterference,
consider the problem of checking whether low-integrity inputs of a deterministic program affect high-integrity outputs. 
Construct two copies of the program, renaming the program variables so that the variables in the two copies are disjoint. Set the high-integrity inputs in both copies to identical values but allow the low-integrity inputs to take different values. Now, sequentially compose these two programs together. If the composed program can terminate in a state where the corresponding high-integrity outputs differ, then the original program does not satisfy noninterference; conversely, if in all executions of the composed program, the high-integrity outputs are the same, then the original program satisfies noninterference. Intuitively, the composition of the two copies allows a single program to represent two executions of the original program, reducing checking a relational property of the original problem to checking a safety property of the composed program.

Using the same insights, 
we develop a self-composition technique for hybrid programs, allowing us to use existing verification tools such as  KeYmaera X (which can reason about safety properties of hybrid programs) to reason about \Hequivalence of two hybrid programs.


It is non-trivial to adapt the self-composition approach to hybrid programs due to the nondeterminism in hybrid programs. In particular, to show that two executions of the same hybrid program are in an appropriate relation, it may be necessary to force (some of) the nondeterminism in the two executions to resolve in the same way. For example, a nondeterministic choice in a hybrid program may represent a decision by a driver to brake or accelerate; the driver's decision is assumed to be a high-integrity input, and so the resolution of the nondeterministic choice should be the same in both executions. The self-composition must somehow couple the nondeterministic choices to ensure this. Nondeterministic assignment must be similarly handled, i.e., resolution of high-integrity nondeterminism must be coupled in the two executions.

%
%

An additional source of nondeterminism in hybrid programs is the duration of physical evolution. The program construct for physical dynamics, $x'=\theta \& \evolconstraint$, specifies that the variable(s) evolve according to the differential equation system $x'=\theta$ for an \emph{arbitrary duration} within the region described by formula $\evolconstraint$. The duration is chosen nondeterministically.

\medskip

Our self-composition technique takes as input a program $P$ and set of
sensor variables $S_A$ and creates a program that represents an
execution of each of $P$ and $\attacked{P}{S_A}$. We ensure that the composed program
(1) resolves high-integrity nondeterministic
choices and assignments the same in both executions; and (2)
has the same duration for corresponding physical evolutions.

To ensure that the two executions are appropriately related, we
produce a formula that encodes that the two executions have the same values for high-integrity variables; we assume this formula holds at the beginning of the executions, and require the formula to hold at the end of every control iteration. If we can prove that this is the case, then we have proved that
if the two executions (1) have the same values for high-integrity inputs at the beginning of their executions, (2) follow the same decisions on high-integrity nondeterminism during their executions, and (3) evolve for the same duration, then the two executions have the same values for high-integrity variables at the end of every control iteration.

Our self-composition approach has some limitations on the hybrid programs to which it applies. First, it is applicable only for hybrid programs of the form \inlineCode{(ctrl; plant)$^*$}. Second, it is applicable only for hybrid programs that have \emph{total} semantics for low-integrity inputs. Intuitively, it means if a program has a \emph{valid} execution for an input state $\traceState$ (i.e., exists a state $\traceStateprime$ such that $\HPtransitionPair{\traceState}{\traceStateprime} \in \HPtransition{\hybridprogram}$), then the program has a valid execution for every input state that differs with $\traceState$ only on low-integrity inputs.
The reason for this requirement is that self-composition uses a single program to represent two executions; this composed program has a valid execution only if both executions are valid. Since the two executions differ only on low-integrity inputs, our technique works only if semantics of the unattacked program is total on low-integrity inputs. A straightforward syntactic checker can be developed to check whether a hybrid program meets this requirement. More discussion about the limitation and the syntactic checker can be found in Appendix~\ref{appendix:limitation}.

\smallskip

The rest of this section describes in detail our self-composition approach: how to construct a single program that represents an execution of $P$ and $\attacked{P}{S_A}$, and then prove it correct. At a high level, our approach works by
(1) converting program $P$ to a canonical form $P_\mathit{canon}$ that makes high-integrity nondeterministic choices and assignments explicit; and then
(2) composing $P_\mathit{canon}$ and $\attacked{P_\mathit{canon}}{S_A}$ to ensure that the values of high-integrity nondeterministic choices and assignments, and evolution durations are the \emph{same} for both executions.



\medskip
\paragraph{Canonical Form for Hybrid Programs}



Given a hybrid program of the form $\HPpair{ctrl}{plant}$, we rewrite it to a canonical form $\HPpair{choices; ctrl'}{plant}$ such that
(1) each high-integrity nondeterministic choice $\alpha \cup \beta$ in $ctrl$ is turned into a construct $\mathit{if}~c~\mathit{then}~\alpha~\mathit{else}~\beta$\footnote{Construct $\mathit{if}~\phi~\mathit{then}~\alpha~\mathit{else}~\beta$ is syntactic sugar for \inlineCode{($?\phi$;$\alpha$) $\cup$ ($?\neg\phi$;$\beta$)}.} in $ctrl'$, and (2)  each high-integrity nondeterministic assignment $x:=*$ in $ctrl$ is turned into $x := c$ in $ctrl'$,
where $c$ is a fresh variable, and $choices$ contains a nondeterministic assignment $c:=*$. The program fragment $choices$ consists solely of a sequence of these nondeterministic assignments to these \emph{choice variables}. Note that $\HPpair{ctrl}{plant}$ is semantically equivalent to $\HPpair{choices; ctrl'}{plant}$.  

The goal of the canonical form is to make it easier to share the same nondeterministic choices and assignments between the two executions: when we compose the two programs, they will essentially share the same $choices$ program.




For example, Figure~\ref{fig:eg-canonical-form} shows the previously presented model of an autonomous vehicle with interior temperature control shown in Figure~\ref{fig:eg-vehicle-robustness-guard} whose hybrid program is rewritten to the canonical form (elided contents in Figure~\ref{fig:eg-canonical-form} are the same as Figure~\ref{fig:eg-vehicle-robustness-guard}). The program has a nondeterministic choice variable $c$ that represents a decision to brake or accelerate. This choice is considered high-integrity. 


\begin{figure}
\begin{lstlisting}[basicstyle={\linespread{\linespacing}\ttfamily\scriptsize}]
 ...
 HP choices `\codealign{\codepos}{\HPdef \HPassignS{$c$}{$*$} \label{line:acc-choices-cp1}}`
 HP ctrl `\codealign{\codepos}{\HPdef choices; ctrl$_{t}$; \HPassignment{$v_{s}$}{$v_{p}$} \HPassignment{$d_{s}$}{$d_{p}$}}`
 `\codealign{\codepos}{~~~~($\mathit{if}$ ($c$) $\mathit{then}$ accel $\mathit{else}$ brake); $t:=0$ \label{line:canonical-form-choice}}`
 ...
ProgramVariables.
 B $c$. `\javacommentalign{\cmtpos}{choice variable}`
 ...
\end{lstlisting}
\caption{\systemmodel of an autonomous vehicle with interior temperature control shown in Figure~\ref{fig:eg-vehicle-robustness-guard} whose hybrid program is rewritten to canonical form with a choice variable $c$} 
\label{fig:eg-canonical-form}
\end{figure}



\medskip

\paragraph{Hybrid Program with Renaming}

Note that program $\programwoattack$ and $\attacked{\programwoattack}{S_A}$ have the same set of variables. To compare executions of $\programwoattack$ and $\attacked{\programwoattack}{S_A}$ in a composition, we need to rename \emph{bound variables} in one of the two programs. Renaming is needed only for bound variables, since their values may differ during executions. Other variables are read-only and their values will be the same for executions of program $\programwoattack$ and $\attacked{\programwoattack}{S_A}$. Thus, these variables can be shared by both programs, and renaming is not needed.

To help us with renaming, we define \emph{renaming functions} that map all and only the bound variables of a program to fresh variables.

\begin{definitions}[Renaming function] \label{def:var-rename}  For hybrid program $\hybridprogram$,
  function $\xi : \variableSet{\hybridprogram} \rightarrow V$ (where $V$ is a set of variables) is a \emph{renaming function for $\hybridprogram$} if:
  \begin{enumerate}[itemsep=-1ex]
  \item $\xi$ is a bijection;
  \item For all $x \in \BV{\hybridprogram}$, $\xi(x) \not\in \variableSet{\hybridprogram}$;
  \item For all $x \in \variableSet{\hybridprogram}\setminus \BV{\hybridprogram}$, $\xi(x) = x$.
  \end{enumerate}
\end{definitions}

We write $\renaming{\hybridprogram}{\xi}$ for the program identical to $\hybridprogram$ but whose variables have been renamed according to function $\xi$. We also apply renaming functions to states and formulas, with the obvious meaning.


\medskip
\paragraph{Interleaved Composition}

We develop an \emph{interleaved composition} that composes two programs so their executions have the same values for high-integrity nondeterministic choices and assignments, and last the same evolution duration.

\begin{definitions}[Interleaved composition] \label{def:composition} Given a hybrid program $\programwoattack = (choices;\, ctrl; \, (x'=\theta \,\& \,\evolconstraint))^*$ in canonical form,
a renaming function $\xi$ for $\programwoattack$, a set of variables $S_A$ $\subseteq$ $\BV{\programwoattack}$, the \emph{interleaved composition of $\programwoattack$ under $S_A$ attack with renaming function $\xi$}, denoted $\compsym{\programwoattack}{S_A}{\xi}$, is the following program:
\begin{align*}
 (choices; \,  ctrl; \, & \subs{choices}{\xi}; \, \renaming{\attacked{ctrl}{S_A}}{\xi};  \\
& (x'=\theta, \renaming{x'=\theta}{\xi} \,\& \,\evolconstraint \land \renaming{\evolconstraint}{\xi}))^*
\end{align*}
Where function $\subs{choices}{\xi}$ replaces $c_i := *$ in program $choices$ with $\xi(c_i) := c_i$ for all variables $c_i$ in $\BV{choices}$.
\end{definitions}

The composition has the following properties: (1) control components from two programs are executed sequentially (i.e., $choices; \,  ctrl; \, \subs{choices}{\xi}; \, \renaming{\attacked{ctrl}{S_A}}{\xi}$); (2) plants are executed in parallel (i.e., $x'=\theta, \renaming{x'=\theta}{\xi}$)~\cite{muller2016component}; (3) the evolution constraint is a conjunction of the two evolution constraints (i.e., $\evolconstraint \land \renaming{\evolconstraint}{\xi}$), and (4) nondeterministic choices in $choices$ used by $ctrl$ and their counterparts used by $\renaming{\attacked{ctrl}{S_A}}{\xi}$ have the same values. 

For example, let $\programwoattack$ be the previously presented hybrid program (in canonical form) of an autonomous vehicle with interior temperature control shown in Figure~\ref{fig:eg-canonical-form}.
Figure~\ref{fig:eg-self-composition-more} shows $\compsym{\programwoattack}{\{ temp_s \}}{\xi}$, where function $\xi$ renames bound variables in $\attacked{\programwoattack}{S_A}$ with subscript 1. 
Program \inlineCode{ctrl$^\prime$} and \inlineCode{Plant$^\prime$} compose two programs as described in Definition~\ref{def:composition} (lines~\ref{line:acc-composed-ctrl}--\ref{line:acc-composed-plant2}).
Line~\ref{line:acc-choices-cp2} shows the effect of function $\subs{choices}{\xi}$: substituting \inlineCode{$c = *$} with \inlineCode{$c_1 = c$} in $choices$. The choice represents a decision to accelerate or brake, which is high-integrity. The resolution of this choice should be the same in both executions.


\renewcommand{\linespacing}{1.15}
\renewcommand{\codepos}{6.6cm}

\renewcommand\cmtpos{5.7cm}

\begin{figure}
\begin{lstlisting}[basicstyle={\linespread{\linespacing}\ttfamily\scriptsize}]    
Definitions.
 R $\epsilon$. `\javacommentalign{\cmtpos}{time limit of control}`
 R $A$. `\javacommentalign{\cmtpos}{acceleration rate}`
 R $B$. `\javacommentalign{\cmtpos}{braking rate}`
 R $T$. `\javacommentalign{\cmtpos}{target temperature}`
 B $eq_{\eqSet}$ `\codealign{\codepos}{\HPdef $v_p = v_{p_1} \land d_p = d_{p_1}$ \label{line:acc-eq-formula}}`
 B $\psi$ `\codealign{\codepos}{\HPdef $2Bd_{s} > v_{s}^2 + (A+B)(A\epsilon^2 + 2v_{s}\epsilon)$}`
 HP choices `\codealign{\codepos}{\HPdef \HPassignS{$c$}{$*$}}`
 HP ctrl$_{t}$ `\codealign{\codepos}{\HPdef \HPassignment{$temp_{s}$}{$temp_{p}$}}`
 `\codealign{\codepos}{~~(\phantom{$\cup$~}(\HPguard{$temp_s>T$} \HPassignS{$thermo$}{-1})}`
 `\codealign{\codepos}{~~~$\cup$ (\HPguard{$temp_s<T$} \HPassignS{$thermo$}{1})}`
 `\codealign{\codepos}{~~~$\cup$ ($?temp_s=T$)~)} `
 HP accel `\codealign{\codepos}{\HPdef $?\psi; a:= A$}`
 HP brake `\codealign{\codepos}{\HPdef $a := -B$}`
 HP ctrl `\codealign{\codepos}{\HPdef ctrl$_{t}$; \HPassignment{$v_{s}$}{$v_{p}$} \HPassignment{$d_{s}$}{$d_{p}$}}`
 `\codealign{\codepos}{~~~(if ($c$) then accel else brake); $t:=0$}`
 B $\psi_1$ `\codealign{\codepos}{\HPdef $2Bd_{s_1} > v_{s_1}^2 + (A+B)(A\epsilon^2 + 2v_{s_1}\epsilon)$}`
 HP choices$_1$ `\codealign{\codepos}{\HPdef \HPassignS{$c_1$}{$c$} \label{line:acc-choices-cp2}}`
 HP ctrl$_{t_1}$ `\codealign{\codepos}{\HPdef \HPassignment{$temp_{s_1}$}{*}}`
 `\codealign{\codepos}{~~(\phantom{$\cup$~}(\HPguard{$temp_{s_1}>T$} \HPassignS{$thermo_1$}{-1})}`
 `\codealign{\codepos}{~~~$\cup$ (\HPguard{$temp_{s_1}<T$} \HPassignS{$thermo_1$}{1})}`
 `\codealign{\codepos}{~~~$\cup$ ($?temp_{s_1}=T$)~)} `
 HP accel$_1$ `\codealign{\codepos}{\HPdef $?\psi_1; a_1:= A$}`
 HP brake$_1$ `\codealign{\codepos}{\HPdef $a_1 := -B$}`
 HP ctrl$_1$ `\codealign{\codepos}{\HPdef ctrl$_{t_1}$; \HPassignment{$v_{s_1}$}{$v_{p_1}$} \HPassignment{$d_{s_1}$}{$d_{p_1}$}}`
 `\codealign{\codepos}{~~~(if ($c_1$) then accel$_1$ else brake$_1$); $t_1:=0$}`
 HP ctrl$^\prime$ `\codealign{\codepos}{\HPdef choices; ctrl; choices$_1$; ctrl$_1$ \label{line:acc-composed-ctrl}}`
 HP plant$^\prime$ `\codealign{\codepos}{\HPdef $d_p^\prime = -v_p, v_p^\prime = a, temp_p^\prime = thermo, t^\prime = 1$ \label{line:acc-composed-plant1}}`
 `\codealign{\codepos}{~~~$d_{p_1}^\prime = -v_{p_1}, v_{p_1}^\prime = a_1, temp_{p_1}^\prime = thermo_1, t_1^\prime = 1 \;$}`
 `\codealign{\codepos}{\phantom{~~~~~}$\& \; (v_p \geq 0 \land v_{p_1} \geq 0 \land t \leq \epsilon \land t_1 \leq \epsilon)$ \label{line:acc-composed-plant2}}` 
ProgramVariables.
 B $c, c_1$. `\javacommentalign{\cmtpos}{choice variables}`
 R $t, t_1$. `\javacommentalign{\cmtpos}{clock variables}`
 R $d_p, d_{p_1}$. `\javacommentalign{\cmtpos}{distance to obstacle (physical)}`
 R $d_{s}, d_{s_1}$. `\javacommentalign{\cmtpos}{distance to obstacle (sensed)}`
 R $v_p, v_{p_1}$. `\javacommentalign{\cmtpos}{vehicle velocity (physical)}`
 R $v_{s}, v_{s_1}$. `\javacommentalign{\cmtpos}{vehicle velocity (sensed)}`
 R $a, a_1$. `\javacommentalign{\cmtpos}{acceleration of the vehicle}`
 R $temp_{s}, temp_{s_1}$. `\javacommentalign{\cmtpos}{interior temperature (sensed)}`
 R $temp_p, temp_{p_1}$. `\javacommentalign{\cmtpos}{interior temperature (physical)}`
 R $thermo, thermo_1$. `\javacommentalign{\cmtpos}{rates of change for temperature }`
Problem.
 $eq_{\eqSet}$ $\rightarrow$ [(ctrl$^\prime$; plant$^\prime$)$^*$]$eq_{\eqSet}$ `\label{line:acc-composed-problem}`
\end{lstlisting}
\caption{Interleaved composition of the hybrid program (in canonical form) modeling an autonomous vehicle with interior temperature control shown in Figure~\ref{fig:eg-canonical-form}} \label{fig:eg-self-composition-more}
\end{figure}

\medskip

\paragraph{Proving $\Hequivalence$ with an Interleaved Composition}

%
Given an interleaved composition $\compsym{\programwoattack}{S_A}{\xi}$, to prove that two programs are $\Hequivalent$ on a set $\Hsymbol$, we need to first identify a set $\eqSet$ of high-integrity variables on which the evaluation of variables in $\Hsymbol$ depend. Then we construct a formula to express that the two program executions have the same values for variables in set $\eqSet$, and finally prove that, for any execution of the composition, if the formula holds initially, it would hold at the end of every control loop iteration of the execution. 



\begin{definitions}[Equivalence formula] \label{def:low-eq-expressions}
  For a set ${\eqSet}$ of variables, a renaming function $\xi$ such that $\eqSet \subseteq dom(\xi)$, the equivalence formula of ${\eqSet}$ and $\xi$, denoted $eq_{\eqSet}^{\xi}$, is defined as:
\[
  eq_{\eqSet}^{\xi} \equiv \bigwedge_{x \in \eqSet} (x = \renaming{x}{\xi})
\]
\end{definitions}
Then the desired property is, for any execution of the composition, if the equivalence formula holds at the beginning of an execution, it holds at the end of every control loop iteration of the execution. That means, we want to prove the following: 
\[
  eq_{\eqSet}^{\xi} \rightarrow [\compsym{\programwoattack}{S_A}{\xi}]eq_{\eqSet}^{\xi}
\]

For example, $eq_{\eqSet}$ in Figure~\ref{fig:eg-self-composition-more} (line~\ref{line:acc-eq-formula}) encodes that the two executions have the same position ($d_p=d_{p_1}$) and velocity ($v_p=v_{p_1}$). The desired property is shown at line~\ref{line:acc-composed-problem}.

We have proven this property using Keymaera X. 
Intuitively, proving this property means that for any execution of the autonomous vehicle model, whether or not its temperature sensor is compromised, 
if the vehicle starts with the same position and velocity, makes the same control decisions for acceleration and brake, and runs for the same duration, it would end with the same position and velocity.

\medskip
\paragraph{Soundness}

The soundness theorem links the self-composition approach with proving $\Hequivalence$.
Proof of this theorem is based on trace semantics of hybrid programs~\cite{platzer2007temporal, jeannin2014dtl} and can be found in Appendix~\ref{appendix:sound-proof}.

\begin{theorem}[Soundness of the self-composition approach] \label{theorem:soundness-composition}
For hybrid program $\programwoattack$ and $\programwoattack_{c}$, a set $S_A$ $\subseteq$ $\BV{\programwoattack}$, a renaming function $\xi$ of $\programwoattack_{c}$, a set of variables $\eqSet \subseteq \BV{\programwoattack}$, and a set $\Hsymbol$ $\subseteq$ $\eqSet$, if $\programwoattack_{c}$ is $\programwoattack$ in canonical form, $S_A \cap \eqSet = \emptyset$, and $eq_{\eqSet}^{\xi} \rightarrow [\compsym{\programwoattack_{c}}{S_A}{\xi}]eq_{\eqSet}^{\xi}$, then 
$\eqprogram{\programwoattack}{\attacked{\programwoattack}{S_A}}{\Hsymbol}$.
\end{theorem}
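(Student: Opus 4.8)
The plan is to prove the two directions of $\Hequivalence$ (Definition~\ref{def:equivlance}) separately, after first reducing to the canonical program. Since $\Hequivalence$ depends only on the transition relations of the loop bodies, and $\programwoattack_c$ is semantically equivalent to $\programwoattack$, it suffices to establish $\eqprogram{\programwoattack_c}{\attacked{\programwoattack_c}{S_A}}{\Hsymbol}$: the fresh choice variables introduced by canonicalization lie outside $\Hsymbol$, $\eqSet$, and $S_A$, and because canonicalization only touches high-integrity choices while the attack only rewrites assignments to the low-integrity sensors in $S_A$, the two operations commute, so $\attacked{\programwoattack_c}{S_A}$ and $\attacked{\programwoattack}{S_A}$ agree as transition relations on the original variables. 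One direction is then immediate: since the attack replaces $v:=\theta$ by $v:=*$ for $v\in S_A$, we have $\HPtransition{\alpha}\subseteq\HPtransition{\attacked{\alpha}{S_A}}$ for the bodies, so every execution of $\programwoattack_c$ is literally an execution of $\attacked{\programwoattack_c}{S_A}$ and agrees with itself on $\Hsymbol$. All the work is in the other direction: given an execution $\traceState_0,\ldots,\traceState_n$ of $\attacked{\programwoattack_c}{S_A}$, I must produce a matching execution of $\programwoattack_c$ that agrees on $\Hsymbol$ at every loop boundary.

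The central step is a decomposition lemma connecting the interleaved composition to pairs of executions. Working in the trace semantics of hybrid programs (which records intermediate states and so lets me reason about the interleaved control blocks and the parallel plant), I would show that a trace of $\compsym{\programwoattack_c}{S_A}{\xi}$ is exactly the merge of (i) a trace of the body of $\programwoattack_c$ on the original variables and (ii) a trace of the body of $\renaming{\attacked{\programwoattack_c}{S_A}}{\xi}$ on the renamed variables, where the two traces make the same high-integrity choices (forced by $\subs{choices}{\xi}$, which sets $\xi(c_i):=c_i$) and evolve their plants for the same duration (forced by running the two ODE systems in parallel under the conjoined evolution constraint). Here I use crucially that the attack leaves \inlineCode{plant} unchanged, so both copies run the same dynamics and a shared duration is consistent whenever the high-integrity actuator variables agree.

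Given this lemma, I would build the matching execution by induction on loop iterations, interleaving two facts. First, totality on low-integrity inputs (the applicability restriction discussed above) lets me extend the unattacked copy: at each iteration the unattacked control differs from the given attacked control only on the $S_A$ sensor values, so whenever the attacked copy completes an iteration, the unattacked copy has a completing iteration with the same forced choices and plant duration. Second, the hypothesis $eq_{\eqSet}^{\xi}\rightarrow[\compsym{\programwoattack_c}{S_A}{\xi}]eq_{\eqSet}^{\xi}$, being a box over a starred body, guarantees that $eq_{\eqSet}^{\xi}$ holds after \emph{every} number of iterations, i.e.\ at every loop boundary (and the lock-step alignment of Definition~\ref{def:equivlance} is automatic, since one composition-body iteration runs exactly one iteration of each copy). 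I would start the two copies from states agreeing on $\eqSet$, which is possible because $S_A\cap\eqSet=\emptyset$, so $eq_{\eqSet}^{\xi}$ holds initially. The agreement on $\eqSet$ re-established after each iteration is precisely what validates the totality precondition (agreeing on the high-integrity variables in $\eqSet$ means differing only on low-integrity state) for stepping into the next iteration, closing the induction.

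Assembling these, the original-variable projection of the constructed composition trace is an execution $\traceStateprime_0,\ldots,\traceStateprime_n$ of the body of $\programwoattack_c$ whose choices and plant durations match the given one, and by the composition's equivalence property it satisfies $\worldloweq{\traceStateprime_k}{\traceState_k}{\eqSet}$ for every $k$; since $\Hsymbol\subseteq\eqSet$, it agrees on $\Hsymbol$, giving the matching execution. The hard part will be this mutual dependence between totality and the equivalence invariant: totality is what lets the unattacked copy take each step, but its precondition (that the two copies differ only on low-integrity variables) is supplied only by the equivalence property, which in turn can be invoked only once the composition trace exists. Threading these together through the loop, together with formalizing the decomposition of a composition trace into the two coupled component traces (including the shared plant duration), is where the bulk of the careful argument lies.
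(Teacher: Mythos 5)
Your proposal is correct and follows the paper's own route in its essentials: the paper also proves Theorem~\ref{theorem:soundness-composition} in the trace semantics of hybrid programs, via exactly your ``decomposition lemma'' split into two halves---Lemma~\ref{lemma:composition-single} (a trace of the body of $\programwoattack$, paired with an $\eqSet$-equivalent start state, embeds into a trace of the composed body with shared choice values and, by Assumption~\ref{well-formed-inputs}, a shared plant duration under the conjoined evolution constraint) and Lemma~\ref{lemma:composition-both-valid-single} (any composed-body trace projects to traces of both copies)---followed by the same induction on loop iterations that threads the totality assumption against the box hypothesis $eq_{\eqSet}^{\xi} \rightarrow [\compsym{\programwoattack_{c}}{S_A}{\xi}]eq_{\eqSet}^{\xi}$ (which, being a box over a star, re-establishes the equivalence formula after every iteration count), an explicit renaming transfer (Lemma~\ref{lemma:renaming}), and the final weakening from $\eqSet$ to $\Hsymbol$ via Property~\ref{theorem:eq-subset}. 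You deviate in two places, both gains in rigor. First, you discharge the unattacked-to-attacked direction of the symmetric Definition~\ref{def:equivlance} outright by monotonicity---$\HPtransition{v:=\theta} \subseteq \HPtransition{v:=*}$ and all combinators are monotone, with no ODE case since the threat model confines $S_A$ assignments to \inlineCode{ctrl}---reserving the composition machinery for the attacked-to-unattacked direction alone. That is precisely the direction the paper's appendix under-specifies: its completeness lemma (Lemma~\ref{lemma:composition-plant}) builds composed traces extending a trace of the \emph{unattacked} body, and the mirror lemma needed to replay a given attacked execution inside the composition is left implicit (it does hold by the symmetric argument, using the iff form of Assumption~\ref{well-formed-inputs}, because $\subs{choices}{\xi}$ permits resolving the unattacked copy's $c_i := *$ to the attacked run's choice values). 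Second, you make explicit the reduction from $\programwoattack$ to $\programwoattack_{c}$, i.e., semantic equivalence of canonicalization plus its commutation with the attack (sound because attacked sensors are low-integrity and hence never canonicalized, and choice variables are fresh), which the paper asserts only informally in Section~\ref{sec:self-comp-prove-ni}. The one crux to preserve when formalizing is the one you already flag: totality's precondition (``differ only on low-integrity inputs'') is supplied at each loop boundary only by the preserved $eq_{\eqSet}^{\xi}$, so ``low-integrity'' must be read as ``outside $\eqSet$,'' matching the $\eqSet$-relative phrasing of Assumption~\ref{well-formed-inputs}; with that reading, your interdependent induction closes just as the paper's does.
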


Note that the condition $S_A ~\cap~ \eqSet = \emptyset$ indicates that the adversary cannot compromise high-integrity variables. 




\medskip
\paragraph{Applicability}
Our self-composition technique applies to a subset of problems of interest rather than general problems. In particular, our technique requires that two executions having the same duration at every control iteration for the plant, and identical values for high-integrity nondeterministic assignments.
Our self-composition technique \emph{cannot} be applied to compare two executions that evolve for different durations or that resolve high-integrity nondeterministic choices and high-integrity nondeterministic assignments differently.
However, these restrictions arise naturally for many systems.
First, requiring the same duration of evolution for both executions corresponds to the control system having the same frequency of operation. That is, the rate of the the system's control can't be influenced by the attacker.
%
Second, high-integrity non-deterministic choices and high-integrity non-deterministic assignments are used to model exactly the nondeterminism that cannot be influenced by the attacker. As such, they should be resolved the same in both executions. For example, when considering how a corrupted temperature sensor can affect a (non-autonomous) vehicle, the driver's decisions (i.e., whether to accelerate or brake) would be modeled with a high-integrity nondeterministic choice, since we are concerned with understanding the relationship between two executions where the driver makes the same decisions but in one execution the sensor is corrupted. If in the two executions the driver is making different choices, the two executions might diverge almost arbitrarily, even if the corrupted sensor has no security impact.
If, on the other hand, we want to use this technique to determine whether an autonomous vehicle's driving subsystem can be influenced by a corrupted temperature sensor, we would need a more precise model of the system that does not use nondeterministic choice between accelerating and braking to model the driving subsystem's decisions. That is, high-integrity nondeterministic choices are by assumption choices that cannot be influenced by the attacker.




\section{Case Studies} \label{sec:casestudies}
To demonstrate the feasibility and efficacy of our approach, we conduct three case studies of non-trivial systems. The first two case studies analyze robustness of safety with the decomposition approach, and the third one proves robustness of high-integrity state with the self-composition approach.

\subsection{Case Study: an Anti-lock Braking System}

System designers may wonder if the system is robustly safe against sensor attacks or if their countermeasures are effective. 
This case study demonstrates analyzing robustness of safety with the decomposition approach in an Anti-lock Braking System (ABS). 
An ABS is a safety braking system used on aircraft and vehicles. It operates by preventing the wheels from locking up during braking, thereby maintaining tractive contact with the road surface. ABS monitors the speed of wheels using the wheel-speed sensors. If the controller sees that one wheel is decelerating at a rate that couldn't possibly correspond to the vehicle's rate of deceleration, it reduces the brake pressure applied to that wheel, which allows it to turn faster. Once the wheel is back up to speed, it applies the brake again \cite{Nielsen_2000}. 

\medskip

\paragraph{Modeling ABS} Figure~\ref{fig:hp-abs} shows a model of an ABS system\cite{tanelli2008robust, bhivate2011modelling}. The model assumes a single wheel
and uses a simple controller that turns on and off maximum braking torque. Intuitively, ABS systems are designed to achieve the maximum friction under certain circumstances (e.g., braking on icy road surface). They achieve this by maintaining an ideal slip ratio (e.g.,  \inlineCode{$\lambda_{ref}$} in Figure~\ref{fig:hp-abs}). Our controller switches the brake on and off based on the calculated slip ratio (\inlineCode{$\lambda_c$}) and reference slip ratio (lines~\ref{line:hp-abs-ctrl1}--\ref{line:hp-abs-ctrl3}). The calculated slip ratio  is computed using sensed wheel speed and vehicle speed (lines \ref{line:hp-abs-cal1}--\ref{line:hp-abs-cal2}). The physical slip ratio (\inlineCode{$\lambda_p$}) depends on physical wheel speed ($v_p$) and vehicle speed ($w_p$), which are affected by braking torque (\inlineCode{$T_b$}) and adhesion coefficient (\inlineCode{$\mu_p$}) that depends on the physical slip ratio (line~\ref{line:hp-abs-cal3}). 

\renewcommand\codepos{6.9cm}
\renewcommand\cmtpos{6.5cm}

\begin{figure}
  \begin{center}
\begin{lstlisting}[basicstyle=\linespread{\linespacing}\ttfamily\scriptsize]
Definitions.
 HP ctrl `\codealign{\codepos}{\HPdef \HPassignment{$\omega_s$}{$\omega_p$} \HPassignment{$v_s$}{$v_p$}}` `\label{line:hp-abs-cal1}`
 `\codealign{\codepos}{~~~\HPassignment{$\lambda_c$}{$\cfrac{v_s - \omega_s * R}{v_s}$} \HPassignment{$\lambda_p$}{$\cfrac{v_p - \omega_p * R}{v_p}$}}`  `\label{line:hp-abs-cal2}`
 `\codealign{\codepos}{~~ \HPassignment{$\mu_p$}{$C_1(1-e^{-C_2\lambda_p})-C_3\lambda_p$}}`  `\label{line:hp-abs-cal3}`
 `\codealign{\codepos}{~~(\phantom{$\cup$~}($\HPguard{\lambda_c < \lambda_{ref}}$ \HPassignment{BRAKE}{0} \HPassignS{$T_b$}{0})}` `\label{line:hp-abs-ctrl1}`
 `\codealign{\codepos}{~~~$\cup$ ($\HPguard{\lambda_c = \lambda_{ref}}$ ?True)}` `\label{line:hp-abs-ctrl2}`
 `\codealign{\codepos}{~~~$\cup$ ($\HPguard{\lambda_c > \lambda_{ref}}$ \HPassignment{BRAKE}{1} \HPassignS{$T_b$}{1200})~); $\HPassignS{t}{0}$}` `\label{line:hp-abs-ctrl3}`
 HP Plant `\codealign{\codepos}{\HPdef $v_p'=\cfrac{- \mu_p F_N }{m}, \omega_p'=\cfrac{\mu_p F_N R- T_b}{J}, t'=1$}` `\label{line:hp-abs-plant1}`
 `\codealign{\codepos}{\phantom{~~~~~}$\&~v_p \geq 0 \land \omega_p \geq 0 \land t \leq \epsilon$}` `\label{line:hp-abs-plant2}`
 B $\phi_\mathit{pre}$ `\codealign{\codepos}{\HPdef ($v_p$ = 100 $\land$ $\omega_p$ $\geq$ 0)}` `\label{line:hp-abs-init}`
 B $\phi_\mathit{post}$ `\codealign{\codepos}{\HPdef ($v_p$ > 25 $\rightarrow$ $\omega_p$ $\geq$ 1)}` `\label{line:hp-abs-post}`
 R $\epsilon$. `\javacommentalign{\cmtpos}{control interval}`
 R $C_1, C_2, C_3.$ `\javacommentalign{\cmtpos}{constant for computing $\mu$}`
 R $J, R.$ `\javacommentalign{\cmtpos}{wheel inertia and wheel radius}`
 R $F_N, m.$ `\javacommentalign{\cmtpos}{normal force and vehicle mass}`
 R $\lambda_{ref}.$ `\javacommentalign{\cmtpos}{reference value of wheel slip ratio}`
ProgramVariables.
 R BRAKE. `\javacommentalign{\cmtpos}{brake status}`
 R $T_b$. `\javacommentalign{\cmtpos}{braking torque}`
 R $\omega_p, \omega_s$. `\javacommentalign{\cmtpos}{wheel speed (physical and sensed )}`
 R $v_p, v_s$. `\javacommentalign{\cmtpos}{vehicle speed (physical and sensed)}`
 R $\lambda_p, \lambda_c$. `\javacommentalign{\cmtpos}{wheel slip (physical and calculated)}`
 R $\mu_p$. `\javacommentalign{\cmtpos}{adhesion coefficient}`
 R $t$. `\javacommentalign{\cmtpos}{clock variable}`
Problem.
  $\phi_\mathit{pre} \rightarrow [(\text{ctrl}; \text{plant})^*]\phi_\mathit{post}$
\end{lstlisting}
\end{center}
\caption{\systemmodel of an ABS system}
\label{fig:hp-abs}
\end{figure}

The initial condition of the ABS system ($\phi_{pre}$) is that the vehicle is moving at a high speed and its wheel speed is not negative (line~\ref{line:hp-abs-init}). 
The safety condition 
($\phi_{post}$) is that the vehicle's wheel should not lock if the current vehicle speed is large (line \ref{line:hp-abs-post}) \cite{solyom2004synthesis}. 

\smallskip

\paragraph{Modeling Non-invasive Attack on ABS}

Previous research has demonstrated attacks on ABS through physical channels \cite{shoukry2013non}. By placing a thin electromagnetic actuator near the ABS wheel-speed sensors, an attacker can inject magnetic fields to both cancel the true measured signal and inject a malicious signal, thus spoofing the measured wheel speeds.
Such an attack is a \attackname, where $S_A=\{\omega_s\}$, on the wheel-speed sensor. Let $\programwoattack$ be the hybrid program modeling an ABS system shown in Figure~\ref{fig:hp-abs}. Then $\attacked{\programwoattack}{\{ \omega_s \}}$ is program $\programwoattack$ with line~\ref{line:hp-abs-cal1} changed into the following:
\[
\texttt{ctrl}  \equiv \highlight{\HPassignment{\omega_s}{*}} ~ \HPassignS{v_s}{v_p}
\]
Program $\programwoattack$ is not robustly safe when the sensor $\omega_s$ is compromised: assuming $\safety{\programwoattack}{\phi_{pre}}{\phi_{post}}$ holds, $\safety{\attacked{\programwoattack}{\{\omega_s\}}}{\phi_{pre}}{\phi_{post}}$ doesn't necessarily hold, since $\omega_s$ can be an arbitrary value.


\smallskip

\paragraph{Designing Robustly Safe ABS System}
System designers, in attempts to make ABS system modeled in Figure~\ref{fig:hp-abs} safer, would be confident in their design if the system with countermeasures can be proven to be robustly safe against the attack. 

Assume that designers deploy three wheel-speed sensors and a majority voting scheme in the ABS system modeled in Figure~\ref{fig:hp-abs}. The countermeasure can be modeled by changing line~\ref{line:hp-abs-cal1} in Figure~\ref{fig:hp-abs} into the \inlineCode{ctrl $\, \equiv \,$ voting; $~\HPassignment{v_s}{v_p}$}, where \inlineCode{voting} is the following: 
\begin{align*}
  \texttt{voting}
  \equiv \; & \HPassignment{\omega_{s_1}}{\omega_p} \; \HPassignment{\omega_{s_2}}{\omega_p} \; \HPassignment{\omega_{s_3}}{\omega_p}  \\
         & \texttt{if~} (\omega_{s_1} = \omega_{s_2} \lor \omega_{s_1} = \omega_{s_3}) \\
         & \texttt{then~} \HPassignS{\omega_{s}}{\omega_{s_1}} \texttt{~else~} \HPassignS{\omega_{s}}{\omega_{s_2}}                         
\end{align*}
Using the decomposition approach, we can prove that such an ABS system is robustly safe if only one wheel-speed sensor is compromised. The proof can be found in Appendix~\ref{appendix:proofs}.


\subsection{Case study: Boeing 737-MAX}
Robustness of safety is a relational property: if the original system is safe then the attacked system will be safe too. 
Importantly, this separates reasoning about the implications of sensor attacks from reasoning directly about safety properties.
Proving a system's safety is often labor-intensive and may even be epistemically problematic. 
For example, many systems must be verified and validated empirically because their correctness properties are not possible to state in a formal language.
However, when it is not easy or even impossible to formally verify safety, it is often still possible to prove that compromised sensors do not affect the safety property. 

To demonstrate this advantage of relational reasoning, we present a case study inspired by the Boeing 737-MAX Maneuvering Characteristics Augmentation System (MCAS)~\cite{mcaswiki}. The 737-MAX's dynamics are extremely complicated, and proving properties about similar stabilization systems is an open challenge in hybrid systems verification \cite{heidlauf2018verification}. 
Nonetheless, we are able to analyze robustness of safety against faults or attacks on the angle of attack (AOA) sensor used by the 737-MAX MCAS, even without an analysis of the system's overall safety property or the MAX's flight dynamics. 

\smallskip

\paragraph{Modeling MCAS}

The MCAS
caused at least two deadly crashes in 2019\cite{boeing2}.
MCAS was added to compensate for instability induced by the 737-MAX's new engines.
Adding new engines to an existing airframe resulting in an aircraft whose nose tended to pitch upward, risking stalls.
The MCAS adjusts the plane's horizontal stabilizer in order to push the nose down when the aircraft is operating in manual flight at an elevated angle of attack (AOA). 
In many 737-MAX planes, the MCAS is activated by inputs from only one of the airplane's two angle of attack sensors. In both 2019 crashes, the MCAS was triggered repeatedly due to a failed AOA sensor. These false readings caused the MCAS software to repeatedly push the plane's nose down, ultimately interacting with manual inputs in a way that caused violently parabolic flight paths terminating in lost altitude and an eventual crash.

Figure~\ref{fig:hp-boeing737max} shows a simplified model of the original MCAS. 
The controller, plane's flight dynamics, and manual control inputs are all left abstract:
the model focuses only on how values read by the left and right AOA sensors are used in MCAS. On each control iteration, one of the two AOA sensors is randomly chosen (\inlineCode{ctrl$_{aoa}$}) and the MCAS is activated using the value of the chosen sensor (line~\ref{line:737maxmcas}). 
In this model, we intentionally omit details about the flight controller, MCAS system, and flight dynamics. Even with a high-fidelity model\cite{marcos2001linear}, proving correctness for the 737-MAX MCAS requires advances in state-of-the-art reachability analysis for hybrid time systems; fortunately, relational reasoning allows us to nonetheless analyze robustness of the system against faults or attacks on the AOA sensors.

\renewcommand\codepos{6.5cm}
\renewcommand\cmtpos{6.5cm}

\begin{figure}
\begin{lstlisting}[basicstyle={\linespread{\linespacing}\ttfamily\scriptsize}]
Definitions.
 B $\phi_\mathit{pre}$    `\javacommentalign{\cmtpos}{preconditions (abstract)}`
 B $\phi_\mathit{post}$    `\javacommentalign{\cmtpos}{functional safety property (abstract)}`
 HP plant. `\javacommentalign{\cmtpos}{plane's dynamics (abstract)}`
 HP MCAS. `\javacommentalign{\cmtpos}{MCAS actuation (abstract)}`
 HP ctrl$_{aoa}$ `\codealign{\codepos}{$\equiv$ (\quad($s_L := aoa_p$; $s_R := *$)  \label{line:chooseAOA1}}`
 `\codealign{\codepos}{~~~$\cup$~($s_L := *$; $s_R := aoa_p$)~);}`
 `\codealign{\codepos}{~~~($aoa_s := s_L$ $\cup$ $aoa_s := s_R$) \label{line:chooseAOA2}}`
 HP ctrl `\codealign{\codepos}{$\equiv$ ctrl$_{aoa}$; MCAS($aoa_s$) \label{line:737maxmcas}}`
ProgramVariables.
 R $aoa_p$.    `\javacommentalign{\cmtpos}{physical AOA}`
 R $s_L, s_R$.       `\javacommentalign{\cmtpos}{left and right AOA sensor}`
 R $aoa_s$.      `\javacommentalign{\cmtpos}{AOA used by MCAS}`
Problem.
 $\phi_\mathit{pre} \rightarrow$ [(ctrl;plant)$^*$]$\phi_\mathit{post}$
\end{lstlisting}
\caption{A Simple Model of Boeing737 Max flawed MCAS.}
 \label{fig:hp-boeing737max}
\end{figure}

%
%

\medskip

\paragraph{Reasoning for Robustness of Safety}


Program \inlineCode{ctrl$_{aoa}$} is not robustly safe if either of the AOA sensors is compromised, since $aoa_s$ can have false readings. Therefore, the system is not robustly safe for attacks on AOA sensors. 
Boeing's proposed fix to MCAS includes a requirement that the controller should compare inputs from both AOA sensors~\cite{boeingfix}, which can be modeled by adding the following at the end of \inlineCode{ctrl$_{aoa}$}:
\[
  (?s_L = s_R) ~\cup~ (? \neg (s_L = s_R); \, aoa_s := 0) 
\]
We can prove that the system with this fix is robustly safe. Let \inlineCode{ctrl$_{aoa}^\prime$} be \inlineCode{ctrl$_{aoa}$} with this simple fix. Then we know
$\eqprogram{\inlineCode{ctrl}_{aoa}^\prime}{\attacked{\inlineCode{ctrl}_{aoa}^\prime}{\{ s_L \}}}{\{aoa_s\}}$ and 
$\eqprogram{\inlineCode{ctrl}_{aoa}^\prime}{\attacked{\inlineCode{ctrl}_{aoa}^\prime}{\{ s_R \}}}{\{aoa_s\}}$.
Program \HPgeneralform with \inlineCode{ctrl$_{aoa}^\prime$}  is robustly safe by the decomposition approach. The proof is included in Appendix~\ref{appendix:proofs}.






\subsection{Case Study: An Autonomous Vehicle with an Internal Bus}
Figure~\ref{fig:eg-self-composition-more} shows the self-composition approach with a model of an autonomous vehicle with interior temperature control. However, the model doesn't account for any internal communication mechanisms.
In modern vehicles, Electronic Control Units (ECUs) oversee a broad range of functionality, including the drivetrain, lighting, and entertainment. They often communicate through an \emph{internal bus} \cite{checkoway2011comprehensive}. 

In this case study, we explore how to use the self-composition approach to analyze robustness of high-integrity state in a model of an autonomous vehicle with an internal bus that communicates both low-integrity messages (sensed temperature) and a high-integrity messages (sensed velocity). We are interested in whether the high-integrity state (i.e., velocity) is robust when the temperature sensor is compromised. 

\medskip

\paragraph{Modeling a Vehicle with an Internal Bus}

\renewcommand\codepos{6.6cm}
\renewcommand\cmtpos{7cm}

\begin{figure} \begin{lstlisting}[basicstyle={\linespread{\linespacing}\ttfamily\scriptsize}]    
Definitions.
 ...
 HP accel `\codealign{\codepos}{\HPdef $?\psi; busV:= A$}`
 HP brake `\codealign{\codepos}{\HPdef $busV := -B$}`
 HP ctrl$_{v}$ `\label{line:bus-ctrl-speed}` `\codealign{\codepos}{\HPdef \HPassignment{$v_{s}$}{$v_{p}$} \HPassignment{$d_{s}$}{$d_{p}$}}`
 `\codealign{\codepos}{~~~if ($c$) then accel else brake}`
 HP ctrl$_{t}$ `\codealign{\codepos}{\HPdef \HPassignment{$temp_{s}$}{$temp_{p}$} \label{line:bus-ctrl-temp-read}}`
 `\codealign{\codepos}{~~(\phantom{$\cup$~}(\HPguard{$temp_{s}>T$} \HPassignS{$busV$}{-1})}`
 `\codealign{\codepos}{~~~$\cup$~(\HPguard{$temp_{s}<T$} \HPassignS{$busV$}{1})}`
 `\codealign{\codepos}{~~~$\cup$~($?temp_{s}=T$)~)}`
 HP ctrl$_{bus}$ `\label{line:bus-temp-ctrl1}` `\codealign{\codepos}{\HPdef (~~(\HPguard{$busV = a$} ctrl$_{t}$;  \HPassignS{$busH$}{$1$})}`
 `\codealign{\codepos}{~~~$\cup$~~\HPassignS{$busH$}{$0$}~) \label{line:bus-temp-ctrl2}}`
 HP ctrl$_{r}$ `\codealign{\codepos}{\HPdef (~~(\HPguard{$busH = 0$} \HPassignS{$a$}{$busV$}) \label{line:bus-read1}}`
 `\codealign{\codepos}{~~~$\cup$~(\HPguard{$busH = 1$} \HPassignS{$thermo$}{$busV$})~) \label{line:bus-read2}}`
 HP ctrl `\codealign{\codepos}{\HPdef choices; ctrl$_{v}$; ctrl$_{bus}$; ctrl$_{r}$;  \HPassignS{$t$}{0}}` 
 HP plant `\label{line:bus-plant}` `\codealign{\codepos}{\HPdef $d_p^\prime = -v_p, v_p^\prime = a, temp^\prime = thermo, t^\prime = 1$}`
 `\codealign{\codepos}{\phantom{~~~~~~}$\& \; (v_p \geq 0 \land t \leq \epsilon)$}`
ProgramVariables.
 R $busV$. `\javacommentalign{\cmtpos}{value on the bus}`
 R $busH$. `\javacommentalign{\cmtpos}{header indicating the type of information}`
 ...
Problem.
 $\phi_\mathit{pre}$ $\rightarrow$ [(ctrl; plant)$^*$]$\phi_\mathit{post}$ 
\end{lstlisting}
\caption{\systemmodel (in canonical form) of an autonomous vehicle with an internal bus} \label{fig:vehicle-bus}
\end{figure}

Figure~\ref{fig:vehicle-bus} shows a model (in canonical form) of such a system (elided contents are the same as in the model previously presented in Figure~\ref{fig:eg-self-composition-more}).
%
We model the bus using two variables: a value variable ($busV$), which indicates the current value that sits on the bus, and a header variable ($busH$), which indicates the type of information that sits on the bus: $busH = 0$ for acceleration, $busH = 1$ for temperature. Exactly one message is communicated via the bus at each control loop iteration.
%
Acceleration messages have higher priority over thermostat messages. Program (\inlineCode{ctrl$_v$}) first sets the bus value to the next acceleration value.
Program \inlineCode{ctrl$_{bus}$} then checks if the value has changed from the existing acceleration value. If not, it activates temperature control (\inlineCode{ctrl$_t$}) to set the bus value to desired thermostat value (line~\ref{line:bus-temp-ctrl1}). Otherwise, $busH$ is sent to $0$ to indicate that a new acceleration value has arrived (line~\ref{line:bus-temp-ctrl2}).
Program \inlineCode{ctrl$_r$} reads a value off the bus and sets corresponding values based on the header (lines~\ref{line:bus-read1}--\ref{line:bus-read2}).

\medskip

\paragraph{Robust High-Integrity State}
We are interested in whether the vehicle's high-integrity state---$v_p$, the velocity of the vehicle---is robust when its low-integrity sensor ($temp_s$) is compromised. Specifically, we wonder whether $\eqprogram{P}{\attacked{\programwoattack}{\{ temp_s \}}}{\{v_p\}}$, where $\programwoattack$ is the model shown in Figure~\ref{fig:vehicle-bus}.
We can prove this using the self-composition approach.
Figure~\ref{fig:bus-vehicle-composed} shows  $\compsym{\programwoattack}{\{temp_s\}}{\xi}$, where $\xi$ renames variables in $\BV{\programwoattack}$ with a subscript 1 (we elide the descriptions of program variables introduced in Figure~\ref{fig:vehicle-bus}). 
By choosing the equivalence formula as  \inlineCode{$v_p = v_{p_1} \land d_p = d_{p_1} \land a = a_1$} (line~\ref{line:bus-composed-eq}), we are able to prove the desired property at line~\ref{line:bus-composed-problem}. 
Proving this property means for this vehicle, its high-integrity variable $v_p$, $d_p$, and $a$ are robust when its temperature sensor is compromised.
%
%
We have proven the model in Figure~\ref{fig:bus-vehicle-composed} using KeYmaera X. 


Note that the decomposition approach and self-composition approach may work well in different settings. The decomposition approach is easy to apply and works well when the sensor attack affects a small portion of the system, as in our first two case studies; by constrast, the self-composition approach can handle cases where the effect of the attack may be complicated---as in our third case study---but requires more effort to use. It is possible to combine the two techniques to prove robustness properties of complicated cases. For example, if we can identify that only a single component of a large system is affected by an attack, the self-composition approach can be used to prove robustness of this component, while the decomposition approach delivers the robustness proof of the whole system.

\renewcommand\codepos{6.5cm}
\renewcommand\cmtpos{5.8cm}

\begin{figure}
  \begin{lstlisting}[basicstyle=\linespread{\linespacing}\ttfamily\scriptsize]    
Definitions.
 ...
 B $eq_{\eqSet}$ `\codealign{\codepos}{\HPdef $v_{p} = v_{p_1} \land d_{p} = d_{p_1} \land a = a_1$ \label{line:bus-composed-eq}}`
 B $\psi$ `\codealign{\codepos}{\HPdef $2Bd_{s} > v_{s}^2 + (A+B)(A\epsilon^2 + 2v_{s}\epsilon)$}`
 HP choices `\codealign{\codepos}{\HPdef \HPassignS{$c$}{$*$}}`
 HP accel `\codealign{\codepos}{\HPdef $?\psi; busV:= A$}`
 HP brake `\codealign{\codepos}{\HPdef $busV := -B$}`
 HP ctrl$_{v}$ `\codealign{\codepos}{\HPdef \HPassignment{$v_{s}$}{$v_{p}$} \HPassignment{$d_{s}$}{$d_{p}$}}`
 `\codealign{\codepos}{~~~if ($c$) then accel else brake}`
 HP ctrl$_{t}$ `\codealign{\codepos}{\HPdef \HPassignment{$temp_{s}$}{$temp_{p}$}}`
 `\codealign{\codepos}{~~(\phantom{$\cup$~}(\HPguard{$temp_{s}>T$} \HPassignS{$busV$}{-1})}`
 `\codealign{\codepos}{~~~$\cup$~(\HPguard{$temp_{s}<T$} \HPassignS{$busV$}{1})}`
 `\codealign{\codepos}{~~~$\cup$~($?temp_{s}=T$)~)} `
 HP ctrl$_{bus}$ `\codealign{\codepos}{\HPdef (~~(\HPguard{$busV = a$} ctrl$_{t}$;  \HPassignS{$busH$}{$1$})}`
 `\codealign{\codepos}{~~~$\cup$~~\HPassignS{$busH$}{$0$}~)}`
 HP ctrl$_{r}$ `\codealign{\codepos}{\HPdef (~~(\HPguard{$busH = 0$} \HPassignS{$a$}{$busV$})}`
 `\codealign{\codepos}{~~~$\cup$~(\HPguard{$busH = 1$} \HPassignS{$thermo$}{$busV$})~)}`
 HP ctrl `\codealign{\codepos}{\HPdef ctrl$_{v}$; ctrl$_{bus}$; ctrl$_{r}$;  \HPassignS{$t$}{0}}` 
 B $\psi_1$ `\codealign{\codepos}{\HPdef $2Bd_{s_1} > v_{s_1}^2 + (A+B)(A\epsilon^2 + 2v_{s_1}\epsilon)$}`
 HP choices$_1$ `\codealign{\codepos}{\HPdef \HPassignS{$c_1$}{$c$} \label{line:bus-composed-choice-same}}`
 HP accel$_1$ `\codealign{\codepos}{\HPdef $?\psi_1; busV_1:= A$}`
 HP brake$_1$ `\codealign{\codepos}{\HPdef $busV_1 := -B$}`
 HP ctrl$_{v_1}$ `\codealign{\codepos}{\HPdef \HPassignment{$v_{s_1}$}{$v_{p_1}$} \HPassignment{$d_{s_1}$}{$d_{p_1}$}}`
 `\codealign{\codepos}{~~~if ($c_1$) then accel$_1$ else brake$_1$}`
 HP ctrl$_{t_1}$ `\codealign{\codepos}{\HPdef \HPassignment{$temp_{s_1}$}{*}}`
 `\codealign{\codepos}{~~(\phantom{$\cup$~}(\HPguard{$temp_{s_1}>T$} \HPassignS{$busV_1$}{-1})}`
 `\codealign{\codepos}{~~~$\cup$~(\HPguard{$temp_{s_1}<T$} \HPassignS{$busV_1$}{1})}`
 `\codealign{\codepos}{~~~$\cup$~($?temp_{s_1}=T$)~)} `
 HP ctrl$_{bus_1}$ `\codealign{\codepos}{\HPdef (~~(\HPguard{$busV_1 = a_1$} ctrl$_{t_1}$;  \HPassignS{$busH_1$}{$1$})}`
 `\codealign{\codepos}{~~~$\cup$~~\HPassignS{$busH_1$}{$0$}~)}`
 HP ctrl$_{r_1}$ `\codealign{\codepos}{\HPdef (~~(\HPguard{$busH_1 = 0$} \HPassignS{$a_1$}{$busV_1$})}`
 `\codealign{\codepos}{~~~$\cup$~(\HPguard{$busH_1 = 1$} \HPassignS{$thermo_1$}{$busV_1$})~)}`
 HP ctrl$_1$ `\codealign{\codepos}{\HPdef ctrl$_{v_1}$; ctrl$_{bus_1}$; ctrl$_{r_1}$;  \HPassignS{$t_1$}{0}}`
 HP ctrl$^\prime$ `\codealign{\codepos}{\HPdef choices; ctrl; choices$_1$; ctrl$_{1}$}` 
 HP plant$^\prime$ `\label{line:bus-composed-plant}` `\codealign{\codepos}{\HPdef $d_p^\prime = -v_p,  v_p^\prime = a, temp_p^\prime = thermo, t^\prime = 1$}`
 `\codealign{\codepos}{~~~$d_{p_1}^\prime = -v_{p_1},  v_{p_1}^\prime = a_1, temp_{p_1}^\prime = thermo_1, t_1^\prime = 1$}`
 `\codealign{\codepos}{\phantom{~~~~~}$\& \; (v_p \geq 0 \land v_{p_1} \geq 0 \land t \leq \epsilon \land t_1 \leq \epsilon)$}`
 ...
Problem.
 $eq_\eqSet$ $\rightarrow$ [(ctrl$^\prime$; plant$^\prime$)$^*$]$eq_\eqSet$ `\label{line:bus-composed-problem}`
\end{lstlisting}
\caption{Interleaved composition of the hybrid program (in canonical form) modeling an autonomous vehicle with an internal bus shown in Figure~\ref{fig:vehicle-bus}} \label{fig:bus-vehicle-composed}
\end{figure}






\section{Related work} \label{sec:relatedwork}

\textbf{Formal analysis of \attackcategory}
Lanotte et al. \cite{lanotte2017formal, lanotte2020formal} propose formal approaches to model and analyze sensor attacks with a process calculus. The threat model allows attacks that manipulate sensor readings or control commands to compromise state. 
Their model of physics is discrete and it focuses on timing aspects of attacks on sensors and actuators. In comparison, we analyze relational properties in systems whose dynamics are modeled with differential equations and we introduce techniques to establish proofs of these properties. 

Bernardeschi et al. \cite{bernardeschi2020formalization} introduce a framework to analyze the effects of attacks on sensors and actuators. Controllers of systems are specified using the formalism PVS~\cite{owre1992pvs}. The physical parts are assumed to be described by other modeling tools.
Their threat model is similar to ours: the effect of an attack is a set of assignments to the variables defined in the controller. Simulation is used to analyze effects of attacks. 
By contrast, we focus on formal analysis for the whole system and propose concrete proof techniques for relational properties.


\textbf{Analyzing relational properties of cyber-physical systems}
Akella et al. \cite{akella2013verification} use
trace-based analysis and apply model checking to verify information-flow properties for discrete models based on process algebra.
Prabhakar et al. \cite{prabhakar2013verifying} introduce a type system that enforces
noninterference for a hybrid system modeled as a programming language.
Nguyen et al. \cite{nguyen2019detecting} propose a static analysis that checks noninterference for hybrid automata.
%
Liu et al. \cite{liu2018secure} introduce an integrated architecture to provide provable security and safety assurance for cyber-physical systems. They focus on integrated co-development: language-based information-flow control using Jif~\cite{myers1999jflow} and a verified hardware platform for information-flow control. Their focus is not on sensor attacks. 

Bohrer et al. \cite{Bohrer2018} verify nondeducibility in hybrid programs, a noninterference-like guarantee. To do this, they introduce a very expressive modal logic that can explicitly express that formulas hold in a given world (i.e., state). By contrast, we use an existing logic (that has good tool support) to express and reason about a specific threat model.


Closely related to our work is that of Kol{\v{c}}{\'a}k et al. \cite{kolvcak2020relational} which introduces a relational extension of \differentiallogic.
A key contribution of their work is a new proof rule to combine two dynamics, allowing existing inference rules of \differentiallogic to be applied in a relational setting.
Similar to their work, our self-composition technique expresses relational properties by leveraging a composition of two programs whose variables are disjoint.
Unlike their work, our self-composition technique aims to prove relational properties that require some of the nondeterministic choices to be resolved in the same way in both executions.
For instance, our example shown in Figure~\ref{fig:eg-self-composition-more} is not directly expressible in their setting.
 We believe that the work by Kol{\v{c}}{\'a}k et al. \cite{kolvcak2020relational} is orthogonal to ours, and the two can be combined to express and prove more complicated relational properties. 


 

\textbf{Security analysis for CPSs}
Much work have focused on the security of cyber-physical systems (CPS), but
primarily from a systems security perspective rather than using formal
methods. Various attacks (and mitigations of these attacks) have been identified, including 
false data injection
\cite{smartcar_vuler}, replay attacks \cite{li2011hijacking}, relay attacks \cite{francillon2011relay},
spying \cite{Checkoway2011}, and
hijacking \cite{langner2011stuxnet}. Our work focuses on formal
methods for CPS security, ruling out entire classes of attacks.

\textbf{Mitigating sensor attacks}
Some work propose attack-resilient state estimation to defend against adversarial sensor attacks in cyber-physical systems~\cite{pajic2014robustness, pajic2016attack}. These methods model systems with bounded sensor noises as an optimization problem to locate potentially malicious sensors. Our work has a different formal model of sensor attacks and focuses on formal guarantees of robustness of systems under sensor attacks.


\section{Conclusion} \label{sec:conclusion}

We have introduced a formal framework for modeling and analyzing \attackcategory on cyber-physical systems.
We formalize two relational properties that relate executions in the original system and a system where some sensors have been compromised. The relational properties express the robustness of safety properties and the robustness of high-integrity state.

Both relational properties can be expressed in terms of an equivalence relation between programs, and we presented two approaches to reason about this equivalence relation, one based on decomposition and the other based on using a single program to represent executions of the original system and the attacked system. We have shown both of these approaches sound, and used them on three case studies of non-trivial cyber-physical systems.



This work focuses on sensors, but our approach can also be used to model and analyze attacks on actuators. 





\let\oldbibliography\thebibliography
\renewcommand{\thebibliography}[1]{%
  \oldbibliography{#1}%
  \setlength{\itemsep}{-5pt}%
}

\bibliographystyle{IEEEtran}    
\bibliography{securityanalysis,bib}

\appendices

\section{Definitions} \label{appendix:definitions}

\setlength{\abovedisplayskip}{1pt}
\setlength{\belowdisplayskip}{1pt}

We present the formal definitions of bound variables, free variables, and variable sets here. These definitions are exactly as given in \cite{Platzer18book,platzer2017complete}, and included for the reader's convenience.
 
\begin{definitions}[Bound variables] The set $\BV{\phi}$ of bound variables of \differentiallogic formula $\phi$ is defined inductively as:
   \begin{align*}
     \BV{\theta_1 \sim \theta_2} & = \emptyset  ~~~~\sim \in \{ <, \leq, =, >, \geq \} \\
     \BV{\neg \phi} & = \BV{\phi} \\
     \BV{\phi \lor \psi}  = \BV{\phi \land \psi}    &  = \BV{\phi} \cup \BV{\psi} \\
    \BV{\phi \rightarrow \psi}   &  = \BV{\phi} \cup \BV{\psi} \\
    \BV{\forall x.~ \phi} = \BV{\exists x.~ \phi}   &  = \{ x \} \cup \BV{\phi} \\
    \BV{\HPbox{\alpha}\phi}   &  = \BV{\alpha} \cup \BV{\phi}
  \end{align*}
  The set $\BV{\hybridprogram}$ of bound variables of hybrid program $\hybridprogram$, i.e., those may potentially be written to, is defined inductively as:
  \begin{align*}
    \BV{x:=\theta} = \BV{x:=*}  & = \{ x \} \\
     \BV{?\phi} & = \emptyset \\
    \BV{x'=\theta \,\& \,\phi}  & =   \{ x, x^\prime \}    \\
    \BV{\alpha ; \beta} = \BV{\alpha \cup \beta} & =  \BV{\alpha} \cup \BV{\beta}                    \\
    \BV{\alpha^*} &  = \BV{\alpha} 
  \end{align*}  
\end{definitions}

\begin{definitions}[Must-bound variables] The set $\MBV{\hybridprogram}$ $\subseteq$ $\BV{\hybridprogram}$ of most bound variables of hybrid program $\hybridprogram$, i.e., all those that must be written to on all paths of $\hybridprogram$, is defined inductively as:
  \begin{align*}
    \MBV{x:=\theta} = \MBV{x:=*}  & = \{ x \} \\
     \MBV{?\phi} & = \emptyset \\
    \MBV{x'=\theta \,\& \,\phi}  & =   \{ x, x^\prime \}    \\
    \MBV{\alpha \cup \beta} & =  \MBV{\alpha} \cap \MBV{\beta}                    \\
    \MBV{\alpha ; \beta} & =  \MBV{\alpha} \cup \MBV{\beta}                    \\
    \MBV{\alpha^*} &  = \emptyset
  \end{align*}  
\end{definitions}

\begin{definitions}[Free variables]
  The set $\FV{\theta}$ of variables of term $\theta$ is defined inductively as:
   \begin{align*}
     \FV{x} & = \{x\} \\
     \FV{c} & = \emptyset \\
     \FV{\theta_1 \oplus \theta_2} & = \FV{\theta_1} \cup \FV{\theta_2} ~~~\oplus \in \{ +, \times \}
   \end{align*}
  The set $\FV{\phi}$ of free variables of \differentiallogic formula $\phi$ is defined inductively as:
   \begin{align*}
     \FV{\theta_1 \sim \theta_2} & = \FV{\theta_1} \cup \FV{\theta_2} \\
     \FV{\neg \phi} & = \FV{\phi} \\
     \FV{\phi \lor \psi}  = \FV{\phi \land \psi}    &  = \FV{\phi} \cup \FV{\psi} \\
     \FV{\phi \rightarrow \psi}   &  = \FV{\phi} \cup \FV{\psi} \\
     \FV{\forall x.~ \phi} = \FV{\exists x.~ \phi}   &  = \FV{\phi} \setminus \{ x \} \\
     \FV{\HPbox{\alpha}\phi}   &  = \FV{\alpha} \cup (\FV{\phi} \setminus \MBV{\alpha})
   \end{align*}
   The set $\FV{\hybridprogram}$ of bound variables of hybrid program $\hybridprogram$ is defined inductively as:
   \begin{align*}
     \FV{x:=\theta}  & = \FV{\theta} \\
     \FV{x:=*}     & = \emptyset   \\
     \FV{?\phi} & = \FV{\phi} \\
     \FV{x'=\theta \,\& \,\phi}  & =   \{ x \} \cup \FV{\theta} \cup \FV{\phi}   \\
     \FV{\alpha \cup \beta} & =  \FV{\alpha} \cup \FV{\beta} \\
     \FV{\alpha ; \beta} & =  \FV{\alpha} \cup (\FV{\beta} \setminus \MBV{\alpha})                    \\
     \FV{\alpha^*} &  = \FV{\alpha} 
   \end{align*}  
 \end{definitions}
 \begin{definitions}[Variable sets] The set $\variableSet{\hybridprogram}$, variables of hybrid program $\hybridprogram$ is $\BV{\hybridprogram} \cup \FV{\hybridprogram}$. The set $\variableSet{\phi}$, variables of \differentiallogic formula $\phi$ is $\BV{\phi} \cup \FV{\phi}$. 
 \end{definitions}

\section{Proofs} \label{appendix:proofs}
\begin{proof}[\textbf{Proof of Theorem~\ref{lemma:robust-safety}}]
 $\eqprogram{\programwoattack}{\attacked{\programwoattack}{S_A}}{\FV{\phi_{pre} \land \phi_{post}}}$ means for any execution $\tracesym^q$ of $\attacked{\programwoattack}{S_A}$, there exists an execution $\tracesym^p$ of $\programwoattack$ that agrees on $\FV{\phi_{pre} \land \phi_{post}}$ at the starting state and the end of every control iteration. That means if the starting state of $\tracesym^q$ satisfies $\phi_{pre}$,  the starting state  of $\tracesym^p$ satisfies $\phi_{pre}$ (Lemma 3 from \cite{platzer2017complete}). 
Meanwhile, since $\HPproperty{\phi_{pre}}{[\programwoattack]}{\phi_{post}}$, the last state of $\tracesym^p$ satisfies $\phi_{post}$. And last states of $\tracesym^q$ and $\tracesym^p$ agree on free variables used in $\phi_{post}$, so the last state of $\tracesym^q$ satisfies $\phi_{post}$ (Lemma 3 from \cite{platzer2017complete}).
$\HPproperty{\phi_{pre}}{[\attacked{\programwoattack}{S_A}]}{\phi_{post}}$ holds.
\end{proof}

\begin{proof}[\textbf{Proof of Property~\ref{theorem:eq-self} to \ref{theorem:eq-unmodified} of Theorem~\ref{theorem:loop-free-properties}}]
  By the definition of $\Hequivalence$. 
\end{proof}

\begin{lemma} \label{lemma:eq-state-basic}
  $\Hequivalence$ of states is transitive, reflective, and symmetric.
\end{lemma}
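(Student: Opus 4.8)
The plan is to unfold Definition~\ref{def:equivlance-state} and observe that $\worldloweq{\traceState_1}{\traceState_2}{\Hsymbol}$ holds precisely when $\traceState_1(x) = \traceState_2(x)$ for every $x \in \Hsymbol$. Thus $\Hequivalence$ of states is nothing more than pointwise equality of valuations restricted to the variables in $\Hsymbol$, and each of the three claimed closure properties reduces to the corresponding property of the equality relation on $\realSet$, universally quantified over $x \in \Hsymbol$. Since equality on the reals is itself reflexive, symmetric, and transitive, and since a universal quantifier over $\Hsymbol$ preserves each of these properties, the result follows directly.

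Concretely, I would dispatch the three cases in turn. For reflexivity, for any state $\traceState \in \stateSet$ we have $\traceState(x) = \traceState(x)$ for every $x \in \Hsymbol$, so $\worldloweq{\traceState}{\traceState}{\Hsymbol}$. For symmetry, assuming $\worldloweq{\traceState_1}{\traceState_2}{\Hsymbol}$, the equalities $\traceState_1(x) = \traceState_2(x)$ flip to $\traceState_2(x) = \traceState_1(x)$ for all $x \in \Hsymbol$ by symmetry of $=$, giving $\worldloweq{\traceState_2}{\traceState_1}{\Hsymbol}$. For transitivity, from $\worldloweq{\traceState_1}{\traceState_2}{\Hsymbol}$ and $\worldloweq{\traceState_2}{\traceState_3}{\Hsymbol}$ I chain $\traceState_1(x) = \traceState_2(x) = \traceState_3(x)$ for each $x \in \Hsymbol$, hence $\worldloweq{\traceState_1}{\traceState_3}{\Hsymbol}$. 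There is no genuine obstacle here: the entire content is that the defining condition is a conjunction (over $\Hsymbol$) of equalities between real numbers, and equality is an equivalence relation. The proof is therefore immediate once the definition is unfolded, which is why the authors can discharge it in a single line; its real purpose is to serve as a reusable building block for establishing the compositional properties of $\Hequivalence$ of programs in Theorem~\ref{theorem:loop-free-properties}.
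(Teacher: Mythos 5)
Your proof is correct and is exactly the argument the paper intends: the paper discharges this lemma in one line (``By the definition of $\worldloweq{}{}{\Hsymbol}$''), and your proposal simply unfolds that definition into pointwise equality over $\Hsymbol$ and appeals to the fact that equality on $\realSet$ is an equivalence relation. No gaps; your version is just the fully spelled-out form of the same reasoning.
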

\begin{proof} By the definition of $\worldloweq{}{}{\Hsymbol}$. \end{proof}



\begin{lemma} \label{theorem:eq-state-eq}
  For program $P$, state $\traceState$, $\traceState'$, $\traceStateprime$, and set $\Hsymbol$ such that
  $\HPtransitionPair{\traceState}{\traceStateprime} \in \HPtransition{P}$, $\worldloweq{\traceState}{\traceState'}{\Hsymbol}$, and $\FV{P} \subseteq \Hsymbol$, then there exists $\traceStateprime'$ such that $\HPtransitionPair{\traceState'}{\traceStateprime'} \in \HPtransition{P}$ and $\worldloweq{\traceStateprime}{\traceStateprime'}{\Hsymbol}$. 
\end{lemma}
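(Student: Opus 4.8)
The plan is to prove the statement by structural induction on the hybrid program $P$; the result is essentially the coincidence property for hybrid programs, namely that two states agreeing on (a superset of) $\FV{P}$ can mirror each other's executions. A naive induction breaks down at sequential composition, so I would first \emph{strengthen} the statement: for every $P$, states $\traceState, \traceState'$, and set $V \supseteq \FV{P}$, if $\worldloweq{\traceState}{\traceState'}{V}$ and $\HPtransitionPair{\traceState}{\traceStateprime} \in \HPtransition{P}$, then there is $\traceStateprime'$ with $\HPtransitionPair{\traceState'}{\traceStateprime'} \in \HPtransition{P}$ and $\worldloweq{\traceStateprime}{\traceStateprime'}{V \cup \MBV{P}}$. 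The lemma as stated is the special case $V = \Hsymbol$, after weakening the conclusion from $\Hsymbol \cup \MBV{P}$ down to $\Hsymbol$ (agreement on a larger set implies agreement on a smaller one, directly from Definition~\ref{def:equivlance-state}). Carrying the extra variables $\MBV{P}$ in the conclusion is precisely what lets the sequential-composition case feed its intermediate result into the next inductive step.

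For the base cases I would obtain $\traceStateprime'$ by mirroring the transition of $P$ from $\traceState'$. For $x := \theta$, since $\FV{\theta} \subseteq V$ and $\worldloweq{\traceState}{\traceState'}{V}$ the term values coincide, $\HPtermsem{\traceState}{\theta} = \HPtermsem{\traceState'}{\theta}$, so assigning $x$ this common value in $\traceState'$ gives agreement on $V \cup \{x\}$. For $x := *$ I simply copy the nondeterministically chosen value $\traceStateprime(x)$ into the mirrored execution. For $?\phi$, coincidence for formulas (agreement on $\FV{\phi} \subseteq V$) yields $\traceState' \models \phi$, and $\traceStateprime' = \traceState'$ works. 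The continuous case $x' = \theta \,\&\, \evolconstraint$ is the most delicate: given a solution $\varphi : [0,r] \mapsto \stateSet$ from $\traceState$, I build a solution $\varphi'$ from $\traceState'$ of the \emph{same} duration $r$ that agrees with $\varphi$ on $\{x, x'\}$ at every time while holding all other variables fixed at their $\traceState'$ values. Because $\FV{\theta} \subseteq V$ and the variables of $\FV{\theta} \setminus \{x, x'\}$ are held constant and agree in the two states, the right-hand side evaluated along $\varphi'$ matches that along $\varphi$, so $\varphi'$ is a genuine solution; because $\FV{\evolconstraint} \subseteq V$ and $\varphi'(t)$ agrees with $\varphi(t)$ on $\FV{\evolconstraint}$ for every $t$, the evolution domain is preserved. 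Setting $\traceStateprime' = \varphi'(r)$ then gives agreement on $V \cup \{x, x'\} = V \cup \MBV{x' = \theta \,\&\, \evolconstraint}$.

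The inductive cases chain the strengthened hypothesis. For $\alpha; \beta$ with witness $\mu$, I apply the hypothesis to $\alpha$ (using $\FV{\alpha} \subseteq \FV{P} \subseteq V$) to obtain $\mu'$ with $\worldloweq{\mu}{\mu'}{V \cup \MBV{\alpha}}$; here the extra $\MBV{\alpha}$ is exactly what is needed, since $\FV{\beta} \subseteq (\FV{\beta} \setminus \MBV{\alpha}) \cup \MBV{\alpha} \subseteq V \cup \MBV{\alpha}$, so I may apply the hypothesis to $\beta$ with the enlarged set $V \cup \MBV{\alpha}$ and obtain agreement on $V \cup \MBV{\alpha} \cup \MBV{\beta} = V \cup \MBV{\alpha; \beta}$. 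For $\alpha \cup \beta$ I take whichever branch the original execution took and weaken $\MBV{\alpha}$ (or $\MBV{\beta}$) down to $\MBV{\alpha} \cap \MBV{\beta}$. For $\alpha^*$ I run an inner induction on the number of loop iterations, applying the hypothesis for $\alpha$ one step at a time and weakening the conclusion back to $V$ between iterations, which is legitimate since $\MBV{\alpha^*} = \emptyset$ and $\FV{\alpha^*} = \FV{\alpha}$.

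I expect the continuous-evolution case to be the main obstacle, since it requires explicitly constructing a parallel ODE solution and arguing both that it solves the differential equation and that it remains within the evolution domain; the second subtle point is recognizing that the induction must carry agreement on the must-bound variables, not merely on $V$, so that sequential composition goes through. As an alternative to discharging all cases from scratch, I would note that this statement is an instance of the coincidence lemma for hybrid programs established by Platzer \cite{platzer2017complete} and could be obtained by citing it directly.
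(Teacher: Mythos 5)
Your proof is correct, but it takes a deliberately more self-contained route than the paper, whose entire proof of this lemma is a one-line appeal to the definition of $\worldloweq{}{}{\Hsymbol}$ together with Lemma~4 of \cite{platzer2017complete} (the coincidence lemma for hybrid programs). What you have done is essentially re-derive that coincidence lemma from first principles, and you have done so with the correct strengthening: carrying agreement on $V \cup \MBV{P}$ rather than just $V$ is precisely the form in which the lemma appears in \cite{platzer2017complete}, and your observation that $\FV{\beta} \subseteq (\FV{\beta} \setminus \MBV{\alpha}) \cup \MBV{\alpha} \subseteq V \cup \MBV{\alpha}$ is exactly the step that makes sequential composition go through. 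Your ODE case is also the standard argument: since the paper's semantics holds all variables other than $x$ (and $x'$) constant along a solution, the mirrored flow $\varphi'$ agrees with $\varphi$ on $\FV{\theta}$ and $\FV{\evolconstraint}$ at every time, so it is a genuine solution of the same duration that respects the evolution domain, and the final states agree on $V \cup \{x, x'\} = V \cup \MBV{x'=\theta \,\&\, \evolconstraint}$. Your closing remark — that the statement is an instance of Platzer's coincidence lemma and could be cited directly — is in fact the paper's actual proof. The trade-off is the expected one: the citation buys brevity and defers the (nontrivial) inductive and ODE bookkeeping to the literature, while your explicit induction buys self-containedness and makes visible the two genuinely subtle points, namely why must-bound variables must be threaded through the induction and how the parallel ODE solution is constructed; the only cost is length, plus the routine obligation (which you correctly leave implicit) of the term and formula coincidence lemmas for the base cases, and the final trivial weakening from $\Hsymbol \cup \MBV{P}$ down to $\Hsymbol$ via Definition~\ref{def:equivlance-state}.
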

\begin{proof}
  By the definition of $\worldloweq{}{}{\Hsymbol}$ and lemma~4 from \cite{platzer2017complete}. \end{proof}  

\begin{proof}[\textbf{Proof of Property~\ref{theorem:eq-decomposition} of Theorem~\ref{theorem:loop-free-properties}}]
  We prove that for any execution of $A ; C$, there exists an execution of $B ; D$ that agrees with it on $\Hsymbol$. The other direction can be proven similarly.
  For any execution $\tracesym^{ac}$ of $A ; C$, let $\traceState_{ac_f}$ and $\traceState_{ac_l}$ be its first and last state respectively. Then there exists a state $\traceState_{ac_m}$ such that $\HPtransitionPair{\traceState_{ac_f}}{\traceState_{ac_m}}$ $\in$ $\HPtransition{A}$ and $\HPtransitionPair{\traceState_{ac_m}}{\traceState_{ac_l}}$ $\in$ $\HPtransition{C}$. 
  Since $\worldloweq{A}{B}{\Hsymbol}$, there exist state $\traceState_{b_f}$, $\traceState_{b_l}$ such that $\HPtransitionPair{\traceState_{b_f}}{\traceState_{b_l}}$ $\in$ $\HPtransition{B}$,  $\worldloweq{\traceState_{b_f}}{\traceState_{ac_f}}{\Hsymbol}$, and $\worldloweq{\traceState_{b_l}}{\traceState_{ac_m}}{\Hsymbol}$. Likewise, since $\worldloweq{C}{D}{\Hsymbol}$, there exists state $\traceState_{d_f}$, $\traceState_{d_l}$ such that $\HPtransitionPair{\traceState_{d_f}}{\traceState_{d_l}}$ $\in$ $\HPtransition{D}$,  $\worldloweq{\traceState_{d_f}}{\traceState_{ac_m}}{\Hsymbol}$, and $\worldloweq{\traceState_{d_l}}{\traceState_{ac_l}}{\Hsymbol}$. By transitivity (Lemma~\ref{lemma:eq-state-basic}), we get  $\worldloweq{\traceState_{b_l}}{\traceState_{d_f}}{\Hsymbol}$.
  Since $\FV{D} \subseteq \Hsymbol$, by Lemma~\ref{theorem:eq-state-eq}, there exist state $\traceState_{d'_l}$ such that $\HPtransitionPair{\traceState_{b_l}}{\traceState_{d'_l}}$ $\in$ $\HPtransition{D}$ and $\worldloweq{\traceState_{d_l}}{\traceState_{d'_l}}{\Hsymbol}$. Since $\worldloweq{\traceState_{ac_l}}{\traceState_{d'_l}}{\Hsymbol}$ and $\worldloweq{\traceState_{ac_f}}{\traceState_{b_f}}{\Hsymbol}$ (by transitivity), for the execution of $A; C$ from $\traceState_{ac_f}$ to  $\traceState_{ac_l}$, we have
  $\HPtransitionPair{\traceState_{b_f}}{\traceState_{d'_l}}$ $\in$ $\HPtransition{B; D}$, $\worldloweq{\traceState_{ac_f}}{\traceState_{b_f}}{\Hsymbol}$, and $\worldloweq{\traceState_{ac_l}}{\traceState_{d'_l}}{\Hsymbol}$. $\worldloweq{A ; C}{ B ; D}{\Hsymbol}$ holds.   
\end{proof}

\begin{proof}[\textbf{Proof of Property~\ref{theorem:eq-loop} of Theorem~\ref{theorem:loop-free-properties}}]
By induction on the number of iterations of $\alpha^*$ and $\beta^*$. Base case is trivial. For the induction case, assume $\worldloweq{\alpha^k}{\beta^k}{\Hsymbol}$ is true, we can prove $\worldloweq{\alpha^k; \alpha}{\beta^k; \beta}{\Hsymbol}$ using  Property~\ref{theorem:eq-decomposition} by letting $A$ be $\alpha^k$, $B$ be $\beta^k$, $C$ be $\alpha$, and $D$ be $\beta$. Thus, $\worldloweq{\alpha^*}{\beta^*}{\Hsymbol}$ holds.
\end{proof}

\begin{proof}[\textbf{Proof of robust safety of the ABS model}]
Let $\programwoattack$ be the hybrid program modeling ABS with duplicated sensors. Assume sensor $\omega_1$ is compromised.
Let $A$ be the \inlineCode{voting} program, $B$ be $\attacked{A}{\{\omega_1\}}$, and $C$ be program $\programwoattack$ with \inlineCode{voting} excluded (i.e., $\programwoattack=(A ; C)^*$ and $\attacked{\programwoattack}{\{\omega_1\}} = (B ; C)^*$). Here, 
$\FV{A}$ = $\FV{B}$ = $\{ \omega_p \}$, $\FV{A; C} = \FV{B; C}$, and $\FV{C}$ = $\{ \omega_s \}$ $\cup$ $\FV{A; C}$. 

By the definition of $\eqprogram{}{}{\Hsymbol}$, 
$\eqprogram{A}{B}{\{\omega_s, \omega_p \}}$ holds, which means
\[
  \eqprogram{A}{B}{\FV{C}} \tag*{(Property~\ref{theorem:eq-unmodified})}
\]
With $\eqprogram{C}{C}{\FV{C}}$ (Property~\ref{theorem:eq-self}), we get
\[
  \eqprogram{(A; C)}{(B; C)}{\FV{C}} \tag*{(Property~\ref{theorem:eq-decomposition})}
\]
which leads to
\[
  \eqprogram{(A; C)^*}{(B; C)^*}{\FV{C}} \tag*{(Property~\ref{theorem:eq-loop})}
\]
Property~\ref{theorem:eq-subset} also applies to programs with loop, and $\{\omega_p, v_p\}$ $\subseteq$ $\FV{C}$, thus
\[
  \eqprogram{(A; C)^*}{(B; C)^*}{\{\omega_p, v_p\}} \tag*{(Property~\ref{theorem:eq-subset})}
\]
Since $\FV{\phi_{pre} \land \phi_{post}}$ = $\{\omega_p, v_p\}$, we have $\robustsafety{\programwoattack}{\phi_{pre}}{\phi_{post}}{\{\omega_{s_1}\}}$ (Theorem~\ref{lemma:robust-safety}).

Similarly, we can prove $\robustsafety{\programwoattack}{\phi_{pre}}{\phi_{post}}{\{\omega_{s_2}\}}$ and $\robustsafety{\programwoattack}{\phi_{pre}}{\phi_{post}}{\{\omega_{s_3}\}}$.
\end{proof}

\medskip

\newcommand{\fvp}{\texttt{fv}}

\begin{proof}[\textbf{Proof of robust safety of Boeing 737-MAX model}]
 Let $A$ be program \inlineCode{ctrl$_{aoa}^\prime$},  $B$ be program $\attacked{A}{\{ s_{L} \}}$, $C$ be program \inlineCode{MCAS($aoa$); plant} in Figure~\ref{fig:hp-boeing737max}.
Here, $\FV{A}$ = $\FV{B}$ = $\{ aoa_p \}$, let $\fvp$ be the set of free variables of program $A; C$, then $\FV{B; C}$=$\fvp$, and $\FV{C}$ would be $\{ aoa_s \} \cup \fvp$. 
We can prove $\robustsafety{A; C}{\phi_{pre}}{\phi_{post}}{\{ s_L\}}$ with the following steps: 

By definition of $\eqprogram{}{}{\Hsymbol}$, we prove
$\eqprogram{A}{B}{\{aoa_s, aoa_p \}}$, which means
\[
  \eqprogram{A}{B}{\{ aoa_s, aoa_p \} \cup \fvp} \tag*{(Property~\ref{theorem:eq-unmodified})}
\]
With $\eqprogram{C}{C}{\{ aoa_s, aoa_p \} \cup \fvp}$ (Property~\ref{theorem:eq-self}), we know
\[
  \eqprogram{A; C}{B; C}{\{ aoa_s, aoa_p \} \cup \fvp} \tag*{(Property~\ref{theorem:eq-decomposition})}
\]
 Since $\FV{A; C} \cup \FV{B; C}$ $\subseteq$ ${\{ aoa_s, aoa_p \} \cup \fvp}$, we know
 \[
   \eqprogram{(A; C)^*}{(B; C)^*}{\{ aoa_s, aoa_p \} \cup \fvp} \tag*{(Property~\ref{theorem:eq-loop})}
 \]
 Property~\ref{theorem:eq-subset} applies to programs with loop as well, so 
  \[
   \eqprogram{(A; C)^*}{(B; C)^*}{\fvp} \tag*{(Property~\ref{theorem:eq-subset})}
 \]
 Since formula $\phi_{pre}$ and $\phi_{post}$ typically refer to free variables in $\fvp$, we get $\robustsafety{\programwoattack}{\phi_{pre}}{\phi_{post}}{\{ s_{L}\}}$ holds.  (Theorem~\ref{lemma:robust-safety}). Similarly, we can prove $\robustsafety{\programwoattack}{\phi_{pre}}{\phi_{post}}{\{ s_{R}\}}$.
 \end{proof}



\section{Limitations of the Self-Composition Approach} \label{appendix:limitation}

One limitation of our self-composition approach is that it applies only for hybrid programs that have \emph{total} semantics for all low-integrity inputs. 
It means if a program has a \emph{valid} execution on an input state $\traceState$ (i.e., exist state $\traceStateprime$ such that $\HPtransitionPair{\traceState}{\traceStateprime} \in \HPtransition{\hybridprogram}$), then for any state $\traceState^\prime$ that differs with $\traceState$ only in low-integrity inputs, there exists  $\traceStateprime^\prime$ that $\HPtransitionPair{\traceState}{\traceStateprime} \in \HPtransition{\attacked{\hybridprogram}{S_A}}$.

A program may have \emph{partial} (not total) semantics on low-integrity inputs for two reasons: (1) some low-integrity inputs fail test conditions in all execution paths, for example, if $a$ is a low-integrity variable, then $?\mathit{a > 0}$ is a program whose semantics are partial on low-integrity inputs; (2) the program's evolution constraint depends on low-integrity inputs. For example, if $a$ is a low-integrity variable, $(x'=\theta\& \mathit{a > 0})$ is a program whose semantics are partial on low-integrity inputs. 


Fortunately, there is a relatively simple way to check that hybrid programs meet this requirement. First, given a set of low-integrity sensor variables, a straightforward program analysis can identify all variables that might depend on a low-integrity sensor variables; call these the low-integrity variables. Second, check that all evolution constraints do not include any low-integrity variables. Third, check that any test $?\phi_i$ that includes a low-integrity variable occurs as part of a construct $?\phi_1;\alpha_1 \cup \dots \cup ?\phi_n;\alpha_n$ such that $\phi_1 \vee \dots \vee \phi_n$ is valid (i.e., the tests are exhaustive and so at least one of the branches of the nondeterministic choice will be true).

Well-designed hybrid program models should have total semantics on low-integrity inputs, except in specific situations that rarely depend on low-integrity sensor variables. 
Models that do not have total semantics on low-integrity inputs typically do not correspond to actually implementable control strategies, and are therefore only vacuously safe.

\section{Soundness Proof of the Self-Composition Approach} \label{appendix:sound-proof}

We use trace semantics of hybrid programs~\cite{platzer2007temporal, jeannin2014dtl} to prove Theorem~\ref{theorem:soundness-composition}. The \emph{trace semantics} of hybrid programs assigns to each program $\alpha$ a set of traces $\tau(\alpha)$. A state is a map from the set of variables to real numbers. The set of all variables is denoted $\allvariableSet$. The set of all states is denoted $\stateSet$. A separate state $\Lambda$ (not in $\stateSet$) denoting a failure of the system. 

A \emph{trace} is a (non-empty) finite or infinite sequence $\tracesym = (\tracefun_0, \tracefun_1, ... )$ of \emph{trace functions} $\tracefun_i$ : $[0, r_i] \rightarrow \stateSet$ with duration $r_i \in \mathbb{R}$. 
A position of $\tracesym$ is a pair $(i, \iota)$ with $i \in \mathbb{N}$ and $\iota$ in the interval $[0, r_i]$; the state of $\tracesym$ at $(i, \iota)$ is $\tracefun_i^{\iota}$.
For a state $\traceState \in \stateSet$, $\hat{\traceState}$: $0 \mapsto \traceState$ is a point flow at $\traceState$ with duration 0.
A trace terminates if it is a finite sequence $\tracesym = (\tracefun_0, \tracefun_1, ... \tracefun_n)$ and $\tracefun_n \not= \Lambda$. In that case, the last state is denoted as $\sigma_n(r_n)$. The first state of $\tracesym$, denoted $\traceTop{\tracesym}$, is  $\tracesym_0(0)$. The set of all traces is $\traceSet$.

We denote by $\traceState[x \mapsto r]$ the valuation assigning variable $x$ to $d \in \mathbb{R}$ and matching with $\traceState$ on all other variables. 

The trace semantics $\tau(\alpha)$ of a hybrid program $\alpha$ is defined inductively\cite{jeannin2014dtl}:
\begin{itemize}[noitemsep]
\item $\tau(x :=\theta) = \{ (\hat{\traceState}, \hat{\traceStateprime}) ~|~ \traceStateprime = \traceState[x \mapsto \HPtermsem{\traceState}{\theta}] \}$;
\item $\tau(x^{\prime} =\theta \& \evolconstraint) = \{ (\tracefun) : \tracefun$ ~is a state flow of order 1 defined on [0, r] or [0, +$\infty$] solution of $x^{\prime}=\theta$, and for all $t$ in its domain, $\tracefun(t) \models \evolconstraint\}$ $\cup$ $\{(\hat{\traceState}, \hat{\Lambda}) : \traceState \not\models \evolconstraint\}$;
\item $\tau(?\phi) = \{ (\hat{\traceState}) ~|~ \traceState \models \phi \}$ $\cup$ $\{(\hat{\traceState}, \hat{\Lambda}) : \traceState \not\models \phi\}$;
\item $\tau(\alpha \cup \beta) = \tau(\alpha) \cup \tau(\beta)$;
\item $\tau(\alpha; \beta) = \{ \tracesym \circ \rho\ : \tracesym \in \tau(\alpha), \rho \in \tau(\beta)$ when $\tracesym \circ \rho \text{~is defined} \}$;
  where the composition $\tracesym \circ \rho$ of $\tracesym = (\tracefun_0, ..., \tracefun_n)$ and $\rho = (\rho_0, ..., \rho_m)$ is
  \begin{itemize}
  \item $\tracesym \circ \rho = (\tracefun_0, ..., \tracefun_n, \rho_0, ..., \rho_m)$ if $\tracesym$ terminates and $\traceEnd{\tracesym} = \traceTop{\rho}$;
  \item $\tracesym$ if $\tracesym$ does not terminate;
  \item undefined otherwise;
    \end{itemize}
\item $\tau(\alpha^*) = \cup_{n\in \mathbb{N}} \tau(\alpha^n)$, where $\alpha^0$ is defined as $?true$, $\alpha^1$ is defined as $\alpha$ and $\alpha^{n+1}$ is defined as $\alpha^n; \alpha$ for $n \geq 1$; 
\item $\tau(x :=*) = \{ (\hat{\traceState}, \hat{\traceStateprime}) ~|~ \traceStateprime = \traceState[x \mapsto d] \}$ where $d$ is some real value.
\end{itemize}

Notice that the trace semantic for $\tau(x :=*)$ is not defined in  \cite{platzer2007temporal, jeannin2014dtl}. We add it to complete the definition of trace semantic needed in this work. 

We refer to finite traces that end with failure state $\Lambda$ as \emph{failure traces}, and other traces as \emph{normal traces}. We denote $\normaltraces{\program}$ the set of normal traces of a program $\program$:
\[
  \normaltraces{\program} = \{ \tracesym \in \tau(\program) ~|~ \traceEnd{\tracesym} \not= \Lambda \lor \tracesym \text{~does not terminate}\}
\]


Now, we formalize the $\Hequivalence$ of states, trace functions, traces, and programs.
Compare with Definition~\ref{def:equivlance}, these formal definitions are more general (can be applied on programs with different variable sets) and uses a mapping function between variables in two states (instead of using just a set). 

\begin{definitions}[$\Hequivalence$ of states] \label{def:low-eq-state} We define $\worldloweqfun{\traceState_i}{\traceState_j}{\varfun}$, for states $\traceState_i$ and $\traceState_j$ that agree on corresponding variables that are related by function $\varfun$, i.e., 
   \begin{align*}
    & \forall x \in dom(\varfun), \traceState_i(x) = \traceState_j(\varfun(x)) 
  \end{align*}
\end{definitions}
Here the domain of $\varfun$ corresponds to the $\Hsymbol$ in Definition~\ref{def:equivlance}. And $\varfun$ is often a subset of the renaming function of the program of concern.

\begin{definitions}[$\Hequivalence$ of trace functions] \label{def:low-eq-trace-fun} 
  We define $\worldloweqfun{\tracefun_i}{\tracefun_j}{\varfun}$, for trace functions $\tracefun_i$ and $\tracefun_j$ that have the same domain and $\Hequivalent$ states at all domain values:
  \[ dom(\tracefun_i) = dom (\tracefun_j)
    \text{~and~}
    \forall p \in dom(\tracefun_i), \worldloweqfun{\tracefun_i(p)}{\tracefun_j(p)}{\varfun} \]
\end{definitions}

\begin{definitions}[$\Hequivalence$ of traces] \label{def:low-eq-trace} We define $\worldloweqfun{\tracesym^a}{\tracesym^b}{\varfun}$, for trace $\tracesym^a$ and $\tracesym^b$ that agree on (1) the \emph{first} state (2) trace functions whose domains are not [0, 0], and (3) the \emph{last} state if both are finite traces. Figure~\ref{fig:definition-Heq-trace} shows the formal definition.
\end{definitions}
\begin{figure}
\begin{mathpar}
  \inferrule[single.function]
  {
    \worldloweqfun{\tracesym^a_0}{\tracesym^b_0}{\varfun}
  }
  {
    \worldloweqfun{(\tracesym^a_0)}{(\tracesym^b_0)}{\varfun}
  }
  
  \inferrule[trace-plant]
  { m \geq 1
    \\
    n \geq 1
    \\\\
    \worldloweqfun{\tracesym^a_0}{\tracesym^b_0}{\varfun}
    \\
    dom(\tracesym^a_0) \not= [0,0]
    \\
    dom(\tracesym^b_0) \not= [0,0]
    \\\\
    \worldloweqfun{(\tracesym^a_1 \dots \tracesym^a_m)}{(\tracesym^b_1 \dots \tracesym^b_n)}{\varfun}
  }
  {
    \worldloweqfun{(\tracesym^a_0 \dots \tracesym^a_m}{(\tracesym^b_0 \dots \tracesym^b_n)}{\varfun} 
  }
  
  \inferrule[trace-discrete]
  { (m \geq 0 \land n \geq 1 ) \lor  (m \geq 1 \land n \geq 0 )
    \\  \worldloweqfun{\traceTop{\tracesym^a}}{\traceTop{\tracesym^b}}{\varfun}
    \\\\
    dom(\tracesym^a_p) \not= [0,0] \lor p = m
    \\\\
    dom(\tracesym^b_q) \not= [0,0] \lor q = n
    \\\\
    \forall i (0 \leq i < p \land p < m), dom(\tracesym^a_i) = [0,0]
    \\
    \forall j (0 \leq j < q \land q < n), dom(\tracesym^b_j) = [0,0]
    \\\\
    \worldloweqfun{(\tracesym^a_p \dots \tracesym^a_m)}{(\tracesym^b_q \dots \tracesym^b_n)}{\varfun}
  }
  {
    \worldloweqfun{(\tracesym^a_0 \dots \tracesym^a_{p-1}, \tracesym^a_p \dots \tracesym^a_m)}{(\tracesym^b_0 \dots \tracesym^b_{q-1}, \tracesym^b_q \dots \tracesym^b_n)}{\varfun} 
  }
\end{mathpar}
\caption{Definition of $\Hequivalence$ of traces}
\label{fig:definition-Heq-trace}
\end{figure}

We write $\worldloweqfun{\traceState_1}{\traceState_2}{id_{\Hsymbol}}$ to mean that $\traceState_1$ and $\traceState_2$ are $\Hequivalence$ with respect to an identity function defined on set $\Hsymbol$ and undefined otherwise (i.e., $id_{\Hsymbol}$).
We write  $\worldloweqfun{\tracesym^a}{\tracesym^b}{id_{\Hsymbol}}$ to indicate traces $\tracesym^a$ and $\tracesym^b$ are equivalent on $id_{\Hsymbol}$. We write $\worldloweqfun{\tracesym^a}{\tracesym^b}{id}$ to mean that $\tracesym^a$ and $\tracesym^b$ are equivalent with \emph{all} variables, i.e., the two traces use the same set of variables and they are $\Hequivalent$.

\begin{definitions}[$\Hequivalence$ of two programs by traces] \label{def:noninterference-two}
For two hybrid programs $\program_1$ and $\program_2$ of the canonical form, a function $\varfun$ maps variables in $\program_1$ to variables in $\program_2$, $\program_1$ $\NI$ $\program_2$ is defined as follows:
\begin{align*}
   \forall \tracesym^{a} \in \traceSem{\program_1}, 
            ~\exists \tracesym^{b} \in \traceSem{\program_2}  \suchthat  \worldloweqfun{\tracesym^a}{\tracesym^b}{\varfun}                                    
\end{align*}
\end{definitions}

To help express the agreement between executions composed in a self-composition, we introduce the notion of \emph{projections} on states, trace functions, and traces. 

\begin{definitions}[Projection] \label{def:trace-state-projection}
  For state $\traceState$ and a set $V$ of variables such that $V \subseteq \variableSet{\traceState}$, the $V$ projection of state $\traceState$, denoted $\tproj{\traceState}{V}$, is a map $\{ x \mapsto \traceState(x) \}$ for all $x \in V$.

  For a trace function $\tracefun_i$: [$0, r_i$] $\rightarrow$ $\stateSet$ and a set $V$ of variables such that $V \subseteq \variableSet{\tracefun_i}$, the $V$ projection of $\tracefun_i$, denoted $\tproj{\tracefun_i}{V}$, is $\{ \iota \mapsto (\tproj{\tracefun_i(x)}{V}) \}$ for all $\iota \in dom(\tracefun_i)$.

  For a trace $\tracesym = (\tracefun_0, \dots ,\tracefun_n)$ and a set $V$ of variables such that $V \subseteq \variableSet{\tracefun}$, the $V$ projection of $\tracesym$, denoted $\tproj{\tracesym}{V}$, is computed by pointwise projecting every trace function of $\tracesym$: 
\begin{align*}
  \tproj{\tracesym}{V} = (\tproj{\tracefun_0}{V}, \dots, \tproj{\tracefun_n}{V})
\end{align*}

For a program $\program$, we write $\tproj{\tracesym}{\program}$, to mean $\tproj{\tracesym}{\variableSet{\program}}$. 
Notation $\tproj{}{\program}$ also applies to states and trace functions. 
\end{definitions}

The soundness theorem (Theorem~\ref{theorem:soundness-composition}) has a list of promises: a program $\program$ and $\program_c$ ($\program$ in canonical form), a set $S_A$ of variables, a set $\eqSet$ of variables such that $S_A \subseteq \BV{\program}$, $\eqSet \subseteq \BV{\program}$, and $S_A \cap \eqSet = \emptyset$. We assume but elide these promises in the following definitions and lemmas.

\begin{definitions}[Self-composition preserves equivalence formula]
  \label{def:preservation-composition-formalized}
  The desired property of a self-composition:
\[
  eq_{\eqSet}^{\xi} \rightarrow [\compsym{\programwoattack}{S_A}{\xi}]eq_{\eqSet}^{\xi}
\] is formalized as follows:
\begin{align*}
  \forall
  & \tracesym \in \traceSem{\compsym{\programwoattack}{S_A}{\xi}}  \suchthat \\
  &                                                           \worldloweqfun{\traceTop{(\tproj{\tracesym}{\programwoattack})}}{\traceTop{(\tproj{\tracesym}{\renaming{\programwoattack}{\xi}})}}{\varfun},  \\ 
  & ~\worldloweqfun{\tproj{\tracesym}{\programwoattack}}{\tproj{\tracesym}{\renaming{\programwoattack}{\xi}}}{\varfun} 
  \end{align*}
Where $\varfun$ is $ \{ (x, \renaming{x}{\xi}) ~|~ x \in \eqSet \}$.
\end{definitions}

\begin{assumption}[A program has total semantics on low-integrity inputs (formalized)] \label{well-formed-inputs}
\begin{align*} 
  \forall & \traceState_{1}, \traceState_{2}: \stateSet \suchthat \worldloweq{\traceState_{1}}{\traceState_{2}}{\eqSet}, \\
          & ~\exists \tracesym^a \in \normaltraces{\programwoattack} \suchthat \traceTop{\tracesym^a} = \traceState_{1} \\
          & \leftrightarrow ~\exists \tracesym^b \in \normaltraces{\attacked{\programwoattack}{S_A}} \suchthat \traceTop{\tracesym^b} = \traceState_{2} 
\end{align*}
\end{assumption}

\begin{lemma}[Renaming preserves trace] \label{lemma:rename.exists.trace}
  For a hybrid program $\programwoattack$ and a renaming function $\xi$ on $\programwoattack$:
  \[
    \forall \tracesym \in \traceSem{\programwoattack}, \renaming{\tracesym}{\xi} \in \traceSem{\renaming{\programwoattack}{\xi}}
  \]
  Where $\renaming{\tracesym}{\xi}$ is $\tracesym$ with variables renamed according to $\xi$.
\end{lemma}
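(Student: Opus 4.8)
The plan is to prove the lemma by structural induction on the hybrid program $\programwoattack$, following the inductive definition of the trace semantics $\tau(\cdot)$ given above. Since a renaming function $\xi$ is a bijection (Definition~\ref{def:var-rename}), I would first extend it to a bijection $\bar{\xi}$ on the whole variable set $\allvariableSet$, acting as the identity on variables outside $\variableSet{\programwoattack}$, and fix $\bar{\xi}(\Lambda) = \Lambda$ on the failure state, so that renaming is well defined on arbitrary states, trace functions, and traces; the renamed trace $\renaming{\tracesym}{\xi}$ is then the pointwise relabeling of every state of $\tracesym$ by $\bar{\xi}$. The technical engine of the whole argument is a \emph{renaming-invariance} lemma for the denotational semantics, which I would establish first by a routine induction on the structure of terms and formulas: for every term $\theta$, $\HPtermsem{\traceState}{\theta} = \HPtermsem{\renaming{\traceState}{\xi}}{\renaming{\theta}{\xi}}$, and for every formula $\phi$, $\traceState \models \phi$ iff $\renaming{\traceState}{\xi} \models \renaming{\phi}{\xi}$. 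Because $\bar{\xi}$ is a bijection, it also follows that $\traceState_1 = \traceState_2$ iff $\renaming{\traceState_1}{\xi} = \renaming{\traceState_2}{\xi}$, which is exactly what is needed to transport the last-state/first-state matching condition used in trace composition.

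With these tools the base cases are direct. For $x := \theta$ the outgoing trace $(\hat{\traceState}, \hat{\traceStateprime})$ with $\traceStateprime = \traceState[x \mapsto \HPtermsem{\traceState}{\theta}]$ renames to the point-flow trace at $\renaming{\traceState}{\xi}$ and $\renaming{\traceStateprime}{\xi}$, and the term-invariance lemma together with bijectivity of $\bar{\xi}$ give $\renaming{\traceStateprime}{\xi} = \renaming{\traceState}{\xi}[\xi(x) \mapsto \HPtermsem{\renaming{\traceState}{\xi}}{\renaming{\theta}{\xi}}]$, i.e.\ the renamed trace is a trace of the renamed assignment $\xi(x) := \renaming{\theta}{\xi}$; the cases $x := *$ and $?\phi$ are analogous, using the arbitrariness of the assigned value and the formula-invariance lemma respectively, and in each case the failure trace $(\hat{\traceState}, \hat{\Lambda})$ renames to a failure trace of the renamed program. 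The continuous case $x'=\theta \,\&\, \evolconstraint$ is the most delicate: I would show that if $\varphi$ is a solution of $x'=\theta$ on $[0,r]$ then $\renaming{\varphi}{\xi}$ is a solution of the renamed system $\renaming{x'=\theta}{\xi}$, which relies on $\xi$ relabeling each state variable and its associated primed variable consistently so that differentiation commutes with renaming, and on the formula-invariance lemma to carry the evolution-domain constraint $\renaming{\evolconstraint}{\xi}$ along the renamed flow at every point of its (unchanged) domain.

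For the inductive cases, $\alpha \cup \beta$ is immediate, since renaming distributes over the union $\tau(\alpha) \cup \tau(\beta)$ and the induction hypotheses apply to each disjunct. For $\alpha ; \beta$ I would show that renaming commutes with trace composition, $\renaming{(\tracesym \circ \rho)}{\xi} = \renaming{\tracesym}{\xi} \circ \renaming{\rho}{\xi}$, and that the composition is defined on the renamed traces exactly when it is defined on the originals; this is where the bijectivity consequence $\traceEnd{\tracesym} = \traceTop{\rho} \Leftrightarrow \traceEnd{\renaming{\tracesym}{\xi}} = \traceTop{\renaming{\rho}{\xi}}$ is used, so that the induction hypotheses on $\alpha$ and $\beta$ close the case. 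Finally, $\alpha^*$ follows by a side induction on $n$ over $\tau(\alpha^*) = \bigcup_n \tau(\alpha^n)$, reusing the sequential-composition argument at each step, with the base $\alpha^0 = {?}true$ handled by the test case.

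I expect the continuous-evolution base case to be the main obstacle. All the discrete constructs reduce cleanly to bijectivity of $\bar{\xi}$ and the term/formula invariance lemma, but the ODE case requires the separate observation that a solution of a differential equation system is preserved under consistent renaming of state and primed variables, and that the relabeled flow still satisfies the renamed evolution-domain constraint pointwise. Making this precise is the only place where the structure of the continuous semantics, rather than pure symbol-pushing, is genuinely used; the remaining care is bookkeeping to ensure $\bar{\xi}$ treats $\Lambda$ as a fixed point so that failure traces and the composition/repetition cases go through uniformly.
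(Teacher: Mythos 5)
Your proposal is correct and takes essentially the same approach as the paper, whose entire proof is the one line ``by induction on $\programwoattack$''---your structural induction with the term/formula renaming-invariance lemma, the ODE-solution relabeling, and the composition/repetition bookkeeping is exactly the expansion that one-liner presupposes. One small repair: the extension $\bar{\xi}$ you define as the identity outside $\variableSet{\programwoattack}$ is not injective on the fresh targets $\xi(\BV{\programwoattack})$; take $\bar{\xi}$ to be the involution swapping $x$ and $\xi(x)$ for each bound $x$ (identity elsewhere), after which every step you describe goes through.
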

\begin{proof} By induction on $\programwoattack$.  \end{proof}



\begin{lemma}[Renaming preserves trace existence] \label{lemma:renamed.trace.existence}
  \begin{align*}
    \forall & \traceState_{1}, \traceState_{2}: \stateSet \suchthat \worldloweq{\traceState_{1}}{\traceState_{2}}{\eqSet}, \\
     & \forall \tracesym \in \traceSem{\programwoattack} \suchthat \traceTop{\tracesym} = \traceState_{1},\\
    & ~\exists \tracesym' \in \traceSem{\renaming{\attacked{\programwoattack}{S_A}}{\xi}} \suchthat    \traceTop{\tracesym'} = \renaming{\traceState_{2}}{\xi}
  \end{align*}
\end{lemma}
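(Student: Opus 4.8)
The plan is to derive $\tracesym'$ directly by chaining the totality assumption (Assumption~\ref{well-formed-inputs}) with the renaming lemma (Lemma~\ref{lemma:rename.exists.trace}). The hypotheses give a trace $\tracesym \in \traceSem{\programwoattack}$ with $\traceTop{\tracesym} = \traceState_1$, together with a companion state $\traceState_2$ satisfying $\worldloweq{\traceState_1}{\traceState_2}{\eqSet}$; the goal is a trace of the renamed attacked program $\renaming{\attacked{\programwoattack}{S_A}}{\xi}$ whose first state is $\renaming{\traceState_2}{\xi}$. The essential observation is that this target trace need not be related to $\tracesym$ in any way---it merely has to exist from the correct initial state---so the argument reduces to an existence claim about the attacked program, which is exactly what the totality assumption is designed to supply.

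First I would restrict attention to the case where $\tracesym$ is a normal trace, since a failing execution of $\programwoattack$ is never what the soundness argument needs to match and the totality assumption is phrased over $\normaltraces{\programwoattack}$. Because $\tracesym \in \normaltraces{\programwoattack}$ and $\traceTop{\tracesym} = \traceState_1$, the left-hand side of the biconditional in Assumption~\ref{well-formed-inputs} holds, so its forward direction yields a normal trace $\tracesym^b \in \normaltraces{\attacked{\programwoattack}{S_A}}$ with $\traceTop{\tracesym^b} = \traceState_2$. This is the step that consumes both $\worldloweq{\traceState_1}{\traceState_2}{\eqSet}$ and the requirement that the attacked program be total on low-integrity inputs: the attacker's extra nondeterminism on $S_A$ can only enlarge the reachable behaviors, so a valid execution from $\traceState_1$ forces a valid execution from any $\eqSet$-equivalent $\traceState_2$.

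Next I would rename. Applying Lemma~\ref{lemma:rename.exists.trace} to the program $\attacked{\programwoattack}{S_A}$ and the renaming function $\xi$ produces $\renaming{\tracesym^b}{\xi} \in \traceSem{\renaming{\attacked{\programwoattack}{S_A}}{\xi}}$, and since renaming acts pointwise on states its first state is $\renaming{\traceTop{\tracesym^b}}{\xi} = \renaming{\traceState_2}{\xi}$. Taking $\tracesym' = \renaming{\tracesym^b}{\xi}$ then discharges the lemma.

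The one technical point I would verify carefully is that $\xi$ is a legitimate renaming function for $\attacked{\programwoattack}{S_A}$, since Lemma~\ref{lemma:rename.exists.trace} is stated for a program together with its own renaming function, whereas $\xi$ was chosen for $\programwoattack$. This holds because the sensor attack replaces each assignment $v := \theta$ (for $v \in S_A$) by $v := *$, and those two statements have the same bound variable $\{v\}$; hence $\BV{\attacked{\programwoattack}{S_A}} = \BV{\programwoattack}$ and $\variableSet{\attacked{\programwoattack}{S_A}} \subseteq \variableSet{\programwoattack}$, so $\xi$ still renames exactly the bound variables of the attacked program and remains a bijection on its variable set. I expect this bookkeeping, rather than any genuinely hard reasoning, to be the main obstacle: the mathematical content is carried entirely by the totality assumption and the renaming lemma, and the remaining work is checking that attacking and renaming interact cleanly with the variable-set side conditions (and, should $\traceSem{\cdot}$ be read to include failure traces, handling the failing case by exhibiting a failing trace of $\renaming{\attacked{\programwoattack}{S_A}}{\xi}$ from $\renaming{\traceState_2}{\xi}$ directly).
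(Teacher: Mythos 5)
Your proof is correct and follows exactly the paper's own argument: invoke Assumption~\ref{well-formed-inputs} to obtain a trace of $\attacked{\programwoattack}{S_A}$ starting at $\traceState_2$, then apply Lemma~\ref{lemma:rename.exists.trace} to rename it into $\traceSem{\renaming{\attacked{\programwoattack}{S_A}}{\xi}}$. Your extra bookkeeping---checking that $\xi$ remains a valid renaming function for the attacked program since $\BV{\attacked{\programwoattack}{S_A}} = \BV{\programwoattack}$, and flagging the normal-vs-failure-trace mismatch between the lemma statement and the assumption---is more careful than the paper's two-sentence proof, which silently elides both points.
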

\begin{proof} By assumption~\ref{well-formed-inputs}, a trace of $\attacked{\programwoattack}{S_A}$ exists with starting state $\traceState_{2}$. By lemma~\ref{lemma:rename.exists.trace}, we know $\renaming{\tracesym^\prime}{\xi}$ is a normal trace of $\renaming{\attacked{\programwoattack}{S_A}}{\xi}$.
\end{proof}



\begin{lemma} \label{lemma:composition-disjoint} Trace preserves after adding disjoint variable sets. 
\begin{align*}
  \forall & \traceState_{1}~ \traceState_{2} : \stateSet \suchthat \\
          & \variableSet{\traceState_{1}} = \variableSet{\program} \text{~and~} \variableSet{\traceState_{1}} \cap \variableSet{\traceState_{2}} = \emptyset \\
          & ~\forall \tracesym \in \traceSem{\program} \suchthat \traceTop{\tracesym} = \traceState_{1}, \\ 
          & ~~\exists \tracesym' \in \traceSem{\program} \suchthat \\
          & ~~~\traceTop{\tracesym'} = \joinworld{\traceState_{1}}{\traceState_{2}} \text{~and~} \tproj{\tracesym'}{\program} = \tracesym 
\end{align*}
Where $\oplus$ means the join of two non-overlapping states.
\end{lemma}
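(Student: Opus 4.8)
The plan is to argue by structural induction on the hybrid program $\program$, using the inductive clauses of the trace semantics $\traceSem{\cdot}$ directly. The guiding observation is that $\variableSet{\traceState_2}$ is disjoint from $\variableSet{\program}$, hence from $\BV{\program}$, so $\program$ can never write the added variables: they merely ride along, frozen at their initial values. Accordingly, I would strengthen the induction hypothesis to assert not only the existence of $\tracesym' \in \traceSem{\program}$ with $\traceTop{\tracesym'} = \joinworld{\traceState_1}{\traceState_2}$ and $\tproj{\tracesym'}{\program} = \tracesym$, but also that $\tproj{\tracesym'(p)}{\variableSet{\traceState_2}} = \traceState_2$ at every position $p$ of $\tracesym'$. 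This extra clause is exactly what makes the sequential and loop cases glue.

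For the base cases I rely on a coincidence property: every term, formula, and differential equation occurring in $\program$ has its free variables inside $\variableSet{\program}$, so its evaluation is unaffected by the presence of $\traceState_2$. For $x := \theta$ and $x := *$ I take the two-point trace on $\joinworld{\traceState_1}{\traceState_2}$ performing the same update (reusing the same witness value for $x := *$); since $\FV{\theta} \subseteq \variableSet{\program}$ the written value agrees and the added variables are untouched. For $?\phi$, since $\FV{\phi} \subseteq \variableSet{\program}$ we have $\joinworld{\traceState_1}{\traceState_2} \models \phi$ iff $\traceState_1 \models \phi$, so the success trace (or the failure trace) transfers. For $x' = \theta \,\&\, \evolconstraint$ the original flow extends to a flow on the larger variable set that holds $\variableSet{\traceState_2}$ constant, and the evolution-domain check is unchanged because $\FV{\theta}, \FV{\evolconstraint} \subseteq \variableSet{\program}$.

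The inductive cases follow the clauses of $\traceSem{\cdot}$. For $\alpha \cup \beta$ the given trace belongs to $\traceSem{\alpha}$ or $\traceSem{\beta}$, and I apply the hypothesis to that branch. The substantive case is sequential composition: writing $\tracesym = \tracesym_\alpha \circ \tracesym_\beta$ with $\traceEnd{\tracesym_\alpha} = \traceTop{\tracesym_\beta} = \mu$, the strengthened hypothesis applied to $\alpha$ yields $\tracesym'_\alpha$ whose last state is precisely $\joinworld{\mu}{\traceState_2}$ --- this is where the freezing clause is used --- so the hypothesis for $\beta$ applies from that clean join, producing $\tracesym'_\beta$, and $\tracesym'_\alpha \circ \tracesym'_\beta$ is the required trace. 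The repetition $\alpha^*$ then reduces to a secondary induction on the unrolling index using $\alpha^0 = {?\mathit{true}}$ and $\alpha^{n+1} = \alpha^n ; \alpha$, i.e., iterated use of the sequential-composition case.

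I expect the sequential-composition step to be the main obstacle: the delicate point is maintaining that the added variables remain unchanged so that the intermediate state of the extended trace is again a genuine join $\joinworld{\mu}{\traceState_2}$ rather than some perturbed state. Verifying the frozen-variable clause in every case --- most notably for continuous evolution, where the extended flow must leave $\variableSet{\traceState_2}$ constant --- is the crux; once this invariant is established, the gluing and projection bookkeeping is routine.
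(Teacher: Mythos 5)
Your proposal is correct and follows exactly the route the paper takes: its proof of this lemma is simply ``By induction on $\program$,'' and your structural induction---with the strengthened invariant that the added variables stay frozen, the coincidence argument for terms, tests, and evolution constraints, and the gluing through joined intermediate states in the sequential and loop cases---is a faithful elaboration of that same induction. No gaps; you have merely made explicit the details the paper leaves implicit.
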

\begin{proof} 
By induction on $\program$. \end{proof}




\begin{lemma}[Projection preserves trace]
  \label{lemma:projection-trace-existance} For program  $\program$, 
\begin{align*}
  \forall \tracesym \in \traceSem{\program},  \tproj{\tracesym}{\program}  \in \traceSem{\program} \text{~and~} \worldloweqfun{\tproj{\tracesym}{\program}}{\tracesym}{id_{\variableSet{\program}}}
\end{align*}
\end{lemma}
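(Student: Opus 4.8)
The plan is to prove both conjuncts by structural induction on $\program$, following the inductive definition of the trace semantics $\traceSem{\program}$. I would dispatch the equivalence conjunct $\worldloweqfun{\tproj{\tracesym}{\program}}{\tracesym}{id_{\variableSet{\program}}}$ first, since it is essentially definitional: by Definition~\ref{def:trace-state-projection} projection acts pointwise and alters neither the number of trace functions nor their domains, and it preserves the value of every variable in $\variableSet{\program}$. Hence $\tproj{\tracesym}{\program}$ and $\tracesym$ have the same sequence of domains and, at every position, states that agree on $\variableSet{\program}$; the trace-equivalence rules of Definition~\ref{def:low-eq-trace} then apply with $\varfun = id_{\variableSet{\program}}$, since corresponding first states, last states, and all non-point trace functions agree on $\variableSet{\program}$.

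The substance is the membership claim $\tproj{\tracesym}{\program} \in \traceSem{\program}$, where $\traceSem{\program}$ is read over the variable set $\variableSet{\program}$. Because the composite cases must project onto the parent's variable set rather than a subprogram's smaller one, I would strengthen the induction hypothesis to: for every $V \supseteq \variableSet{\program}$ and every $\tracesym \in \traceSem{\program}$, $\tproj{\tracesym}{V} \in \traceSem{\program}$; the lemma is the instance $V = \variableSet{\program}$. For the base cases $x:=\theta$, $x:=*$, $?\phi$, and $x'=\theta\,\&\,\evolconstraint$, the result reads off the corresponding semantic clause: every term and formula occurring in $\program$ has free variables contained in $\variableSet{\program} \subseteq V$, so the defining equalities, tests, and evolution-domain checks are insensitive to the dropped variables; and in each clause every variable outside $\BV{\program}$ (in particular every variable of $V \setminus \variableSet{\program}$) is held constant, the differential-equation clause using that the flow fixes all variables other than $x$. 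Thus the projected objects remain valid traces.

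For the inductive cases, choice $\alpha \cup \beta$ reduces immediately to the strengthened hypothesis applied to whichever branch produced $\tracesym$, using $V \supseteq \variableSet{\alpha \cup \beta} \supseteq \variableSet{\alpha}, \variableSet{\beta}$. Sequential composition $\alpha;\beta$ uses that projection distributes over concatenation, $\tproj{(\tracesym \circ \rho)}{V} = \tproj{\tracesym}{V} \circ \tproj{\rho}{V}$: the hypothesis gives $\tproj{\tracesym}{V} \in \traceSem{\alpha}$ and $\tproj{\rho}{V} \in \traceSem{\beta}$, and definedness of the composition is preserved because projection respects termination and the matching $\traceEnd{\tracesym} = \traceTop{\rho}$. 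Repetition $\alpha^*$ then follows by an inner induction on the iteration count $n$ built from the sequential case, with non-terminating traces handled by the fact that projection preserves non-termination. The closely related Lemma~\ref{lemma:composition-disjoint} can be invoked here to justify carrying the constant variables of $V \setminus \variableSet{\program}$ along the projection.

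I expect the sequential-composition case to be the main obstacle, for two reasons. First, one must verify carefully that projection genuinely commutes with the trace-composition operator $\circ$, including the bookkeeping of how trace functions join and the preservation of its definedness side condition. Second, the variable-set mismatch between a subprogram and its enclosing program is exactly what forces the generalized hypothesis over arbitrary supersets $V$; discharging it requires establishing, as a coincidence-style fact threaded through every case, that variables in $V \setminus \variableSet{\program}$ stay constant along any trace of $\program$ and may therefore be retained under projection without violating any semantic clause.
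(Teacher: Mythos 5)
Your proposal is correct and matches the paper's approach: the paper's entire proof is the one-line remark ``By the definition of trace semantics and projection,'' and your structural induction (with the hypothesis strengthened to arbitrary supersets $V \supseteq \variableSet{\program}$ to handle the variable-set mismatch in composite cases) is precisely the rigorous unfolding of that remark. The details you flag—projection commuting with trace composition $\circ$ and the coincidence-style fact that variables outside $\variableSet{\program}$ stay constant—are exactly the bookkeeping the paper leaves implicit, so no gap or divergence to report.
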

\begin{proof} By the definition of trace semantics and projection. \end{proof}


\begin{lemma}[Projection not affected by programs with disjoint variables]
  \label{lemma:project-not-affected-disjoint} For program  $\program_1$ and $\program_2$ such that  $\BV{\program_1}$ $\cap$  $\BV{\program_2}$ = $\emptyset$, 
\begin{align*}
  \forall \tracesym \in \traceSem{\program_1},  \tproj{\traceTop{\tracesym}}{\program_2} = \tproj{\traceEnd{\tracesym}}{\program_2}
\end{align*}
\end{lemma}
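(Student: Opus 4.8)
The plan is to reduce the claimed equality of projected states to a pointwise statement about preserved variables, and then discharge that statement using the fact that a hybrid program can only alter its own bound variables. Recall that $\tproj{\traceTop{\tracesym}}{\program_2}$ abbreviates $\tproj{\traceTop{\tracesym}}{\variableSet{\program_2}}$, and by the definition of projection two such restricted states are equal exactly when $\traceTop{\tracesym}$ and $\traceEnd{\tracesym}$ assign the same value to every $x \in \variableSet{\program_2}$. I restrict attention to terminating traces $\tracesym$, since otherwise $\traceEnd{\tracesym}$ is not defined (for non-terminating or failure traces the statement is vacuous). Thus it suffices to show: for every $x \in \variableSet{\program_2}$, running $\program_1$ does not change the value of $x$ between the first and last states of $\tracesym$.

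The engine for this is the \emph{bound-effect} property of hybrid programs: for any trace $\tracesym \in \traceSem{\program_1}$ and any variable $x \notin \BV{\program_1}$, the value $\tracesym$ assigns to $x$ is constant along the whole trace, in particular $\traceTop{\tracesym}(x) = \traceEnd{\tracesym}(x)$. This is exactly the bound-variable lemma of $\differentiallogic$ (Lemma 4 of \cite{platzer2017complete}), already invoked in the proof of Lemma~\ref{theorem:eq-state-eq}; if one prefers a self-contained argument it follows by structural induction on $\program_1$, where the assignment and nondeterministic-assignment cases touch only $x$, the test case is an identity flow, the continuous case $x'=\theta\,\&\,\evolconstraint$ modifies only $\{x,x'\}$ along its solution, and the composite cases $\alpha\cup\beta$, $\alpha;\beta$, and $\alpha^*$ follow from the induction hypothesis together with $\BV{\alpha;\beta}=\BV{\alpha\cup\beta}=\BV{\alpha}\cup\BV{\beta}$ and $\BV{\alpha^*}=\BV{\alpha}$.

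It then remains to connect the disjointness hypothesis to this property, and this is the step that needs the most care. I would establish that no variable of $\program_2$ lies among the bound variables of $\program_1$, i.e.\ $\variableSet{\program_2}\cap\BV{\program_1}=\emptyset$; once this holds, every $x\in\variableSet{\program_2}$ satisfies $x\notin\BV{\program_1}$, the bound-effect property gives $\traceTop{\tracesym}(x)=\traceEnd{\tracesym}(x)$, and the two projections coincide. The subtlety is that $\variableSet{\program_2}=\BV{\program_2}\cup\FV{\program_2}$, so the stated hypothesis $\BV{\program_1}\cap\BV{\program_2}=\emptyset$ rules out clashes with the bound variables of $\program_2$ but not automatically with its free variables; a variable that is read-only in $\program_2$ could in principle still be written by $\program_1$. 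In the intended use this gap does not arise, because $\program_1$ and $\program_2$ are the two renamed copies arising in a self-composition whose entire variable sets are disjoint (renaming sends the bound variables of one copy to fresh variables, and the remaining shared variables are read-only in both), so in fact $\variableSet{\program_1}\cap\variableSet{\program_2}=\emptyset$ and \emph{a fortiori} $\variableSet{\program_2}\cap\BV{\program_1}=\emptyset$. I therefore expect the main obstacle to be pinning the disjointness hypothesis at exactly the strength the argument consumes --- over all of $\variableSet{\program_2}$ rather than merely $\BV{\program_2}$ --- and verifying that the interleaved composition of Definition~\ref{def:composition} indeed supplies it.
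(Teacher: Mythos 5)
Your proof is correct, and its engine is the same as the paper's: the official one-line proof (``by induction on $\program_1$ and definition of $\tproj{}{}$'') is precisely the structural induction that establishes the bound-effect property you invoke, so citing Lemma~4 of \cite{platzer2017complete} instead of redoing that induction is a legitimate shortcut, and your restriction to terminating traces matches the paper's convention that $\traceEnd{\tracesym}$ is defined only for traces that terminate normally. The genuinely valuable part of your write-up is the subtlety you flag, and you are right about it: as literally stated, the lemma does not follow from $\BV{\program_1}\cap\BV{\program_2}=\emptyset$, because $\tproj{\cdot}{\program_2}$ projects onto $\variableSet{\program_2}=\BV{\program_2}\cup\FV{\program_2}$. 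A two-line counterexample: take $\program_1$ to be $\HPassignS{x}{x+1}$ and $\program_2$ to be $\HPassignS{y}{x}$; their bound variables $\{x\}$ and $\{y\}$ are disjoint, yet the projections of the first and last states onto $\variableSet{\program_2}=\{x,y\}$ disagree on $x$ for every nontrivial trace of $\program_1$. So the hypothesis the argument actually consumes is $\variableSet{\program_2}\cap\BV{\program_1}=\emptyset$, and your check that the intended instances supply it is also correct: in the use inside Lemma~\ref{lemma:composition-plant}, $\program_2$ is the renamed copy $\renaming{\attacked{ctrl}{S_A}}{\xi}$, whose variable set consists of fresh variables (images of bound variables under $\xi$) together with shared variables that are bound in neither copy, so by Definition~\ref{def:var-rename} none of them can lie in $\BV{ctrl}$; symmetrically for the other direction. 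In short: same approach as the paper, but your version repairs an over-weak hypothesis that the paper's terse proof silently glosses over, and that repair (or restating the lemma with the stronger disjointness hypothesis) is what a careful presentation should adopt.
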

\begin{proof} By induction on $\program_1$ and definition of $\tproj{}{}$. \end{proof}


\begin{lemma}[Composition preserves trace existence] \label{lemma:composition-plant} For program $\alpha = (ctrl; x'=\theta\&\evolconstraint)$, 
  \begin{align*}
\forall & \traceState_{1}, \traceState_{2} : \stateSet \suchthat \worldloweqfun{\traceState_{1}}{\traceState_{2}}{\varfun}, \\
          & \variableSet{\traceState_{1}} = \variableSet{\alpha}, \text{~and~} \variableSet{\traceState_{2}} = \variableSet{\renaming{\alpha}{\xi}}, \\
          & ~\forall \tracesym \in \traceSem{\alpha} \suchthat \traceTop{\tracesym} = \traceState_{1}, \\ 
    & ~\exists \tracesym' \in \traceSem{ctrl;\renaming{\attacked{ctrl}{S_A}}{\xi}; (x'=\theta, \renaming{x'=\theta}{\xi}) \\
          & ~~~~~~~~~~~~~~ \&(\evolconstraint \land \renaming{\evolconstraint}{\xi})} \suchthat \\
          & ~~~ \worldloweqfun{\tproj{\tracesym'}{\alpha}}{\tracesym^{a}}{id} \text{~and~} \tproj{(\traceTop{\tracesym'})}{\renaming{\alpha}{\xi}} = \traceState_{2} 
\end{align*}
Where $\varfun$ is $ \{ (x, \renaming{x}{\xi}) ~|~ x \in \eqSet \}$.
\end{lemma}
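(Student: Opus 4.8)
The plan is to construct the witness trace $\tracesym'$ segment by segment, following the three-part sequential structure of the interleaved program: the original $ctrl$, then the renamed attacked control $\renaming{\attacked{ctrl}{S_A}}{\xi}$, then the parallel differential equation. First I would invoke the trace semantics of sequential composition to split the given trace $\tracesym \in \traceSem{\alpha}$ into a control prefix $\tracesym_c \in \traceSem{ctrl}$ and a plant suffix $\tracesym_p \in \traceSem{x'=\theta\,\&\,\evolconstraint}$ with $\traceEnd{\tracesym_c} = \traceTop{\tracesym_p}$; write $r$ for the duration of $\tracesym_p$. Because $\BV{ctrl}$ is disjoint from $\variableSet{\renaming{\alpha}{\xi}}$, Lemma~\ref{lemma:composition-disjoint} lets me lift $\tracesym_c$ to a trace over the joined variable set whose renamed half is held constant at $\traceState_2$; by construction this segment's start state already delivers the second conclusion, $\tproj{\traceTop{\tracesym'}}{\renaming{\alpha}{\xi}} = \traceState_2$.

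Next I would produce the renamed attacked control segment. Since $\worldloweq{\traceState_1}{\traceState_2}{\eqSet}$ and $\tracesym$ witnesses a normal execution of $\alpha$ from $\traceState_1$, Lemma~\ref{lemma:renamed.trace.existence} (which chains the totality hypothesis, Assumption~\ref{well-formed-inputs}, with Lemma~\ref{lemma:rename.exists.trace}) supplies a trace of $\renaming{\attacked{\alpha}{S_A}}{\xi}$ starting at $\renaming{\traceState_2}{\xi}$. I split off its control prefix and again lift it, via Lemma~\ref{lemma:composition-disjoint}, to the joined space while freezing the original half at $\traceEnd{\tracesym_c}$. Concatenating the two lifted control segments with the trace-composition operator $\circ$ (their endpoints agree by construction) yields the discrete portion of $\tracesym'$, after which both copies are poised to enter the parallel ODE.

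The main obstacle is the final segment, the parallel evolution $x'=\theta,\,\renaming{x'=\theta}{\xi}\,\&\,(\evolconstraint \land \renaming{\evolconstraint}{\xi})$: to make $\tproj{\tracesym'}{\alpha}$ coincide with $\tracesym$ I must let the original copy follow $\tracesym_p$ for its \emph{full} duration $r$, which forces the renamed copy to evolve for the same duration $r$ while satisfying $\renaming{\evolconstraint}{\xi}$ throughout $[0,r]$. This reconciliation of the two nondeterministically chosen durations is precisely what forces the applicability restrictions of Appendix~\ref{appendix:limitation}: since evolution-domain constraints are required to mention no low-integrity variables, $\renaming{\evolconstraint}{\xi}$ constrains only high-integrity renamed variables, and combined with totality (Assumption~\ref{well-formed-inputs}), which guarantees the renamed attacked execution does not fail, this lets me argue that the admissible evolution interval for the renamed copy covers $[0,r]$, so its unique ODE solution can be continued for exactly duration $r$. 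Having matched durations, I would compose the plant segment onto the discrete prefix and close the argument by verifying the two projection claims: Lemma~\ref{lemma:project-not-affected-disjoint} shows each copy's control step leaves the other copy's projection untouched, and Lemma~\ref{lemma:projection-trace-existance} confirms that projecting $\tracesym'$ onto $\variableSet{\alpha}$ recovers $\tracesym$ exactly, giving $\worldloweqfun{\tproj{\tracesym'}{\alpha}}{\tracesym}{id}$. I expect this duration-matching step to be the delicate heart of the proof.
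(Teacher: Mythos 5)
Your proposal takes essentially the same route as the paper's own proof: the same split of $\tracesym$ into a control prefix and a plant suffix, the same use of Assumption~\ref{well-formed-inputs} (via Lemma~\ref{lemma:renamed.trace.existence} and Lemma~\ref{lemma:rename.exists.trace}) to obtain the renamed attacked trace, the same lifting of the two control segments to the joined variable space via Lemma~\ref{lemma:composition-disjoint} and Lemma~\ref{lemma:project-not-affected-disjoint}, and the same duration-matching argument for the parallel ODE---justified, exactly as in the paper, by the restriction that evolution constraints mention no low-integrity variables---followed by the pointwise join of the two solutions. You also correctly single out duration matching as the delicate heart of the argument, which is precisely where the paper's proof leans on the same applicability restriction.
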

\begin{proof} Let $\tracesym^a = (\tracefun^a_0...\tracefun^a_m)$, then $(\tracefun^a_0...\tracefun^a_{m-1})$ is a trace of $ctrl$, and $\tracefun_m$ : $[0, r_1] \mapsto \stateSet$ is a trace function for $x'=\theta\&\evolconstraint$.
According to Assumption~\ref{well-formed-inputs}, there exists $\tracesym^b \in \traceSem{\renaming{\attacked{\alpha}{S_A}}{\xi}}$. 
We can then prove the part of $ctrl; \renaming{\attacked{ctrl}{S_A}}{\xi}$ by lemma~\ref{lemma:composition-disjoint}, \ref{lemma:project-not-affected-disjoint} and the definition of $\worldloweqfun{}{}{}$. For the plant part, we know (by Assumption~\ref{well-formed-inputs}) low-integrity values cannot affect evolution constraints, meaning input states $\traceState_{1}$ and $\traceState_{2}$ should be able to last the same duration of evolution. 
Thus, for any duration $r_1$ that trace $\tracesym^a$ has, the duration of the other trace $\tracefun^b$ can match it, i.e., $r_1$ = $r_2$. 
Thus there exist a trace function $[0, r_1] \mapsto \stateSet$: $x \mapsto \joinworld{\tracesym_{1}(x)}{(\tproj{\tracesym_{2}(x)}{\BV{\renaming{\alpha}{\xi}}} )}$ for the composed dynamic $(x'=\theta, \renaming{x'=\theta}{\xi})\&(\evolconstraint \land \renaming{\evolconstraint}{\xi})$, whose $\alpha$ projection is indistinguishable from $\tracesym^a$.
Combined with the result for $ctrl; \renaming{\attacked{ctrl}{S_A}}{\xi}$, this lemma is proven. 
\end{proof}

\begin{lemma}[Assigning the same value to connected variables preserves equivalence] \label{lemma:assignH.preserve.eq.state} 
\begin{align*}
  \forall & \traceState_{1}, \traceState_{2} : \stateSet \suchthat
          \worldloweqfun{\traceState_{1}}{\traceState_{2}}{\varfun}, \\
          & \forall x : \allvariableSet, ~d : \mathbb{R} \suchthat x \in dom(\varfun), \\
          & ~\worldloweqfun{\traceState_{1}[x \mapsto d]}{\traceState_{2}[\xi(x) \mapsto d]}{\varfun}
\end{align*}
\end{lemma}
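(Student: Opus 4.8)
The plan is to unfold Definition~\ref{def:low-eq-state} and reduce the claim to a pointwise check, then split on whether the queried variable equals $x$. Recall that throughout these lemmas $\varfun = \{(y, \renaming{y}{\xi}) \mid y \in \eqSet\}$, so $dom(\varfun) = \eqSet$ and $\varfun(y) = \xi(y)$ for every $y \in dom(\varfun)$; in particular $\varfun(x) = \xi(x)$, which is legitimate since $x \in dom(\varfun)$ by hypothesis. By Definition~\ref{def:low-eq-state}, to establish $\worldloweqfun{\traceState_1[x \mapsto d]}{\traceState_2[\xi(x) \mapsto d]}{\varfun}$ it suffices to show, for every $y \in dom(\varfun)$, that $(\traceState_1[x \mapsto d])(y) = (\traceState_2[\xi(x) \mapsto d])(\varfun(y))$.

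First I would treat the case $y = x$. Here the left-hand side is $d$ by definition of the update $[x \mapsto d]$, and since $\varfun(x) = \xi(x)$ the right-hand side is $(\traceState_2[\xi(x) \mapsto d])(\xi(x)) = d$, so the two agree. Next, for $y \neq x$ with $y \in dom(\varfun)$, the left-hand side equals $\traceState_1(y)$ because updating position $x$ leaves $y$ untouched. For the right-hand side I would use that $\xi$ is a bijection (clause~1 of Definition~\ref{def:var-rename}), hence injective, so $y \neq x$ forces $\xi(y) \neq \xi(x)$; therefore $(\traceState_2[\xi(x) \mapsto d])(\xi(y)) = \traceState_2(\xi(y)) = \traceState_2(\varfun(y))$. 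The hypothesis $\worldloweqfun{\traceState_1}{\traceState_2}{\varfun}$ then gives $\traceState_1(y) = \traceState_2(\varfun(y))$, closing this case.

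The only substantive point — and the single place where a nontrivial fact is invoked — is the injectivity of $\xi$ in the second case: without it, writing $d$ into position $\xi(x)$ of $\traceState_2$ could inadvertently overwrite the value at $\xi(y)$ on which the equivalence depends, and the argument would break. Everything else is a direct unfolding of the state-update operation and of Definition~\ref{def:low-eq-state}, so I expect no further obstacles; the lemma is essentially a bookkeeping step that licenses coupling the high-integrity assignments (and choice variables) of the two composed executions in the soundness proof of Theorem~\ref{theorem:soundness-composition}.
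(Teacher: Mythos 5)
Your proof is correct and takes essentially the same route as the paper, which disposes of this lemma (together with Lemma~\ref{lemma:assignL.preserve.eq.state}) in one line as following ``by the definition of $\worldloweqfun{}{}{\varfun}$.'' Your write-up is simply the careful unfolding of that definition, making explicit the case split on $y = x$ versus $y \neq x$ and the one genuinely load-bearing fact---injectivity of $\xi$ from clause~1 of Definition~\ref{def:var-rename}, which guarantees the update at $\xi(x)$ cannot clobber $\traceState_2(\xi(y))$---that the paper's terse proof leaves implicit.
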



\begin{lemma}[Assigning arbitrary values to non-connected variables preserves equivalence] \label{lemma:assignL.preserve.eq.state}
\begin{align*}
  \forall & \traceState_{1}, \traceState_{2} : \stateSet \suchthat 
          \worldloweqfun{\traceState_{1}}{\traceState_{2}}{\varfun}, \\
          & \forall x : \allvariableSet,~ d_1, d_2 : \mathbb{R} \suchthat x \not\in dom(\varfun)  \\
          &                                                        ~~\worldloweqfun{\traceState_{1}[x \mapsto d_1]}{\traceState_{2}[\xi(x) \mapsto d_2]}{\varfun}
\end{align*}
\end{lemma}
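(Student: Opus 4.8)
The plan is to prove this purely by unfolding the definition of state equivalence (Definition~\ref{def:low-eq-state}) and checking the agreement condition coordinatewise, exploiting that the two updates touch variables that lie entirely outside the graph of $\varfun$. This is the companion to Lemma~\ref{lemma:assignH.preserve.eq.state}: there $x \in dom(\varfun)$ and the \emph{same} value is written to the connected pair, whereas here $x \notin dom(\varfun)$ and the two (possibly distinct) values $d_1, d_2$ are written to variables that the equivalence relation does not constrain.

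First I would recall that, by Definition~\ref{def:low-eq-state}, the goal $\worldloweqfun{\traceState_1[x \mapsto d_1]}{\traceState_2[\xi(x) \mapsto d_2]}{\varfun}$ is by definition the statement that for every $y \in dom(\varfun)$,
\[
  (\traceState_1[x \mapsto d_1])(y) = (\traceState_2[\xi(x) \mapsto d_2])(\varfun(y)).
\]
So I fix an arbitrary $y \in dom(\varfun)$ and argue that each update is vacuous at the coordinate at which it is read. On the left, since by hypothesis $x \notin dom(\varfun)$ while $y \in dom(\varfun)$, we have $y \neq x$, whence $(\traceState_1[x \mapsto d_1])(y) = \traceState_1(y)$. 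On the right I must show $\varfun(y) \neq \xi(x)$: following the elided convention (stated before Definition~\ref{def:low-eq-state}) that $\varfun = \{(z,\xi(z)) : z \in \eqSet\}$ is the restriction of the bijection $\xi$ to $dom(\varfun) = \eqSet$, we have $\varfun(y) = \xi(y)$, and since $\xi$ is injective (Definition~\ref{def:var-rename}) and $y \neq x$, it follows that $\varfun(y) = \xi(y) \neq \xi(x)$. Hence $(\traceState_2[\xi(x) \mapsto d_2])(\varfun(y)) = \traceState_2(\varfun(y))$. Combining these two simplifications with the hypothesis $\worldloweqfun{\traceState_1}{\traceState_2}{\varfun}$, which supplies $\traceState_1(y) = \traceState_2(\varfun(y))$, yields the required equality; as $y$ was arbitrary the equivalence holds.

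The only genuine subtlety, and thus the main obstacle, is the right-hand step: ruling out that the overwritten coordinate $\xi(x)$ coincides with the image $\varfun(y)$ of some variable tracked by $\varfun$. This is exactly where both injectivity of the renaming function $\xi$ and the hypothesis $x \notin dom(\varfun)$ are indispensable — without injectivity, writing to $\xi(x)$ could silently clobber a value that the equivalence relation constrains, and the lemma would fail. Note that the values $d_1$ and $d_2$ play no role whatsoever and need not agree, which is precisely the content of the lemma: non-connected (low-integrity) variables are free to diverge between the two executions without disturbing $\Hequivalence$.
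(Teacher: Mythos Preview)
Your proof is correct and follows the same approach as the paper, which simply states that the lemma ``can be proven by the definition of $\worldloweqfun{}{}{\varfun}$'' without further elaboration. Your unfolding of Definition~\ref{def:low-eq-state} and explicit appeal to the injectivity of $\xi$ to rule out $\varfun(y)=\xi(x)$ supplies exactly the detail the paper omits.
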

Lemma~\ref{lemma:assignH.preserve.eq.state} and \ref{lemma:assignL.preserve.eq.state} can be proven by the definition of $\worldloweqfun{}{}{\varfun}$.


\begin{lemma}[Choice part of composition preserve equivalence] For program $choices$ that consists of non-deterministic assignments of choice variables, \label{lemma:composition-choice}
  \begin{align*}
  & \forall \traceState_{1}, \traceState_{2} : \stateSet \suchthat \worldloweqfun{\traceState_{1}}{\traceState_{2}}{\varfun}, \\
  &        ~~\forall \tracesym \in \traceSem{choices} \suchthat \traceTop{\tracesym} = \traceState_{1}, \\ 
  &        ~~~\exists \tracesym' \in \traceSem{choices; \subs{choices}{\xi}} \suchthat \\
  &        ~~~~ \worldloweqfun{\tproj{\tracesym'}{choices}}{\tracesym}{id} \text{~and~} \\
  &        ~~~~~ \worldloweqfun{\tproj{\traceEnd{\tracesym'}}{choices}}{\tproj{\traceEnd{\tracesym'}}{\renaming{choices}{\xi}}}{\varfun}
\end{align*}
\end{lemma}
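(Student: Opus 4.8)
The plan is to construct the witness trace $\tracesym'$ explicitly, by replaying in $\tracesym'$ the nondeterministic choices already recorded in $\tracesym$ and then deterministically copying them into the renamed choice variables. First I would exploit the simple shape of $choices$: it is a sequence $c_1 := *;\dots;c_k := *$ of nondeterministic assignments to the choice variables, containing no tests and no continuous evolution, so by the trace semantics every $\tracesym \in \traceSem{choices}$ with $\traceTop{\tracesym} = \traceState_1$ is a sequence of point flows (with the usual stuttering at composition boundaries) that successively sets each $c_i$ to some value $d_i \in \mathbb{R}$ while fixing every other variable. I would extract these values $d_1,\dots,d_k$. I would then build $\tracesym'$ as a trace of $choices; \subs{choices}{\xi}$ whose $choices$-prefix is the point-flow sequence of $\tracesym$ (assigning each $c_i$ the same $d_i$) and whose $\subs{choices}{\xi}$-suffix assigns $\xi(c_i) := c_i$ for each $i$; since $c_i = d_i$ at that point, the suffix sets $\xi(c_i) = d_i$. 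This is a legal trace by the composition clause of the trace semantics, as the prefix terminates and its last state is the first state of the suffix.

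For the first conjunct, I would note that the $choices$-prefix replays the assignments of $\tracesym$ and hence reproduces $\tracesym$ on the choice variables, whereas the suffix writes only the renamed variables $\xi(c_i) \notin \variableSet{choices}$ and therefore leaves the $choices$-state frozen at the last state of $\tracesym$. All the extra trace functions contributed by the suffix are point flows with domain $[0,0]$, so the trace-equivalence rules of Figure~\ref{fig:definition-Heq-trace}---in particular the rule \textsc{trace-discrete}, which collapses $[0,0]$-domain point flows and only demands agreement on the first state, the non-point trace functions, and the last state---deliver $\worldloweqfun{\tproj{\tracesym'}{choices}}{\tracesym}{id}$. Lemma~\ref{lemma:projection-trace-existance} supplies that $\tproj{\tracesym'}{choices}$ is itself a trace.

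For the second conjunct I would argue directly from the construction. After the suffix we have $\traceEnd{\tracesym'}(\xi(c_i)) = \traceEnd{\tracesym'}(c_i) = d_i$ for every choice variable $c_i$, because $\subs{choices}{\xi}$ performs the assignment $\xi(c_i) := c_i$ and nothing thereafter rewrites either variable; this is exactly the situation covered by Lemma~\ref{lemma:assignH.preserve.eq.state}. Restricting to those choice variables that lie in $\eqSet = dom(\varfun)$---the only coordinates on which the $\varfun$-equivalence of the two projections imposes any constraint---yields $\worldloweqfun{\tproj{\traceEnd{\tracesym'}}{choices}}{\tproj{\traceEnd{\tracesym'}}{\renaming{choices}{\xi}}}{\varfun}$.

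I expect the main obstacle to be the trace-level bookkeeping in the first conjunct: justifying rigorously that the $[0,0]$-domain point flows inserted by $\subs{choices}{\xi}$ can be absorbed without perturbing $id$-equivalence, which forces a careful, rule-by-rule appeal to the inductive definition of $\worldloweqfun{\cdot}{\cdot}{\cdot}$ on traces in Figure~\ref{fig:definition-Heq-trace}. I would also remark that the hypothesis $\worldloweqfun{\traceState_1}{\traceState_2}{\varfun}$ is not actually needed for this lemma: since $\subs{choices}{\xi}$ re-establishes agreement on the choice variables by fresh copying, the equivalence of the two projections holds regardless of the initial values of the renamed variables.
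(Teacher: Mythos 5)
Your proof is correct and takes essentially the same route as the paper's: both construct the witness trace by replaying the nondeterministic choices of $\tracesym$ in the prefix and letting $\subs{choices}{\xi}$ deterministically copy each $c_i$ into $\xi(c_i)$, then conclude via Lemma~\ref{lemma:assignH.preserve.eq.state} (with Lemma~\ref{lemma:assignL.preserve.eq.state} covering the variables outside $dom(\varfun)$, which your ``only constrained coordinates'' remark handles) and an implicit per-assignment induction. Your version is more explicit than the paper's sketch about the $[0,0]$-domain point-flow bookkeeping against the rules of Figure~\ref{fig:definition-Heq-trace}, and your observation that the hypothesis $\worldloweqfun{\traceState_{1}}{\traceState_{2}}{\varfun}$ is never used is accurate for the lemma as literally stated, since $\traceState_{2}$ does not appear in the conclusion.
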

\begin{proof} 
Let $\tracesym$ = $(\tracefun_0 \dots \tracefun_p)$ be the trace of program $choices$, then there exists a trace $\tracesym^b$ for $\renaming{choices}{\xi}$ with the same length as $\tracesym$, i.e., $\tracesym^b = (\tracefun^b_0 \dots \tracefun^b_p)$. We can then get a trace for program $\subs{choices}{\xi}$ by altering corresponding variables in the state. Then by lemma~\ref{lemma:assignH.preserve.eq.state} and \ref{lemma:assignL.preserve.eq.state} and induction on the number of assignments in $choices$. \end{proof}


\begin{lemma}[Completeness of a single iteration] \label{lemma:composition-single}
 Let program $\programwoattack$ = $\alpha_p^*$ and $\compsym{P}{S_A}{\xi}$ = $\alpha_c^*$, 
\begin{align*}
  \forall & \traceState_{1}, \traceState_{2} : \stateSet \suchthat \worldloweqfun{\traceState_{1}}{\traceState_{2}}{\varfun}, \\
          & \variableSet{\traceState_{1}} = \variableSet{\programwoattack}, \text{~and~} \variableSet{\traceState_{2}} = \variableSet{\renaming{\programwoattack}{\xi}}, \\
          & ~\forall \tracesym \in \traceSem{\alpha_p} \suchthat \traceTop{\tracesym} = \traceState_{1}, \\ 
          & ~~\exists \tracesym' \in \traceSem{\alpha_c} \suchthat            \worldloweqfun{\tproj{\tracesym'}{\programwoattack}}{\tracesym}{id} \text{~and~} \\
  &~~~\tproj{\traceTop{\tracesym'}}{\renaming{\programwoattack}{\xi}} = \traceState_{2}
\end{align*}
\end{lemma}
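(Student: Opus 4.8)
The plan is to prove this single-iteration completeness statement by decomposing the loop body $\alpha_p = choices;\, ctrl;\, (x'=\theta\,\&\,\evolconstraint)$ along its sequential structure and reassembling a matching trace of the composed body $\alpha_c = choices;\, ctrl;\, \subs{choices}{\xi};\, \renaming{\attacked{ctrl}{S_A}}{\xi};\, (x'=\theta, \renaming{x'=\theta}{\xi}\,\&\,\evolconstraint \land \renaming{\evolconstraint}{\xi})$ from the pieces supplied by the already-established composition lemmas (Definition~\ref{def:composition} fixes these shapes). No induction over loop iterations is needed here, since the lift to $\alpha_c^*$ is deferred to a later step; the only decomposition is over the fixed body.

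First I would split the given trace $\tracesym \in \traceSem{\alpha_p}$ with $\traceTop{\tracesym} = \traceState_1$ at the two sequential-composition boundaries, writing $\tracesym = \tracesym_{ch} \circ \tracesym_{ct} \circ \tracesym_{pl}$ with $\tracesym_{ch} \in \traceSem{choices}$, $\tracesym_{ct} \in \traceSem{ctrl}$, and $\tracesym_{pl} \in \traceSem{x'=\theta\,\&\,\evolconstraint}$, which is legitimate by the trace semantics of sequential composition. I would then apply Lemma~\ref{lemma:composition-choice} to $\tracesym_{ch}$ starting from $\worldloweqfun{\traceState_1}{\traceState_2}{\varfun}$: this yields a trace $\tracesym'_{ch}$ of $choices;\, \subs{choices}{\xi}$ whose $choices$-projection equals $\tracesym_{ch}$ on all variables and whose end states satisfy $\worldloweqfun{\tproj{\cdot}{choices}}{\tproj{\cdot}{\renaming{choices}{\xi}}}{\varfun}$, i.e. the renamed choice variables have been forced to the same values as the originals. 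Feeding this end-state equivalence into Lemma~\ref{lemma:composition-plant} applied to $\tracesym_{ct}\circ\tracesym_{pl} \in \traceSem{ctrl;\, x'=\theta\,\&\,\evolconstraint}$ produces a trace of $ctrl;\, \renaming{\attacked{ctrl}{S_A}}{\xi};\, (x'=\theta, \renaming{x'=\theta}{\xi}\,\&\,\evolconstraint \land \renaming{\evolconstraint}{\xi})$ whose $\programwoattack$-projection is $id$-indistinguishable from $\tracesym_{ct}\circ\tracesym_{pl}$ and whose initial renamed-projection matches $\traceState_2$.

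There is one bookkeeping gap between the two lemmas: composing $\tracesym'_{ch}$ with the control-plant trace naturally produces a trace of $choices;\, \subs{choices}{\xi};\, ctrl;\, \renaming{\attacked{ctrl}{S_A}}{\xi};\, plant'$, whereas $\alpha_c$ places $\subs{choices}{\xi}$ after $ctrl$. I would close this gap by commuting $\subs{choices}{\xi}$ and $ctrl$: the bound variables of $\subs{choices}{\xi}$ are exactly the renamed choice variables $\xi(\BV{choices})$, which are disjoint from $\variableSet{ctrl} \subseteq \variableSet{\programwoattack}$, and in canonical form $ctrl$ reads but never writes the choice variables, so the two fragments act on disjoint bound variables with independent effects; Lemmas~\ref{lemma:composition-disjoint} and~\ref{lemma:project-not-affected-disjoint} justify that the reordering preserves both traces and all relevant projections. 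Composing the pieces via $\circ$ then gives the desired $\tracesym' \in \traceSem{\alpha_c}$, with initial state $\joinworld{\traceState_1}{\traceState_2}$ on the disjoint renamed bound variables so that $\tproj{\traceTop{\tracesym'}}{\renaming{\programwoattack}{\xi}} = \traceState_2$, and with $\tproj{\tracesym'}{\programwoattack}$ equal to $\tracesym$ because every renamed fragment leaves the $\programwoattack$-projection untouched (Lemma~\ref{lemma:project-not-affected-disjoint}).

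I expect the plant segment to be the main obstacle, specifically the argument that both copies of the continuous dynamics can evolve for exactly the same duration as $\tracesym_{pl}$. This is precisely where Assumption~\ref{well-formed-inputs} is indispensable: because the evolution-domain constraints $\evolconstraint$ and $\renaming{\evolconstraint}{\xi}$ do not mention low-integrity (attacker-controlled) variables and the two copies agree on high-integrity state through $\varfun$, the conjoined constraint $\evolconstraint \land \renaming{\evolconstraint}{\xi}$ admits a flow of the same duration that $\tracesym_{pl}$ has. Lemma~\ref{lemma:composition-plant} already packages this duration-matching argument, so the residual work is the careful tracking of $\varfun$-equivalence and of the projections $\tproj{\cdot}{\programwoattack}$ and $\tproj{\cdot}{\renaming{\programwoattack}{\xi}}$ through the sequential composition, rather than any fundamentally new reasoning about the dynamics.
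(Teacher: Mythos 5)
Your proposal is correct and follows essentially the same route as the paper, whose proof of Lemma~\ref{lemma:composition-single} is exactly an appeal to Lemma~\ref{lemma:composition-choice} for the $choices;\,\subs{choices}{\xi}$ segment and Lemma~\ref{lemma:composition-plant} for the $ctrl;\,\renaming{\attacked{ctrl}{S_A}}{\xi};\,plant'$ segment, stitched along the sequential structure of the body. Your explicit commuting of $\subs{choices}{\xi}$ past $ctrl$ (justified by disjointness of $\xi(\BV{choices})$ from $\variableSet{ctrl}$ and the fact that the canonical-form controller only reads choice variables) closes a bookkeeping gap the paper's one-line proof silently elides, so if anything your write-up is more careful than the original.
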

\begin{proof} By Lemma~\ref{lemma:composition-plant} and \ref{lemma:composition-choice}. \end{proof}


\begin{lemma}[Projections of a sequence] \label{lemma:proj-sequence-loweq} 
\begin{align*}
  \forall & 
          \tracesym \in \traceSem{\alpha; \beta} \suchthat \BV{\alpha} \cap \BV{\beta} = \emptyset, \\
          & \exists \tracesym^a \in \traceSem{\alpha}, \tracesym^b \in \traceSem{\beta} \suchthat \\
          & ~ \worldloweqfun{\tproj{\tracesym}{\alpha}}{\tracesym^{a}}{id} \text{~and~}
            \worldloweqfun{\tproj{\tracesym}{\beta}}{\tracesym^{b}}{id}
\end{align*}
\end{lemma}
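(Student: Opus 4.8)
The plan is to unfold the trace semantics of sequential composition and then build the two witness traces by projection. By the definition of $\traceSem{\alpha;\beta}$, every $\tracesym \in \traceSem{\alpha;\beta}$ is either $\tracesym = \tracesym_\alpha \circ \tracesym_\beta$ for some terminating $\tracesym_\alpha \in \traceSem{\alpha}$ and some $\tracesym_\beta \in \traceSem{\beta}$ with $\traceEnd{\tracesym_\alpha} = \traceTop{\tracesym_\beta}$, or $\tracesym = \tracesym_\alpha$ for a non-terminating $\tracesym_\alpha \in \traceSem{\alpha}$. I would treat the terminating case as the principal one. There I take $\tracesym^a := \tproj{\tracesym_\alpha}{\alpha}$ and $\tracesym^b := \tproj{\tracesym_\beta}{\beta}$. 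Membership $\tracesym^a \in \traceSem{\alpha}$ and $\tracesym^b \in \traceSem{\beta}$ is then immediate from Lemma~\ref{lemma:projection-trace-existance}, which also records that each projection is $id$-equivalent to the trace it was projected from on its own variable set.

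Next I would establish the two equivalences. Since projection distributes pointwise over concatenation, $\tproj{\tracesym}{\alpha} = \tproj{\tracesym_\alpha}{\alpha} \circ \tproj{\tracesym_\beta}{\alpha} = \tracesym^a \circ \tproj{\tracesym_\beta}{\alpha}$, so $\tproj{\tracesym}{\alpha}$ is exactly $\tracesym^a$ followed by the projection onto $\variableSet{\alpha}$ of the \emph{foreign} portion $\tracesym_\beta$. The key point is that this trailing portion is invisible to $id$-equivalence on $\variableSet{\alpha}$: because $\BV{\alpha} \cap \BV{\beta} = \emptyset$, Lemma~\ref{lemma:project-not-affected-disjoint} gives $\tproj{\traceTop{\tracesym_\beta}}{\alpha} = \tproj{\traceEnd{\tracesym_\beta}}{\alpha}$, so that (using $\traceEnd{\tracesym_\alpha} = \traceTop{\tracesym_\beta}$) the first and last states of $\tproj{\tracesym}{\alpha}$ coincide on $\variableSet{\alpha}$ with those of $\tracesym^a$. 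I would then drive the derivation of $\worldloweqfun{\tproj{\tracesym}{\alpha}}{\tracesym^a}{id}$ by the structural rules of Figure~\ref{fig:definition-Heq-trace}: the shared prefix is matched function-by-function via \textsc{Trace-plant} and \textsc{Single.function}, while the trailing $\beta$-functions are consumed by \textsc{Trace-discrete}, since on $\variableSet{\alpha}$ they do not move the endpoint and so can be absorbed after the shared prefix. The equivalence $\worldloweqfun{\tproj{\tracesym}{\beta}}{\tracesym^b}{id}$ is obtained by the symmetric argument, now treating the leading $\alpha$-portion as the transparent foreign part and appealing to Lemma~\ref{lemma:project-not-affected-disjoint} with the roles of $\alpha$ and $\beta$ exchanged.

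The main obstacle is the bookkeeping inside the trace-equivalence derivation, not the construction of the witnesses. Concretely, the delicate step is discharging the trailing (respectively leading) foreign portion against the rules of Figure~\ref{fig:definition-Heq-trace}: I must verify that every foreign trace function, once projected onto $\variableSet{\alpha}$ (resp. $\variableSet{\beta}$), can be aligned with \textsc{Trace-discrete} without introducing a mismatching continuous step, and that the accumulated effect leaves the relevant endpoint state fixed, so that the two traces agree on their last states. This is exactly where the disjointness hypothesis $\BV{\alpha}\cap\BV{\beta}=\emptyset$ is essential, via Lemma~\ref{lemma:project-not-affected-disjoint}. Finally, the secondary non-terminating case ($\tracesym = \tracesym_\alpha$) requires separate care: here $\tracesym^a := \tproj{\tracesym_\alpha}{\alpha}$ is itself non-terminating, so the last-state obligation is vacuous, but producing a matching $\tracesym^b \in \traceSem{\beta}$ is more subtle and would rely on the disjointness argument together with the availability of a $\beta$-trace from the appropriate state. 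I expect the terminating case, and specifically the \textsc{Trace-discrete} accounting for the foreign suffix, to be where essentially all of the work lies.
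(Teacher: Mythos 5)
Your proposal is correct and is essentially the paper's proof: the paper disposes of this lemma with the single line ``by induction on $\alpha$, $\beta$, and definition of projection,'' and your argument---unfolding $\traceSem{\alpha;\beta}$ into $\tracesym_\alpha \circ \tracesym_\beta$ (with the non-terminating case set aside), taking $\tracesym^a = \tproj{\tracesym_\alpha}{\alpha}$ and $\tracesym^b = \tproj{\tracesym_\beta}{\beta}$ via Lemma~\ref{lemma:projection-trace-existance}, and discharging the foreign segments with Lemma~\ref{lemma:project-not-affected-disjoint}---is exactly what that one-liner unfolds to. One small repair: \textsc{Trace-discrete} absorbs the projected foreign trace functions because they are point flows with domain $[0,0]$, not because they ``do not move the endpoint'' (endpoint-invariance, i.e.\ Lemma~\ref{lemma:project-not-affected-disjoint}, is what you need for the first- and last-state clauses of the equivalence), so the derivation---and, as you rightly suspect, the lemma itself---only goes through when the foreign segment contributes no continuous step, which is the case in the paper's application because the composed plant is split off separately by Lemma~\ref{lemma:composition-plant-split}.
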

\begin{proof} By induction on $\alpha$, $\beta$, and definition of projection. \end{proof}


\begin{lemma}[Soundness of the trace for composed plant] \label{lemma:composition-plant-split} For program $\alpha$ = $(ctrl; x'=\theta \& \evolconstraint)$ and $\beta$ = $(\renaming{\attacked{ctrl}{S_A}}{\xi}; \renaming{x'=\theta}{\xi} \& \renaming{\evolconstraint}{\xi})$,
\begin{align*}         
  \forall &\tracesym \in \traceSem{ctrl; \renaming{\attacked{ctrl}{S_A}}{\xi}; \\
          &  ~~~~~~(x'=\theta, \renaming{x'=\theta}{\xi}\&(\evolconstraint \land \renaming{\evolconstraint}{\xi})} \\
          & \exists \tracesym^{a} \in \traceSem{\alpha}, \tracesym^{b} \in \traceSem{\beta} \suchthat  \\
          & ~ \worldloweqfun{\tproj{\tracesym}{\alpha}}{\tracesym^{a}}{id} \text{~and~}
            \worldloweqfun{\tproj{\tracesym}{\beta}}{\tracesym^{b}}{id}
\end{align*}
\end{lemma}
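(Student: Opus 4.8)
The plan is to exploit the sequential structure of the composed program together with the disjointness of the two programs' bound variables (guaranteed by the renaming function, Definition~\ref{def:var-rename}), and then to split the control prefix and the parallel plant separately. First I would observe that the program in question is $(ctrl; \renaming{\attacked{ctrl}{S_A}}{\xi}); (x'=\theta, \renaming{x'=\theta}{\xi} \,\&\, \evolconstraint \land \renaming{\evolconstraint}{\xi})$, i.e.\ a control prefix followed by the combined plant. By the trace semantics of sequential composition, any $\tracesym$ in its trace set decomposes as $\tracesym_c \circ \tracesym_p$, where $\tracesym_c \in \traceSem{ctrl; \renaming{\attacked{ctrl}{S_A}}{\xi}}$ is a trace of the control prefix, $\tracesym_p$ is a trace of the combined plant, and $\traceEnd{\tracesym_c} = \traceTop{\tracesym_p}$ whenever $\tracesym_c$ terminates. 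The cases where $\tracesym$ does not terminate inside the prefix, or fails (ends in $\Lambda$) in either segment, are handled by the same decomposition with the corresponding tail empty, so I would concentrate on the normal terminating case.

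For the control prefix I would appeal to Lemma~\ref{lemma:proj-sequence-loweq}. Since $\xi$ renames all bound variables to fresh names, we have $\BV{ctrl} \cap \BV{\renaming{\attacked{ctrl}{S_A}}{\xi}} = \emptyset$, so the lemma yields traces $\tracesym_c^a \in \traceSem{ctrl}$ and $\tracesym_c^b \in \traceSem{\renaming{\attacked{ctrl}{S_A}}{\xi}}$ whose projections agree with those of $\tracesym_c$ on all variables. Intuitively, during the $ctrl$ steps the renamed variables are frozen and during the renamed-control steps the original variables are frozen (Lemma~\ref{lemma:project-not-affected-disjoint}); these frozen stretches consist of duration-$[0,0]$ trace functions, which the trace-equivalence relation of Definition~\ref{def:low-eq-trace} (rule \textsc{trace-discrete}) ignores, so each projection collapses to the corresponding single-program control trace.

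The crux of the proof is splitting the combined-plant trace $\tracesym_p$. A normal plant trace is a single trace function $\varphi : [0,r] \to \stateSet$ solving the product system $x'=\theta, \renaming{x'=\theta}{\xi}$ with $\evolconstraint \land \renaming{\evolconstraint}{\xi}$ holding throughout. Because $\{x,x'\}$ and $\xi(\{x,x'\})$ are disjoint, the two differential-equation blocks are decoupled: the $\variableSet{\alpha}$-components of $\varphi$ evolve exactly according to $x'=\theta$ and are unaffected by the renamed equations, so $\tproj{\varphi}{\alpha}$ is a solution of $x'=\theta$; moreover, since $\FV{\evolconstraint} \subseteq \variableSet{\alpha}$ and $\evolconstraint$ is a conjunct of the combined domain, $\evolconstraint$ holds along $\tproj{\varphi}{\alpha}$. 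Hence $\tproj{\varphi}{\alpha} \in \traceSem{x'=\theta \,\&\, \evolconstraint}$, and symmetrically $\tproj{\varphi}{\beta} \in \traceSem{\renaming{x'=\theta}{\xi} \,\&\, \renaming{\evolconstraint}{\xi}}$. This step is where I expect the real work to lie: it requires a clean argument that projecting a solution of a product ODE onto one coordinate block, under a conjoined evolution-domain constraint, yields a solution of the corresponding sub-ODE under its own constraint; the disjointness supplied by $\xi$ and the variable-scoping of $\evolconstraint$ are exactly what make this go through.

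Finally I would stitch the pieces: set $\tracesym^a = \tracesym_c^a \circ \tproj{\varphi}{\alpha}$ and $\tracesym^b = \tracesym_c^b \circ \tproj{\varphi}{\beta}$. Each composition is defined because the last state of the control trace and the first state of the plant projection agree (both are projections of the single join point $\traceEnd{\tracesym_c} = \traceTop{\tracesym_p}$), so $\tracesym^a \in \traceSem{\alpha}$ and $\tracesym^b \in \traceSem{\beta}$. Using $\tproj{\tracesym}{\alpha} = \tproj{\tracesym_c}{\alpha} \circ \tproj{\tracesym_p}{\alpha}$ together with the control result, Lemma~\ref{lemma:projection-trace-existance}, and the collapsing of frozen duration-$0$ segments per Definition~\ref{def:low-eq-trace}, gives $\worldloweqfun{\tproj{\tracesym}{\alpha}}{\tracesym^a}{id}$, and symmetrically $\worldloweqfun{\tproj{\tracesym}{\beta}}{\tracesym^b}{id}$, which is the claim. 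Note that, unlike Lemma~\ref{lemma:composition-plant}, this direction goes from the composed trace to the two component traces, and so it does not require the totality Assumption~\ref{well-formed-inputs}.
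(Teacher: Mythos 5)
Your proposal is correct and follows essentially the same route as the paper, whose entire proof is the one-liner ``By definition of trace semantics and Lemma~\ref{lemma:proj-sequence-loweq}'': your sequential decomposition of the composed trace, the appeal to Lemma~\ref{lemma:proj-sequence-loweq} via the bound-variable disjointness of $\xi$, and the coordinate-block splitting of the product ODE under the conjoined constraint are precisely the content the paper compresses into that citation. Your added observations---that frozen duration-$[0,0]$ segments are absorbed by the \textsc{trace-discrete} rule of Definition~\ref{def:low-eq-trace}, and that this direction (composed trace to component traces) does not need Assumption~\ref{well-formed-inputs}---are accurate elaborations rather than deviations.
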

\begin{proof} By definition of trace semantics and Lemma~\ref{lemma:proj-sequence-loweq}. 
\end{proof}


\begin{lemma}[Soundness of single iteration] \label{lemma:composition-both-valid-single} Let program $\programwoattack$ = $\alpha_p^*$, $\renaming{\attacked{\programwoattack}{S_A}}{\xi}$ = $\alpha_q^*$, and $\compsym{P}{S_A}{\xi}$ = $\alpha_c^*$, 
\begin{align*}
  \forall & \tracesym \in \traceSem{\alpha_c}, \\
          &  \exists \tracesym^a \in \traceSem{\alpha_p}, \tracesym^b \in \traceSem{\alpha_q} \suchthat \\
          & ~~\worldloweqfun{(\tproj{\tracesym}{\alpha_p})}{\tracesym^a}{id} \text{~and~}
\worldloweqfun{(\tproj{\tracesym}{\alpha_q})}{\tracesym^b}{id}
\end{align*}
\end{lemma}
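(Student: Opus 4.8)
The plan is to establish the soundness direction by decomposing a trace of the composed body $\alpha_c$ along its sequential structure and invoking the soundness-direction building blocks already proved, namely Lemma~\ref{lemma:composition-choice} for the choice prefix and Lemma~\ref{lemma:composition-plant-split} for the control-and-plant suffix, and then gluing the resulting projected traces back together. Recall that $\alpha_c = choices;\, ctrl;\, \subs{choices}{\xi};\, \renaming{\attacked{ctrl}{S_A}}{\xi};\, (x'=\theta, \renaming{x'=\theta}{\xi} \,\&\, \evolconstraint \land \renaming{\evolconstraint}{\xi})$, whereas the target bodies are $\alpha_p = choices;\, ctrl;\, (x'=\theta \,\&\, \evolconstraint)$ and $\alpha_q = \renaming{choices}{\xi};\, \renaming{\attacked{ctrl}{S_A}}{\xi};\, (\renaming{x'=\theta}{\xi} \,\&\, \renaming{\evolconstraint}{\xi})$. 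Given an arbitrary $\tracesym \in \traceSem{\alpha_c}$, I would produce witnesses $\tracesym^a \in \traceSem{\alpha_p}$ and $\tracesym^b \in \traceSem{\alpha_q}$ whose full-variable projections agree (on $id$) with $\tproj{\tracesym}{\alpha_p}$ and $\tproj{\tracesym}{\alpha_q}$ respectively.

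First I would reorganize $\alpha_c$ so that the two applicable lemmas line up with consecutive segments. The fragment $\subs{choices}{\xi}$ only reads the original choice variables and writes the renamed choice variables, while $ctrl$ neither writes the original choice variables (they are written only by $choices$) nor mentions the renamed ones; hence their read/write sets are disjoint and, by the commuting argument underlying Lemma~\ref{lemma:project-not-affected-disjoint}, the projections of $\tracesym$ are insensitive to reordering $\subs{choices}{\xi}$ before $ctrl$. This lets me treat $\tracesym$ as a trace of $(choices;\, \subs{choices}{\xi});\,(ctrl;\, \renaming{\attacked{ctrl}{S_A}}{\xi};\, plant_c)$, where $plant_c$ is the parallel-composed plant $(x'=\theta, \renaming{x'=\theta}{\xi} \,\&\, \evolconstraint \land \renaming{\evolconstraint}{\xi})$. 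Splitting at this sequential boundary, using the trace semantics of $;$ together with Lemma~\ref{lemma:proj-sequence-loweq}, yields a choice-prefix trace and a control-plus-plant-suffix trace whose projections I can analyze separately.

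Next I would apply the two workhorse lemmas. Lemma~\ref{lemma:composition-choice}, applied to the choice prefix, gives that its $choices$-projection is equivalent on all variables to a valid trace of $choices$, and that the renamed-choice projection agrees with the original choice values because $\subs{choices}{\xi}$ copies them; in particular this projection is also a valid trace of $\renaming{choices}{\xi}$, since forcing a specific value is a special case of the nondeterministic assignment in $\renaming{choices}{\xi}$. Lemma~\ref{lemma:composition-plant-split}, applied to the suffix, supplies $\tracesym^a_{\mathit{cp}} \in \traceSem{ctrl;\, x'=\theta\,\&\,\evolconstraint}$ and $\tracesym^b_{\mathit{cp}} \in \traceSem{\renaming{\attacked{ctrl}{S_A}}{\xi};\, \renaming{x'=\theta}{\xi}\,\&\,\renaming{\evolconstraint}{\xi}}$ whose projections match those of $\tracesym$, with the shared evolution duration already guaranteed by the conjoined evolution-domain constraint. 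Composing the choice-prefix witnesses with $\tracesym^a_{\mathit{cp}}$ and $\tracesym^b_{\mathit{cp}}$ via trace composition $\circ$ produces the desired $\tracesym^a \in \traceSem{\alpha_p}$ and $\tracesym^b \in \traceSem{\alpha_q}$, and Lemma~\ref{lemma:proj-sequence-loweq} again guarantees that the projections distribute over these sequential compositions so the pieces line up.

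I expect the main obstacle to be the gluing step rather than the individual lemma invocations. Concretely, I must check that the last state of each choice-prefix projection coincides with the first state of the corresponding control-plus-plant projection so that the composition $\circ$ is defined, and that the two halves' equivalences combine into a single trace equivalence under Definition~\ref{def:low-eq-trace}, taking care with its special treatment of duration-$[0,0]$ trace functions at the seam between the discrete control segments and the continuous plant. Handling non-terminating and failure traces of $\alpha_c$ requires the same case split as in the trace semantics of sequential composition (a trace failing or diverging inside the control part is matched by the projection of its valid prefix), so these cases fall out once the terminating case is settled.
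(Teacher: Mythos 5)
Your proposal takes essentially the same route as the paper's (one-line) proof, which likewise disposes of this lemma by splitting the composed body at its sequential seams via Lemma~\ref{lemma:proj-sequence-loweq} and handling the jointly-evolving dynamics via Lemma~\ref{lemma:composition-plant-split}; your additional care about the $\subs{choices}{\xi}$ segment (deterministic copies being instances of the nondeterministic assignments in $\renaming{choices}{\xi}$) and about gluing at the seams merely fills in detail the paper leaves implicit. One quibble: Lemma~\ref{lemma:composition-choice} is a completeness-direction lemma (it constructs a composed trace from a given trace of $choices$), so it is not the right citation for the soundness direction---but the fact you need there already follows from Lemma~\ref{lemma:proj-sequence-loweq} together with your subsumption observation, so nothing in your argument breaks.
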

\begin{proof} By Lemma~\ref{lemma:proj-sequence-loweq} and \ref{lemma:composition-plant-split}. 
\end{proof}


\begin{lemma}[Renaming preserve equivalence] \label{lemma:renaming}
\[
  \programwoattack \NI \renaming{\attacked{\programwoattack}{S_A}}{\xi} \leftrightarrow
  \worldloweq{\program}{\attacked{\programwoattack}{S_A}}{dom(\varfun)}
\]
\end{lemma}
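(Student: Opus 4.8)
The plan is to reduce the biconditional to a single bijection between the trace sets of $\attacked{\programwoattack}{S_A}$ and its renaming $\renaming{\attacked{\programwoattack}{S_A}}{\xi}$, induced by $\xi$, and then observe that this bijection transports $\varfun$-equivalence of traces into $id_{\eqSet}$-equivalence. Recall $\varfun = \{(x,\renaming{x}{\xi}) \mid x \in \eqSet\}$, so $dom(\varfun) = \eqSet$, and since $\eqSet \subseteq \BV{\programwoattack}$ while $\xi$ is the identity outside the bound variables (Definition~\ref{def:var-rename}), the map $\tracesym \mapsto \renaming{\tracesym}{\xi^{-1}}$ carries the value of $\renaming{x}{\xi}$ back to the value of $x$ at every position. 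Hence for any $\tracesym^a \in \traceSem{\programwoattack}$ and any $\tracesym^b \in \traceSem{\renaming{\attacked{\programwoattack}{S_A}}{\xi}}$ we have $\worldloweqfun{\tracesym^a}{\tracesym^b}{\varfun}$ iff $\worldloweqfun{\tracesym^a}{\renaming{\tracesym^b}{\xi^{-1}}}{id_{\eqSet}}$, position-by-position, by the definition of trace equivalence in Figure~\ref{fig:definition-Heq-trace}.

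First I would record the trace correspondence. By Lemma~\ref{lemma:rename.exists.trace}, renaming by $\xi$ sends each trace of $\attacked{\programwoattack}{S_A}$ to a trace of $\renaming{\attacked{\programwoattack}{S_A}}{\xi}$; because $\xi$ is a bijection, the same lemma applied to $\xi^{-1}$ supplies the inverse direction, so $\tracesym \mapsto \renaming{\tracesym}{\xi}$ is a bijection between $\traceSem{\attacked{\programwoattack}{S_A}}$ and $\traceSem{\renaming{\attacked{\programwoattack}{S_A}}{\xi}}$. For the forward direction of the lemma, assume $\programwoattack \NI \renaming{\attacked{\programwoattack}{S_A}}{\xi}$. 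Given $\tracesym^a \in \traceSem{\programwoattack}$, Definition~\ref{def:noninterference-two} yields $\tracesym^b \in \traceSem{\renaming{\attacked{\programwoattack}{S_A}}{\xi}}$ with $\worldloweqfun{\tracesym^a}{\tracesym^b}{\varfun}$; pulling back along $\xi^{-1}$ gives a genuine trace $\renaming{\tracesym^b}{\xi^{-1}} \in \traceSem{\attacked{\programwoattack}{S_A}}$ with $\worldloweqfun{\tracesym^a}{\renaming{\tracesym^b}{\xi^{-1}}}{id_{\eqSet}}$. Reading the agreement at the first state, at the non-trivial (plant) trace functions, and at the last state off Figure~\ref{fig:definition-Heq-trace} yields agreement on $\eqSet$ at every control-loop boundary, which is exactly the matching-execution requirement of $\worldloweq{\programwoattack}{\attacked{\programwoattack}{S_A}}{dom(\varfun)}$ in the sense of Definition~\ref{def:equivlance}. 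The converse direction is the mirror image: starting from the matching execution of $\attacked{\programwoattack}{S_A}$, push it forward along $\xi$ to recover the trace of $\renaming{\attacked{\programwoattack}{S_A}}{\xi}$ demanded by $\NI$.

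The main obstacle, and the point requiring the most care, is the mismatch in shape between the two sides: $\NI$ as written is one-directional and phrased over traces, whereas $\worldloweq{\cdot}{\cdot}{\cdot}$ is the symmetric, loop-indexed, state-based relation of Definition~\ref{def:equivlance}. The renaming bijection itself is direction-agnostic, so the genuine work is twofold. First, I must confirm that the trace-level agreement genuinely refines to agreement at every loop boundary, and conversely that boundary agreement is all the symmetric relation needs; this is where the fact that the attack leaves \emph{plant} untouched and that $S_A \cap \eqSet = \emptyset$ matters, so that on the shared read-only variables and the high-integrity variables the two projections cannot diverge between boundaries. Second, I must ensure both trace inclusions transfer under renaming, which is precisely why Lemma~\ref{lemma:rename.exists.trace} is invoked for $\xi$ and for $\xi^{-1}$ separately, leaning on $\xi$ being a bijection on the \emph{entire} variable set rather than merely on $\BV{\programwoattack}$. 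Once these bookkeeping facts are in place, the equivalence follows mechanically from the trace bijection.
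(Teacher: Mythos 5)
Your overall route---use Lemma~\ref{lemma:rename.exists.trace} for $\xi$ and for $\xi^{-1}$ to get a bijection between $\traceSem{\attacked{\programwoattack}{S_A}}$ and $\traceSem{\renaming{\attacked{\programwoattack}{S_A}}{\xi}}$, then transport $\worldloweqfun{\cdot}{\cdot}{\varfun}$ into $\worldloweqfun{\cdot}{\cdot}{id_{\eqSet}}$ and read off loop-boundary agreement---is the natural reconstruction here, and it is considerably more explicit than the paper's own proof, which is a single line (``by induction on the variables in program $\programwoattack$,'' citing the self-composition literature). Your side observation that $\xi^{-1}$ is itself a legitimate renaming function because $\xi$ is a bijection on the whole variable set, not just $\BV{\programwoattack}$, is correct and worth having on record.

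However, there is a genuine gap in your forward direction, and you name the symptom without curing it. Definition~\ref{def:equivlance} is symmetric (each program's executions must be matched by the other's), whereas Definition~\ref{def:noninterference-two} as written quantifies one way only: every trace of $\programwoattack$ must be matched by a trace of $\renaming{\attacked{\programwoattack}{S_A}}{\xi}$. Your bijection transports exactly that one direction---and that direction is in fact nearly vacuous: since $\attacked{\programwoattack}{S_A}$ only replaces assignments $v:=\theta$ with $v:=*$, every trace of $\programwoattack$ is already a trace of $\attacked{\programwoattack}{S_A}$ (the nondeterministic assignment can pick the very value $\theta$ would have produced), so $\renaming{\tracesym^a}{\xi}$ is itself a $\varfun$-matching witness for any $\tracesym^a\in\traceSem{\programwoattack}$. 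The substantive half of $\worldloweq{\programwoattack}{\attacked{\programwoattack}{S_A}}{dom(\varfun)}$---that every execution of the \emph{attacked} program is matched by an execution of $\programwoattack$---is never produced by your argument; your ``mirror image'' paragraph instead proves the $\leftarrow$ implication of the biconditional, which is the easy consequence of the symmetric relation, not the missing half of the $\rightarrow$ implication. Closing this requires either reading Definition~\ref{def:noninterference-two} symmetrically (with the ``respectively'' convention of Definition~\ref{def:equivlance}) or deriving the attacked-to-original matching separately, which in the paper's actual use of this lemma comes from Lemma~\ref{lemma:composition-both-valid-single} (soundness of a single iteration of the composition) rather than from the renaming bijection; as a standalone proof of the stated biconditional, your proposal does not establish it. A smaller point in the same vein: your claim that plant-segment agreement yields agreement at every loop boundary silently assumes each iteration contributes a trace function with non-point domain; a zero-duration evolution is a point flow and is skipped by the \textsc{trace-discrete} rule, which can misalign iteration counts---a wrinkle inherited from the paper's Figure~\ref{fig:definition-Heq-trace}, but one a careful proof should at least flag.
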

\begin{proof} By induction on the variables in program $\program$\cite{barthe2004secure}.
\end{proof}



\begin{proof}[\textbf{Proof of Theorem~\ref{theorem:soundness-composition}}] By Definition~\ref{def:preservation-composition-formalized}, Lemma~\ref{lemma:projection-trace-existance}, \ref{lemma:composition-single}, \ref{lemma:composition-both-valid-single}, \ref{lemma:renaming}, and induction on the number of iterations, we get 
$\worldloweq{\program}{\attacked{\programwoattack}{S_A}}{\eqSet}$. Since $\Hsymbol \subseteq \eqSet$, $\worldloweq{\program}{\attacked{\programwoattack}{S_A}}{\Hsymbol}$ (Property~\ref{theorem:eq-subset}).
\end{proof}


\end{document}